\newcommand{\mbs}[1]{\bm{#1}}
\newcommand{\vect}[1]{{\lowercase{\mbs{#1}}}}
\newcommand{\mat}[1]{{\uppercase{\mbs{#1}}}}
\newcommand{\T}{{\scriptscriptstyle\mathsf{T}}}
\newcommand{\cond}{\,\vert\,}
\renewcommand{\Re}[1][]{\ifthenelse{\isempty{#1}}{\operatorname{Re}}{\operatorname{Re}\left(#1\right)}}
\renewcommand{\Im}[1][]{\ifthenelse{\isempty{#1}}{\operatorname{Im}}{\operatorname{Im}\left(#1\right)}}
\newcommand{\cv}{\vect{c}}
\newcommand{\rv}{\vect{r}}
\newcommand{\yv}{\vect{y}}
\newcommand{\muv}{\vect{\mu}}
\newcommand{\Sigmam}{\pmb{\Sigma}}
\newcommand{\Ac}{{\mathcal A}}
\newcommand{\Cc}{{\mathcal C}}
\newcommand{\Dc}{{\mathcal D}}
\newcommand{\Hc}{{\mathcal H}}
\newcommand{\Nc}{{\mathcal N}}
\newcommand{\Tc}{{\mathcal T}}
\newcommand{\Uc}{{\mathcal U}}
\newcommand{\Wc}{{\mathcal W}}
\newcommand{\CC}{\mathbb{C}}
\newcommand{\Id}{\mat{\mathrm{I}}}
\newcommand{\CN}[1][]{\ifthenelse{\isempty{#1}}{\mathcal{N}_{\mathbb{C}}}{\mathcal{N}_{\mathbb{C}}\left(#1\right)}}
\renewcommand{\P}[1][]{\ifthenelse{\isempty{#1}}{\mathbb{P}}{\mathbb{P}\left[{#1}\right]}}
\newcommand{\E}[1][]{\ifthenelse{\isempty{#1}}{\mathbb{E}}{\mathbb{E}\left[#1\right]}}
\newcommand{\I}[1][]{\ifthenelse{\isempty{#1}}{\mathbb{I}}{\mathbb{I}\left\{#1\right\}}}
\renewcommand{\det}[1][]{\ifthenelse{\isempty{#1}}{\mathrm{det}}{\mathrm{det}\left(#1\right)}}
\newcommand{\trace}[1][]{\ifthenelse{\isempty{#1}}{{\rm tr}}{\mathrm{tr}\left(#1\right)}}
\newcommand{\rank}[1][]{\ifthenelse{\isempty{#1}}{\mathrm{rank}}{\mathrm{rank}\left(#1\right)}}
\newcommand{\diag}[1][]{\ifthenelse{\isempty{#1}}{\mathrm{diag}}{\mathrm{diag}\left(#1\right)}}
\newcommand{\Cov}[1][]{\ifthenelse{\isempty{#1}}{\mathsf{Cov}}{\mathsf{Cov}\left(#1\right)}}
\newcommand{\defeq}{\triangleq}
\newtheorem{remark}{Remark}
\newtheorem{definition}{Definition}
\newtheorem{theorem}{Theorem}
\newtheorem{corollary}{Corollary}
\newtheorem{lemma}{Lemma}
\newcounter{enumi_saved}
\pgfplotsset{minor grid style={dotted,gray!40}}
\pgfplotsset{major grid style={dashed,gray!40}}
\renewcommand{\rv}[1]{{\mathsf{#1}}}
\newcommand{\rvVec}[1]{{\bm{\mathsf{#1}}}}
\newcommand{\ind}[1]{{\mathbbm{1}{\left\{#1\right\}}}}
\renewcommand{\defeq}{=}
\renewcommand{\Id}{\mat{I}}
\newcommand{\EbNo}{{E_{\rm b}/N_0}}
\newcommand{\md}{\frac{|\revise{\Wc_{\rm MD}}|}{|\widetilde{\Wc}|}}
\newcommand{\fa}{\frac{|\revise{\Wc_{\rm FA}}|}{|\widehat{\Wc}|} }
\newacronym{MAC}{MAC}{multiple access channel}
\newacronym{RAC}{MAC}{random-access channel}
\newacronym{UMRA}{UMRA}{unsourced massive random-access}
\newacronym{SIMO}{SIMO}{single-input multiple-output}
\newacronym{SISO}{SISO}{single-input single-output}
\newacronym{iid}{i.i.d.}{independent and identically distributed}
\newacronym{ML}{ML}{maximum likelihood}
\newacronym{PEP}{PEP}{pair-wise error probability}
\newacronym{LLR}{LLR}{log-likelihood ratio}
\newacronym{SNR}{SNR}{signal-to-noise ratio}
\newacronym{RCB}{RCB}{random-coding bound}
\newacronym{MD}{MD}{misdetection}
\newacronym{FA}{FA}{false alarm}
\newacronym{PMF}{PMF}{probability mass function}
\newacronym{wlog}{w.l.o.g.}{without loss of generality}
\newacronym{wrt}{w.r.t.}{with respect to}
\newacronym{SA}{SA}{slotted {ALOHA}}
\newacronym{DoF}{DoF}{degrees of freedom}
\newacronym{rDoF}{rDoF}{real degrees of freedom}
\newacronym{ROC}{ROC}{receiver operating characteristic}
\newacronym{IoT}{IoT}{Internet of Things}
\newacronym{TIN}{TIN}{treating-interference-as-noise}
\newacronym{CDF}{CDF}{cummulative distribution function}
\newacronym{PDF}{PDF}{probability density function}
\newacronym{MMSE}{MMSE}{minimum mean squared error}
\newacronym{some name}{some name}{{\color{red} some name}}
\title{Unsourced Multiple Access \\ With Random User Activity} 
\author{\IEEEauthorblockN{Khac-Hoang Ngo, \emph{Member, IEEE}, Alejandro Lancho, \emph{Member, IEEE}, Giuseppe Durisi, \emph{Senior Member, IEEE}, \\ and Alexandre Graell i Amat, \emph{Senior Member, IEEE}} 
	\thanks{Khac-Hoang Ngo, Giuseppe Durisi, and Alexandre Graell i Amat are with the Department of Electrical Engineering, Chalmers University of Technology, 41296 Gothenburg, Sweden~(e-mails: {\tt \{ngok, durisi, alexandre.graell\}@chalmers.se}). Alejandro Lancho is with the Department of Electrical Engineering and Computer Science, Massachusetts Institute of Technology, Cambridge, MA 02139, USA 
		(e-mail: {\tt lancho@mit.edu}).}
	\thanks{This paper has been presented in part at the IEEE  International Symposium on Information Theory (ISIT), Melbourne, Victoria, Australia, July 2021~\cite{Ngo2021ISITmassive}.}
}
\newcommand{\revise}[1]{#1} 
\newcommand{\revisee}[1]{#1} 
\newcommand{\reviseee}[1]{#1} 
\begin{document}
	
	\maketitle
	\date{\today}
	\begin{abstract}
		To account for the massive uncoordinated random access scenario, which is relevant for the Internet of Things, Polyanskiy (2017) proposed a novel formulation of the multiple-access problem, commonly referred to as unsourced multiple access, where all users employ a common codebook and the receiver decodes up to a permutation of the messages. In this paper, we extend this seminal work to the case where the number of active users is random and unknown {\em a priori}. We define a random-access code accounting for both misdetection~(MD) and false alarm~(FA), and derive a random-coding achievability bound for the Gaussian multiple access channel. Our bound captures the fundamental trade-off between MD and FA probabilities. 
		It suggests that 
		the lack of knowledge of the number of active users entails a small penalty in energy efficiency when the target MD and FA probabilities are high. However, as the target MD and FA probabilities decrease, the energy efficiency penalty becomes more significant. For example, in a typical IoT scenario with framelength $19200$ complex channel uses and $25$--$300$ active users in average, the required  energy per bit to achieve both MD and FA probabilities below $10^{-1}$, predicted by our bound, is only $0.5$--$0.7$ dB higher than that predicted by the bound in Polyanskiy (2017) for a known number of active users. This gap increases to $3$--$4$~dB when the target MD probability and/or FA probability is below $10^{-3}$. Taking both MD and FA into account, we use our bound to benchmark the energy efficiency of slotted-ALOHA with multi-packet reception, of a decoder that simply treats interference as noise, and of some recently proposed unsourced multiple access schemes. Numerical results suggest that, when the target MD and FA probabilities are high, it is effective to estimate the number of active users, then treat this estimate as the true value, and use a coding scheme that performs well for the case of known number of active users. However, this approach becomes energy inefficient when the requirements on MD and FA probabilities are stringent. 
	\end{abstract}
	
	\begin{IEEEkeywords}
		Multiple-access channels, unsourced multiple access, random-coding bound, misdetection, false alarm
	\end{IEEEkeywords}

	\section{Introduction} \label{sec:intro}
	The \gls{IoT} enables a variety of applications, such as autonomous driving, smart homes, and smart cities, by providing wireless access to a massive number of devices. A significant fraction of IoT devices is battery limited and transmits short packets in a sporadic and uncoordinated manner~\cite{Chen2020_massiveAccess,Wu2020_massiveAccess}. This calls for a new theoretical framework that helps to understand the fundamental limits on the energy efficiency achievable in massive uncoordinated access and provide guidelines for system design. 
	To this end, Polyanskiy~\cite{PolyanskiyISIT2017massive_random_access} proposed a novel formulation for the multiple-access problem, commonly referred to as unsourced multiple access that relies on three key assumptions: i) all users employ a common codebook and the decoder only aims to return a list of messages; ii) the error event is defined per user 
	and the error probability is averaged over the users; iii) each user sends a fixed amount of information bits within a finite-length frame. This formulation provides a unified framework within which traditional as well as modern random access protocols \cite{Berioli2016NOW} give achievability results. In \cite{PolyanskiyISIT2017massive_random_access}, an upper bound on the minimum energy per bit achievable on the Gaussian \gls{MAC} was derived. Modern random access schemes exhibit a large gap to this bound, and devising coding schemes approaching the bound is an active area of research~\cite{Ordentlich2017low_complexity_random_access,Vem2019,Fengler2019sparcs,Amalladinne2020,Amalladinne2021,Pradhan2020,Han2021}. The framework proposed in~\cite{PolyanskiyISIT2017massive_random_access} has recently been extended to the quasi-static fading channel~\cite{Kowshik2020}, multiple-antenna channel~\cite{Fengler2019nonBayesian,Shyianov2021}, and a setup with common alarm messages~\cite{Stern2019}.
	
	\revise{The achievability bound in~\cite{PolyanskiyISIT2017massive_random_access} is established for the scenario in which the number of active users is fixed and known to the receiver. This assumption has also been considered in most extensions of~\cite{PolyanskiyISIT2017massive_random_access}.}
	In practice, however, since \gls{IoT} devices access the channel at random times and in a grant-free manner, the number of active users varies over time, and hence, it is typically unknown to the receiver. Therefore, the bound in \cite{PolyanskiyISIT2017massive_random_access} may provide an overoptimistic benchmark on the performance of random-access schemes operating in the practically relevant scenario in which the number of active users is not known to the receiver. In this scenario, the decoder needs to determine the size of the list of transmitted messages. Choosing a list size smaller than the number of active users will result in \glspl{MD}, i.e., transmitted messages that are not included in the decoded list, whereas choosing it larger than the number of active users will result in \glspl{FA}, i.e., decoded messages that have not been transmitted. Furthermore, additional \glspl{MD} and \glspl{FA} may occur in the decoding process. 
	There is a trade-off between \gls{MD} and \gls{FA} probabilities: a decoder that always outputs the whole codebook will never misdetect, but has \gls{FA} probability close to one; similarly, a decoder that always outputs an empty set will never raise an \gls{FA} but will always misdetect. 
	
	Characterizing the \gls{MD}--\gls{FA} trade-off is a fundamental engineering challenge. \revisee{In the \gls{IoT}, an \gls{MD} can cause the system to be unaware of an important event reported by the devices,  whereas an \gls{FA} can trigger an unnecessary reaction that interrupts the system operation.  Depending on the application, the MD probability can be more critical than the FA probability or vice versa. For example, in security inspection, where the cost of an MD can be extremely high (such as not detecting a threat) while the cost of an FA is relatively low (simply a further inspection), the MD probability should be favored. In healthcare, alarm fatigue, i.e., desensitization of clinicians due to high exposure to alarms, has been recognized as a serious issue. 
	If MDs committed by health monitors do not cause vital consequences, the FA probability should be kept low to avoid alarm fatigue~\cite{sendelbach2013alarm,ruskin2015alarm}.}
	
	\revise{The \gls{MD}--\gls{FA} trade-off} was not addressed in \cite{PolyanskiyISIT2017massive_random_access}. \revise{Achievability bounds based on variable-length codes and feedback have been proposed for the general random-access channel~\cite{Effros2018ISIT}, and in particular for the Gaussian \gls{MAC}~\cite{Yavas2021Gaussian}, with unknown number of active users}. However, the authors considered the joint-user error event instead of the per-user error event, and thus, \gls{MD} and \gls{FA} were not explicitly considered. 
	\revise{The per-user error probability achieved with this random-coding scheme was considered in~\cite[Sec.~V-E]{Yavas2021}, but the \gls{MD}--\gls{FA} trade-off was not addressed.} To summarize, a random-coding bound accounting for both \gls{MD} and \gls{FA}, which can serve as a benchmark for unsourced multiple access with random user activity, is missing.

	
	Most of the practical coding schemes that have been proposed so far for common-codebook massive random access require knowledge of the number of active users. 
	Modern variations of the ALOHA protocol, such as irregular repetition slotted ALOHA~(IRSA)~\cite{Liva2011IRSA}, 
	can also operate when the number of active users is unknown. However, most of the results available for these schemes pertain to the packet loss rate, which accounts only for \gls{MD}. The successive interference cancellation coding scheme proposed in \cite{Vem2019} is also analyzed in terms of \gls{MD} only. Note that minimizing the \gls{MD} probability alone can entail a high \gls{FA} probability. In~\cite{Decurninge2020}, a tensor-based communication scheme was proposed, and both \gls{MD} and \gls{FA} probabilities were reported in the performance evaluation. Another scheme for which both \gls{MD} and \gls{FA} probabilities have been reported was recently proposed in~\cite{fengler2020pilot} for the quasi-static fading \gls{MAC} and for the case in which the receiver has a large number of antennas. However, in~\cite{Decurninge2020} and~\cite{fengler2020pilot}, the \gls{MD} and \gls{FA} probabilities are not reported separately but rather through their sum.
	
	
	\subsubsection*{Contributions}
	In this work, we extend Polyanskiy's bound to the case where the number of active users is {\em random} and {\em unknown}. Our contributions are summarized as follows. We first extend the definition of a random-access code provided in~\cite{PolyanskiyISIT2017massive_random_access} to account for both \gls{MD} and \gls{FA} probabilities. We then derive a random-coding achievability bound for the Gaussian \gls{MAC}. Unlike~\cite{PolyanskiyISIT2017massive_random_access}, we do not assume that the receiver knows the number of active users. To circumvent this issue, we let the decoder seek the best list size within a predetermined interval around an estimated value of the number of active users. Our decoding metric \revise{to determine the list of transmitted messages}, which is based on nearest neighbor decoding, is similar to the one used in \cite{Stern2019}. However, different from \cite{Stern2019}, we limit the decoded list size to belong to a finite-size set to avoid noise overfitting, especially in the low energy-per-bit regime. We use our random-coding bound to characterize both \gls{MD} and \gls{FA} in slotted ALOHA with multi-packet reception (SA-MPR). Finally, we derive a random-coding bound for a scheme that simply treats interference as noise, referred to as \gls{TIN}, in which the number of active users is unknown. 
	
	To gain engineering insights on the role of the knowledge of the number of active users, we compare our bound with the bound in~\cite[Th.~1]{PolyanskiyISIT2017massive_random_access}. We also use our bound to benchmark the energy efficiency of SA-MPR and \gls{TIN}, which do not require \emph{a priori} knowledge of the number of active users. Finally, we consider the scheme based on sparse regression codes (SPARCs) proposed in~\cite{Fengler2019sparcs} and its enhancement in~\cite{Amalladinne2021}, which were both derived for the case of known number of active users. We adapt these schemes to the case of unknown number of active users by performing an energy-based estimation of the number of active users and by letting the decoder treat the estimate as the true value.
	Numerical results pertaining to a scenario with $\revise{300}$ active users in average \revise{and framelength $19200$ complex channel uses} show that to achieve both \gls{MD} and \gls{FA} probabilities below $10^{-1}$, the required energy per bit predicted by our achievability bound is only $0.65$~dB higher than that predicted by the bound for a known number of active users~\cite[Th.~1]{PolyanskiyISIT2017massive_random_access}. For the same setting, the required energy per bit predicted by our bound is $9$~dB, $4.1$~dB, and $3.6$~dB lower than that of the SA-MPR bound, SPARC~\cite{Fengler2019sparcs}, and enhanced SPARC~\cite{Amalladinne2021}, respectively. 
	The gap between the performance of enhanced SPARC with known number of active users and with unknown number of active users is small. This suggests that, for mild requirements on $P_{\rm MD}$ and $P_{\rm FA}$, it is sufficient to adapt existing coding schemes that perform well in the case of known number of active users by simply adding an active-user estimation step and then treating the estimated number of active users as the true value.
	On the contrary, when we consider \gls{MD} probability and/or \gls{FA} probability below $10^{-3}$, the gap between our bound and the bound for a known number of active users in~\cite{PolyanskiyISIT2017massive_random_access} is much larger: around $3$--$4$~dB. For these stringent requirements, it turns out that it is energy inefficient to simply treat the estimated number of active users as the true value due to the errors that occur in the estimation step. 
	
	To summarize, our results suggest that for the Gaussian MAC, the lack of knowledge of the number of active users entails a small loss in terms of energy efficiency if the target \gls{MD} and \gls{FA} probabilities are high, as typically considered in the literature. In this case, it is effective to first estimate the number of active users, then treat this estimate as the true value, and use a coding scheme that performs well for the case where the number of active users is known. 
	However, for more stringent targets, the loss due to the lack of knowledge of the number of active users \revise{might} be large. \revise{It remains unclear whether this large gap is fundamental or pertains to the considered random-coding scheme only. For stringent requirements,} the approach to simply using the estimate of the number of active users to set the decoded list size results in poor energy efficiency. This calls for more sophisticated methods to handle effectively the uncertainty about the number of active users.


	\subsubsection*{Paper Outline}
	The remainder of the paper is organized as follows. In Section~\ref{sec:channel}, we present the channel model and define a random-access code. In Section~\ref{sec:RCU}, we propose a random-coding bound for the Gaussian \gls{MAC}, apply it to SA-MPR, and derive a bound for \gls{TIN} as well. 
	In Section~\ref{sec:numerical}, we present numerical results and discussions. \revisee{We provide some discussions in Section~\ref{sec:discussion} and conclude the paper in Section~\ref{sec:conclusions}.} The proofs are given in the appendices.
	
	\subsubsection*{Notation}
	Lowercase boldface letters denote vectors. Random quantities are denoted with non-italic letters with sans-serif font, e.g., a scalar $\rv{x}$ and a vector $\rvVec{v}$. 
	Deterministic quantities are denoted 
	with italic letters, e.g., a scalar $x$ and a vector $\bm{v}$. The $n\times n$ identity matrix is denoted by $\Id_n$. 
	The Euclidean norm is denoted by $\|\cdot\|$. 
	Calligraphic uppercase letters, e.g., $\Ac$, denote sets. We use $\mathfrak{P}(\Ac)$ to denote the set of all subsets of $\Ac$; 
	$[m:n] \defeq \{m,m+1,\dots,n\}$ if $m \le n$ and $[m:n] \defeq \emptyset$ if $m>n$; $[n] \defeq [1:n]$; $x^+ \defeq \max\{x,0\}$; $\ind{\cdot}$ is the indicator function. The set of natural and complex numbers are denoted by $\mathbb{N}$ and $\CC$, respectively. We denote the Gamma function by $\Gamma(x) \defeq \int_{0}^{\infty}z^{x-1}e^{-z}dz$, and the lower and upper incomplete Gamma functions by $\gamma(x,y) \defeq \int_{0}^{y}z^{x-1}e^{-z}dz$ and $\Gamma(x,y) \defeq \int_{y}^{\infty}z^{x-1}e^{-z}dz$, respectively. Finally, 
	$\Cc\Nc(\muv,\Sigmam)$ denotes the distribution of a complex proper
	Gaussian random vector with mean $\muv$ and covariance matrix~$\Sigmam$, and ${\rm Pois}(\lambda)$ the Poisson distribution with mean $\lambda.$
	
	\subsubsection*{Reproducible Research} 
	The Matlab code used to evaluate our random-coding bound is available at: \url{https://github.com/khachoang1412/UMA_random_user_activity}.
	
	\section{Random-Access Channel} \label{sec:channel}
	We consider a \gls{MAC} in which a random number $\rv{K}_{\rm a}$ of users transmit their messages to a receiver over $n$ uses of a stationary memoryless additive white Gaussian noise channel. Here, $\rv{K}_{\rm a}$ follows a distribution with \gls{PMF} $P_{\rv{K}_{\rm a}}$. 
	Let $\rvVec{x}_i \in \CC^n$ be the signal transmitted by user~$i$ over $n$ channel uses. The corresponding channel output is given by
	\begin{align} \label{eq:gaussian_MAC}
		\rvVec{y} = \sum_{i=1}^{\rv{K}_{\rm a}}\rvVec{x}_i + \rvVec{z}, 
	\end{align}
	where $\rvVec{z} \sim \Cc\Nc(\mathbf{0},\Id_n)$ is the Gaussian noise, which we assume being
	independent of $\{\rvVec{x}_i\}_{i=1}^{\rv{K}_{\rm a}}$.
	We also assume that the transmitted signals satisfy the power constraint 
	$
	\|\rvVec{x}_i\|^2 \le nP, ~ \forall i \in [\rv{K}_{\rm a}],
	$
	\revise{almost surely.}
	We further assume that the receiver does not know $\rv{K}_{\rm a}$ \emph{a priori}, but can choose to estimate it. 
	As in~\cite{PolyanskiyISIT2017massive_random_access}, our model differs from the classical \gls{MAC} in that the total number of users is not limited, all users employ the same codebook, and the receiver decodes up to a permutation of the messages. 
	However, as opposed to~\cite{PolyanskiyISIT2017massive_random_access}, where the number of active users is assumed to be fixed and known, we assume that $\rv{K}_{\rm a}$ is random and unknown. 
	We therefore need to account for both \gls{MD} and \gls{FA} probabilities.
	We next rigorously define these two quantities, as well as the notion of a random-access code.\footnote{Our definition of a random-access code can be extended straightforwardly to more general \glspl{MAC} with permutation-invariant channel law, as considered in~\cite[Def.~1]{PolyanskiyISIT2017massive_random_access}.}
	
	\begin{definition}[Random-access code for the Gaussian \gls{MAC}] \label{def:code}
		Consider the $\rv{K}_{\rm a}$-user Gaussian \gls{MAC} with $\rv{K}_{\rm a} \sim P_{\rv{K}_{\rm a}}$. 
		An $(M,n,\epsilon_{\rm MD},\epsilon_{\rm FA})$ random-access code for this channel, where $M$ is the size of the codebook, $n$ is the codeword length, and $\epsilon_{\rm MD},\epsilon_{\rm FA} \in (0,1)$, consists of:
		\begin{itemize}
			\item A random variable $\rv{U}$ defined on a set $\Uc$ 
			that is revealed to both the transmitters and the receiver before the start of the transmission. 
			
			\item An encoding function $f\colon \Uc \times [M] \to \CC^n$ that produces the transmitted codeword $\rvVec{x}_i = f(\rv{U},\rv{w}_i)$, satisfying the power constraint, of user $i$ for a given message $\rv{w}_i$ uniformly distributed over~$[M]$.
			
			\item A decoding function $g\colon \Uc \times \CC^n \to \mathfrak{P}([M])$ that provides an estimate $\widehat{\Wc} = \{\widehat{\rv{w}}_1,\dots,\widehat{\rv{w}}_{|\widehat{\Wc}|}\} = g(\rv{U},\rvVec{y})$ of the list of transmitted messages. 
		\end{itemize}
		Let $\widetilde{\Wc} = \{\widetilde{\rv{w}}_1,\dots,\widetilde{\rv{w}}_{|\widetilde{\Wc}|}\}$ denote the set of distinct elements of $\Wc = \{{\rv{w}}_1,\dots,{\rv{w}}_{\rv{K}_{\rm a}}\}$. We assume that the decoding function satisfies the following constraints on the \gls{MD} and \gls{FA} probabilities:
		\begin{align}
			P_{\rm MD} &\defeq \revise{\E\Bigg[{\frac{1}{|\widetilde{\Wc}|}\sum_{i=1}^{|\widetilde{\Wc}|} \P[\widetilde{\rv{w}}_i \notin \widehat{\Wc}]}\Bigg]} \le \epsilon_{\rm MD} \label{eq:def_pMD}\\
			P_{\rm FA} &\defeq \revise{\E\Bigg[{\frac{1}{|\widehat{\Wc}|} \sum_{i=1}^{|\widehat{\Wc}|} \P[\widehat{\rv{w}}_i \notin \widetilde{\Wc}]}\Bigg]} \le \epsilon_{\rm FA}. \label{eq:def_pFA}
		\end{align}
		The expectations in~\eqref{eq:def_pMD} and~\eqref{eq:def_pFA} are with respect to the \revise{random user activity, and we use the convention $0/0 = 0$ to circumvent the case $|\widetilde{\Wc}| = 0$ or $|\widehat{\Wc}| = 0$}. 
	\end{definition} 	
		
		\begin{remark}
			According to Definition~\ref{def:code}, the receiver aims to produce the list of \emph{distinct} transmitted messages\textemdash there are neither \gls{MD} nor \gls{FA} if $\widehat{\Wc} = \widetilde{\Wc}$. Therefore, 
			the event that two users transmit the same message does not result in an error if the message is included in the list of decoded messages.
		\end{remark}
		
		In the definition of random-access code proposed in~\cite[Def.~1]{PolyanskiyISIT2017massive_random_access}, the decoder outputs a list of messages of size equal to the number of active users, which is assumed to be known. In such a setup, an \gls{MD} implies an \gls{FA}, and vice versa. Hence, \revise{\gls{MD} and \gls{FA} events either occur simultaneously or do not occur simultaneously}. In our setup, the number of decoded messages $|\widehat{\Wc}|$ can be different from the number of distinct transmitted messages $|\widetilde{\Wc}|$. This motivates the definition of \gls{MD} and \gls{FA} probabilities provided in~\eqref{eq:def_pMD} and~\eqref{eq:def_pFA}, respectively. 

	\begin{remark} \label{remark:U}
		In the next section, we shall use a random-coding argument to obtain achievability bounds. We will construct a codebook ensemble for which~\eqref{eq:def_pMD} and~\eqref{eq:def_pFA} holds \revise{on} average. \revise{Specifically, the \gls{MD} and \gls{FA} probabilities, averaged over the codebook ensemble, are upper-bounded by $\epsilon_{\rm MD}$ and $\epsilon_{\rm FA}$, respectively.} Unfortunately, this does not imply that there exists a single code in this ensemble that achieves both~\eqref{eq:def_pMD} and~\eqref{eq:def_pFA}. \revise{In other words, the fact that a random code satisfies both~\eqref{eq:def_pMD} and~\eqref{eq:def_pFA} on average does not imply the existence of a deterministic code that satisfies these constraints.} The introduction of the random variable $\rv{U}$ in Definition~\ref{def:code} allows us to circumvent this issue by enabling randomized coding strategies. 
		Specifically, proceeding as in~\cite[Th.~19]{Polyanskiy2011feedback}, one can show that there exists a randomized coding strategy that achieves both \eqref{eq:def_pMD} and~\eqref{eq:def_pFA} and involves time-sharing among at most three deterministic codes (i.e., $|\Uc| \le 3$) in this ensemble. \revise{In fact, following the improvement in the size of the common randomness reported in~\cite[Th.~8]{Yavas2021}, one can show that $|\Uc| \le 2$ suffices.}
	\end{remark}
	
	\section{Random-Coding Bound} \label{sec:RCU}
	
	\subsection{Proposed Random-Coding Bound for $\rv{K}_{\rm a}$ Random and Unknown} \label{sec:random_coding_scheme}
	We first review the random-coding bound in~\cite[Th.~1]{PolyanskiyISIT2017massive_random_access}. 
	Let $\Wc = \{\rv{w}_1, \dots, \rv{w}_{K_{\rm a}}\}$ be the set of transmitted messages. Each of the active users picks a codeword $\cv_{\rv{w}_i}$, $i\in [\rv{K}_{\rm a}]$, from a common codebook containing $M$ codewords $\cv_1,\dots,\cv_M$ drawn independently from the distribution $\Cc\Nc(\mathbf{0},P'\Id_n)$ for a fixed $P' < P$.  To convey message $\rv{w}_i$, the corresponding active user transmits $\cv_{\rv{w}_i}$ provided that $\|\cv_{\rv{w}_i}\|^2 \le nP$. Otherwise, it transmits the all-zero codeword. That is, $\rvVec{x}_i = \cv_{\rv{w}_i} \ind{\|\cv_{\rv{w}_i}\|^2 \le P}$.
	The receiver employs a minimum distance decoder, in which the list of decoded messages is obtained as $\widehat{\Wc} = \arg\min_{{\Wc'} \subset [M], |{\Wc'}| = K_{\rm a}} \|c({\Wc'}) - \rvVec{y}\|^2$ where $c(\Wc') \defeq \sum_{i\in \Wc'} \cv_{i}$. The error analysis involves manipulations of unions of the pairwise error events via a change of measure and the application of the Chernoff bound combined with Gallager's $\rho$-trick~\cite[p.~136]{Gallager1968information}. 
	An alternative bound is also obtained by writing the pairwise error event as an inequality involving information densities, and by applying the tail bound on the information density given in~\cite[Cor.~17.1]{Polyanskiy2019lecture}.
	
	\begin{remark} \label{remark:iid_vs_spherical}
		\revisee{Using independent and identically distributed~(i.i.d.) Gaussian codewords allows for a tractable analysis that leads to explicit bounds for fixed $n$ and $P$. An alternative approach consists in using codebooks with codewords uniformly distributed on the power sphere, i.e., the so-called spherical/shell codebooks. For the conventional Gaussian MAC~\cite{Yavas2021Gaussian}, spherical codebooks were shown to achieve a better second-order (dispersion) term in the asymptotic expansion of the achievable rate region as $n \to \infty$ than i.i.d.~Gaussian codebooks. For the UMA setting, however, it is unclear if spherical codebooks achieve a better second-order term. Furthermore, with spherical codebooks, it appears challenging to obtain explicit closed-form bounds for fixed $n$ and $P$.}
	\end{remark}
	
	In the following, we derive a random-coding bound for the case in which $\rv{K}_{\rm a}$ is random and unknown to the receiver. Specifically, we consider a random-coding scheme with the same encoder as in \cite{PolyanskiyISIT2017massive_random_access}.\footnote{\revise{Strictly speaking, our encoding function takes not only the message as input (as in~\cite{PolyanskiyISIT2017massive_random_access}) but also the common randomness variable $\rv{U}$. However, for brevity, we omit $\rv{U}$ in the encoding and decoding functions. For details on how this common randomness is incorporated in the encoding and decoding functions, we refer the readers to~\cite{Polyanskiy2011feedback} and~\cite{Yavas2021}.}} \revise{The new challenge in our setting with respect to~\cite{PolyanskiyISIT2017massive_random_access} is that the receiver does not know $\rv{K}_{\rm a}$, and thus cannot use this number to set the decoded list size. To overcome this challenge, we let the receiver} estimate $\rv{K}_{\rm a}$ from $\rvVec{y}$, and then decide the best list size within an interval around the initial estimate of $\rv{K}_{\rm a}$. Specifically, given the channel output $\yv$, the receiver estimates $\rv{K}_{\rm a}$ as
	\begin{align} \label{eq:est_Ka}
		K_{\rm a}' = \arg\max_{K \in [K_\ell:K_u]} m(\yv,K),
	\end{align}
	where $m(\yv,K)$ is a suitably chosen metric, and $K_\ell$ and $K_u$ are suitably chosen lower and upper limits on $K_{\rm a}'$, respectively. The \revise{metric $m(\yv,K)$ and the} parameters $K_\ell$ and $K_u$ can, for example, be chosen based on prior knowledge (if available) on the distribution of $\rv{K}_{\rm a}$. 
	Then, given $K_{\rm a}'$, the receiver produces a list of decoded messages as
	\begin{equation} \label{eq:decoder_Ka'}
		\widehat{\Wc} = \arg\min_{{\Wc}' \subset [M], \underline{K_{\rm a}'} \le |{\Wc}'| \le \overline{K_{\rm a}'}} \|c({\Wc}') - \rvVec{y}\|^2,
	\end{equation}   
	where $\underline{K_{\rm a}'} =\max\{K_\ell,K_{\rm a}'-r\}$ and $\overline{K_{\rm a}'}\defeq \min\{K_u,K_{\rm a}'+r\}$, with $r$ being a nonnegative integer. 
	We refer to $r$ as the {\em decoding radius}. \revise{Note that if $r = 0$, the receiver outputs $K_{\rm a}'$ codewords, i.e., it treats the estimate $K_{\rm a}'$ as the true value.}
	An error analysis of this random-coding scheme conducted along similar lines as in \cite{PolyanskiyISIT2017massive_random_access} leads to the following result.
	
	\begin{theorem}[Random-coding bound, $\rv{K}_{\rm a}$ random and unknown]  \label{thm:RCU_unknownKa}
		Fix $P' < P$, $r$, $K_\ell$, and $K_{u}$ ($K_\ell \le K_{u}$). For the $\rv{K}_{\rm a}$-user Gaussian \gls{MAC} with $\rv{K}_{\rm a} \sim P_{\rv{K}_{\rm a}}$, there exists an $(M,n,\epsilon_{\rm MD},\epsilon_{\rm FA})$ random-access code satisfying the power constraint $P$ for which
		\begin{align}
			\epsilon_{\rm MD} &= \sum_{K_{\rm a} =\max\{K_\ell,1\}}^{K_{u}} \Bigg(P_{\rv{K}_{\rm a}}(K_{\rm a}) \sum_{K_{\rm a}' = K_\ell}^{K_{u}} \sum_{t\in \Tc}\frac{t+(K_{\rm a}-\overline{K_{\rm a}'})^+}{K_{\rm a}} \notag \\ 		&\qquad \cdot
			\min\{p_t,q_t, \xi(K_{\rm a},K_{\rm a}')\} \Bigg) + \tilde{p}, \label{eq:eps_MD}\\
			\epsilon_{\rm FA} &= \sum_{K_{\rm a} =K_\ell}^{K_{u}} \Bigg(P_{\rv{K}_{\rm a}}(K_{\rm a}) \sum_{K_{\rm a}' = K_\ell}^{K_{u}} \sum_{t\in \Tc} \sum_{t' \in \Tc_t}   \notag \\ 		&\qquad 
			\frac{t'+(\underline{K_{\rm a}'}-K_{\rm a})^+}{K_{\rm a} - t - {(K_{\rm a} - \overline{K_{\rm a}'})}^+ + t' + {(\underline{K_{\rm a}'}-K_{\rm a})}^+} \notag \\ 
			&\qquad \cdot \min\{p_{t,t'}, q_{t,t'}, \xi(K_{\rm a},K_{\rm a}')\} \Bigg) + \tilde{p}, \label{eq:eps_FA}
		\end{align}	
		where 
		\begin{align}
			\tilde{p} &= 2 - \sum_{K_{\rm a} = K_\ell}^{K_{u}}P_{\rv{K}_{\rm a}}(K_{\rm a}) - \E_{\rv{K}_{\rm a}}\left[\frac{M!}{M^{\rv{K}_{\rm a}}(M\!-\!\rv{K}_{\rm a})!} \right] \notag \\ 		&\quad 
			+ \E[\rv{K}_{\rm a}]  \frac{\Gamma(n,nP/P')}{\Gamma(n)}, \label{eq:p0}\\
			p_t &= \sum_{t'\in \overline{\Tc}_t} p_{t,t'}, \label{eq:pt}\\
			p_{t,t'} &= e^{-n E(t,t')}, \label{eq:ptt} \\
			E(t,t') &= \max_{\rho,\rho_1 \in [0,1]} -\rho\rho_1 t' R_1 - \rho_1 R_2 + E_0(\rho,\rho_1), \label{eq:Ett} \\
			 E_0(\rho,\rho_1) &= \max_{\lambda} \rho_1 a + \ln(1-\rho_1 P_2 b), \label{eq:E0}\\
			a &= \rho \ln(1+ P' t' \lambda) + \ln(1+ P't \mu), \label{eq:a}\\ 
			b &= \rho\lambda -\frac{\mu}{1+ P't\mu}, \label{eq:b} \\ 
			\mu &= \frac{\rho \lambda}{1+P't'\lambda}, \\
			P_2 &= 1+ \big((K_{\rm a} - \overline{K_{\rm a}'})^+ + (\underline{K_{\rm a}'} - K_{\rm a})^+\big)P', \label{eq:P2}\\
			R_1 &=  \frac{1}{nt'} \ln\binom{M - \max\{K_{\rm a},\underline{K_{\rm a}'}\}}{t'}, \label{eq:R1}
			\\
			R_2 &=  \frac{1}{n} \ln \binom{\min\{K_{\rm a}, \overline{K_{\rm a}'}\}}{t}, \\
			q_t &= \inf_{\gamma} \bigg(\P[\rv{I}_{t} \le \gamma] \notag \\ 
                &\quad + \sum_{t'\in \overline{\Tc}_t}
			\exp(n(t'R_1 + R_2) - \gamma)\bigg), \label{eq:qt}\\
			q_{t,t'} &= \inf_{\gamma} \Big(\P[\rv{I}_{t} \le \gamma] + \exp(n(t'R_1 + R_2) - \gamma)\Big), \label{eq:qtt} \\
			\Tc &= [0:\min\{\overline{K_{\rm a}'},K_{\rm a},M-\underline{K_{\rm a}'} - (K_{\rm a} - \overline{K_{\rm a}'})^+\}], \label{eq:T} \\
			\Tc_t &= \big[\big({(K_{\rm a} - \overline{K_{\rm a}'})}^+ \!- {(\underline{K_{\rm a}'} - K_{\rm a})}^+ \!+ \max\{\underline{K_{\rm a}'},1\} \notag \\ 		&\quad \quad 
			- K_{\rm a} + t\big)^+ : u_t\big],	\label{eq:Tt}	\\
			\overline{\Tc}_t &= \big[\big({(K_{\rm a} - \overline{K_{\rm a}'})}^+ - {(K_{\rm a}-\underline{K_{\rm a}'})}^+ + t\big)^+ : u_t \big], \label{eq:Tbart} \\
			u_t &= \min\big\{{(\overline{K_{\rm a}'} - K_{\rm a})}^+ - {(\underline{K_{\rm a}'}-K_{\rm a})}^+ + t,  \notag \\
                &\quad \quad \overline{K_{\rm a}'} - {(\underline{K_{\rm a}'}-K_{\rm a})}^+, M-\max\{\underline{K_{\rm a}'},K_{\rm a}\}\big\}, \\
			\xi(K_{\rm a},K_{\rm a}') &= 
			\min_{K \in [K_\ell:K_u]\colon K \ne K_{\rm a}'} \P[m\left(\rvVec{y}_0,K_{\rm a}' \right) \!>\! m\left(\rvVec{y}_0,K\right)]. \label{eq:xi}
		\end{align}
		In~\eqref{eq:xi}, $\rvVec{y}_0 \sim \Cc\Nc(\mathbf{0},(1+K_{\rm a}P')\Id_n)$. The random variable $\rv{I}_t$ in~\eqref{eq:qt} and \eqref{eq:qtt} is defined as
		\begin{align} \label{eq:def_It}
			\rv{I}_t &\defeq \min_{\revise{\Wc_{\rm aMD}} \subset [(K_{\rm a} - \overline{K_{\rm a}'})^+ + 1:K_{\rm a}] \atop |\revise{\Wc_{\rm aMD}}| = t} \notag \\
            &\qquad\imath_t(c(\revise{\Wc_{\rm iFA}}) + c(\revise{\Wc_{\rm aMD}});\rvVec{y} \cond c([K_{\rm a}] \setminus \revise{\Wc_{\rm MD}})),
		\end{align} 
		where $\revise{\Wc_{\rm iFA}} = [K_{\rm a} + 1: \underline{K_{\rm a}'}]$, $\revise{\Wc_{\rm MD}} = [(K_{\rm a} - \overline{K_{\rm a}'})^+] \cup \revise{\Wc_{\rm aMD}}$, 
		and the information density is defined as 
		\begin{align}
			&\imath_t(c(\revise{\Wc_{\rm MD}});\rvVec{y} \cond c(\Wc \setminus \revise{\Wc_{\rm MD}})) \notag \\ 
			&\defeq n \ln(1+(t+(K_{\rm a}-\overline{K_{\rm a}'})^+)P') \notag \\	&\quad + \frac{\|\rvVec{y} - c(\Wc \setminus \revise{\Wc_{\rm MD}})\|^2}{1+(t+(K_{\rm a}-\overline{K_{\rm a}'})^+)P'} \notag \\	&\quad 
			- \|\rvVec{y} - c(\revise{\Wc_{\rm MD}}) - c(\Wc \setminus \revise{\Wc_{\rm MD}})\|^2.  \label{eq:infor_den}
		\end{align}
	\end{theorem}
	
	\begin{proof}
		\revise{The proof follows the footsteps of~\cite{PolyanskiyISIT2017massive_random_access} with some new ingredients to overcome the challenges posed by the random and unknown number of active users. First, for the proposed two-step decoder, the quality of both the estimation and decoding steps needs to be analyzed. Second, due to the possible mismatch between the numbers of transmitted and decoded messages, the sets of MDs and FAs need to be carefully handled. See Appendix~\ref{app:proof} for details.}
	\end{proof}
	
	Some remarks are in order.
	\begin{enumerate} [label={\roman*)}]
		\item \revise{Let the sets of misdetected messages and falsely alarmed messages be denoted by
			\begin{align}
				\Wc_{\rm MD} \defeq \widetilde{\Wc} \setminus \widehat{\Wc}, \\ 
				\Wc_{\rm FA} \defeq \widehat{\Wc} \setminus \widetilde{\Wc},
			\end{align}
			respectively. Let $K_{\rm a} \to K'_{\rm a}$ denote the event that the estimation step outputs $K_{\rm a}'$ when $K_{\rm a}$ users are active. Given $K_{\rm a} \to K_{\rm a}'$, note that the set of possible decoded list sizes $[\underline{K_{\rm a}'}:\overline{K_{\rm a}'}]$ might not contain the number $K_{\rm a}$ of transmitted messages. If $\overline{K_{\rm a}'} < K_{\rm a}$, the decoder commits at least $K_{\rm a} - \overline{K_{\rm a}'}$ \glspl{MD}; if $\underline{K_{\rm a}'} > K_{\rm a}$, the decoder commits at least $\underline{K_{\rm a}'} - K_{\rm a}$ \glspl{FA}. Furthermore, there can be additional \glspl{MD} and \glspl{FA} occurring during the decoding process. Accordingly, we further break down the sets $\Wc_{\rm MD}$ and $\Wc_{\rm FA}$ as follows. We set $\Wc_{\rm MD} = \Wc_{\rm iMD} \cup \Wc_{\rm aMD}$, where $\Wc_{\rm iMD}$ denotes the list of $({K}_{\rm a} - \overline{K_{\rm a}'})^+$ \textit{initial} \glspl{MD} due to insufficient decoded list size, and $\Wc_{\rm aMD}$ the \textit{additional} \glspl{MD} that occurs during decoding. Similarly, we set $\Wc_{\rm FA} = \Wc_{\rm iFA} \cup \Wc_{\rm aFA}$, where $\Wc_{\rm iFA}$ denotes the list of $(\underline{K_{\rm a}'}-{K}_{\rm a})^+$ \textit{initial} \glspl{FA} due to excessive decoded list size, and $\Wc_{\rm aFA}$ the \textit{additional} \glspl{FA}. This is explained in details in Appendix~\ref{app:proof}.}
		
		\item \revise{The bounds in~\eqref{eq:eps_MD} and~\eqref{eq:eps_FA} are obtained by writing the \gls{MD} and \gls{FA} probabilities as $P_{\rm MD}=\E[\md]$ and $P_{\rm FA} = \E[\fa]$. Similar to~\cite{PolyanskiyISIT2017massive_random_access}, to facilitate the bounding, we make a change of measure over which these expectations are taken  at a cost of adding a constant bounded by $\tilde{p}$ given in~\eqref{eq:p0}. We then expand these expectations over possible values of the size of the sets $\widetilde{\Wc}$, $\widehat{\Wc}$, $\Wc_{\rm MD}$, and $\Wc_{\rm FA}$. The terms $\frac{t+(K_{\rm a}-\overline{K_{\rm a}'})^+}{K_{\rm a}}$ and $\frac{t'+(\underline{K_{\rm a}'}-K_{\rm a})^+}{K_{\rm a} - t - {(K_{\rm a} - \overline{K_{\rm a}'})}^+ + t' + {(\underline{K_{\rm a}'}-K_{\rm a})}^+}$ are realizations of $\md$ and $\fa$, respectively, given that $K_{\rm a} \to K_{\rm a}'$, $|\Wc_{\rm aMD}| = t$, and $|\Wc_{\rm aFA}| = t'$. Furthermore, the probabilities $\P[K_{\rm a} \to K_{\rm a}', |\Wc_{\rm aMD}| = t]$ and $\P[K_{\rm a} \to K_{\rm a}', |\Wc_{\rm aMD}| = t, |\Wc_{\rm aFA}| = t']$ associated with these realizations are upper-bounded by $P_{\rv{K}_{\rm a}}(K_{\rm a})\min\{p_t,q_t, \xi(K_{\rm a},K_{\rm a}')\}$ and $P_{\rv{K}_{\rm a}}(K_{\rm a}) \min\{p_{t,t'}, q_{t,t'}, \xi(K_{\rm a},K_{\rm a}')\}$, respectively. Here, $\xi(K_{\rm a},K_{\rm a}')$ is an upper bound on $\P[K_{\rm a} \to K_{\rm a}']$, whereas $\min\{p_t,q_t\}$ and $\min\{p_{t,t'}, q_{t,t'}\}$ are upper bounds on $\P[|\Wc_{\rm aMD}| = t]$ and $\P[|\Wc_{\rm aMD}| = t, |\Wc_{\rm aFA}| = t']$, respectively.}
		
		\item  \revise{The bounds $p_{t,t'}$ and $q_{t,t'}$ on $\P[|\Wc_{\rm aMD}| = t, |\Wc_{\rm aFA}| = t']$ are obtained following the error-exponent-based and dependence-testing-based approaches, respectively. Similar approaches are used to obtain the bounds $p_{t}$ and $q_{t}$, respectively, on $\P[|\Wc_{\rm aMD}| = t]$. Both approaches have been used in~\cite{PolyanskiyISIT2017massive_random_access} and can improve the bounds in different regimes. Numerical experiments suggest that $p_t$ and $p_{t,t'}$ dominates for medium and high values of $t, t'$ and $K_{\rm a}$, while $q_t$ and $q_{t,t'}$ dominates when $t,t'$, and $K_{\rm a}$ are small. Computing $q_t$ and $q_{t,t'}$ is more cumbersome than $p_t$ and $p_{t,t'}$. Thus, one can limit the evaluation of these terms to small $t,t'$, and $K_{\rm a}$  to reduce the complexity. This only loosens the bound slightly.}
		
		\item 
		\revise{The parameters $K_\ell$ and $K_u$ are introduced to facilitate the numerical evaluation of the bounds $\epsilon_{\rm MD}$ and $\epsilon_{\rm FA}$. In particular, we use them to avoid infinite sums over $K_{\rm a}$ and $K_{\rm a}'$, since the domain of $\rv{K}_{\rm a}$ can be unbounded.} 
		It is often convenient to set $K_\ell$ to be the largest value and $K_{u}$ the smallest value for which $\sum_{K_{\rm a} = K_\ell}^{K_{u}}P_{\rv{K}_{\rm a}}(K_{\rm a})$ exceeds a predetermined threshold.  
		
		\item The term $1 - \E_{\rv{K}_{\rm a}}\left[\frac{M!}{M^{\rv{K}_{\rm a}}(M-\rv{K}_{\rm a})!} \right]$ in $\tilde{p}$ (see~\eqref{eq:p0}) can be upper-bounded by $\E_{\rv{K}_{\rm a}}\big[\binom{\rv{K}_{\rm a}}{2}/M\big]$ as in \cite{PolyanskiyISIT2017massive_random_access}. 
		
		\item The term $R_1$ in~\eqref{eq:R1} can be upper-bounded by $ \frac{1}{n} \ln (M - \max\{K_{\rm a},\underline{K_{\rm a}'}\}) - \frac{1}{nt'} \ln t'!$, which allows for a numerically stable computation when $M - \max\{K_{\rm a},\underline{K_{\rm a}'}\}$ is large.
		
		\item The optimal $\lambda$ in~\eqref{eq:E0} is given by the largest real root of the cubic function $c_1x^3 + c_2x^2 + c_3x + c_4$ with
        \begin{align}
            c_1 &=  -\rho \rho_1(\rho\rho_1 + 1)t'P'P_2P_3^2, \\ 
            c_2 &= \rho\rho_1 t'P'P_3^2 - \rho\rho_1(3-\rho_1)t'P'P_2P_3 	\notag \\  
                &\quad -\rho\rho_1(\rho_1+1)P_2P_3^2, \\
            c_3 &= (2\rho-1)\rho_1 t'P'P_3 + \rho_1 P_3^2 - 2\rho\rho_1 P_2P_3, \\ 
            c_4 &= (\rho-1)\rho_1 t'P' + \rho_1 P_3, 
        \end{align}
		where~$P_2$ is given by~\eqref{eq:P2} and $P_3 \defeq (t' + \rho t)P'$.
		
	\end{enumerate}
	
	
	\revise{Although we follow the approach in~\cite{PolyanskiyISIT2017massive_random_access}, the novel aspects of our proof are that we provide a combined analysis of both the estimation and decoding steps. Furthermore, we single out the initial MDs and FAs that can not be avoided and carefully count the additional MDs and FAs that can occur during the decoding process. The separate treatment of the initial and additional MDs and FAs is crucial for our bound.}
	
	In the following theorem, we provide closed-form expressions for $\xi(K_{\rm a},K_{\rm a}')$ for two different estimators of $\rv{K}_{\rm a}$.
	
	\begin{theorem}[Closed-form expressions for $\xi(K_{\rm a},K_{\rm a}')$] \label{thm:xi}
		For the \gls{ML} estimation of $\rv{K}_{\rm a}$, i.e., $m(\yv,K) = \ln p_{\rvVec{y} | \rv{K}_{\rm a}}(\yv | K)$, $\xi(K_{\rm a},K_{\rm a}')$ is given by
		\begin{align} \label{eq:xi_ML}
			&\xi(K_{\rm a},K_{\rm a}') \notag \\
            &\defeq \min_{K \in [K_\ell:K_u] \colon K \ne K_{\rm a}'} \bigg(\ind{K < K_{\rm a}'}\frac{\Gamma(n,\zeta(K,K_{\rm a},K_{\rm a}')\revise{)}}{\Gamma(n)} \notag \\ &	\qquad \qquad \qquad 
			+ \ind{K > K_{\rm a}'}\frac{\gamma(n,\zeta(K,K_{\rm a},K_{\rm a}'))}{\Gamma(n)}\bigg),
		\end{align}
		with 
		\begin{align}
			\zeta(K,K_{\rm a},K_{\rm a}') &\defeq n \ln\left(\frac{1+KP'}{1+K_{\rm a}'P'}\right)(1+K_{\rm a}P')^{-1} \notag \\		&\quad \cdot
			\left(\frac{1}{1+K_{\rm a}'P'}-\frac{1}{1+KP'}\right)^{-1}. \label{eq:zeta_ML}
		\end{align}
		
		For an energy-based estimation\footnote{\revise{The energy-based estimator relies on the fact that the normalized squared norm of the channel output $\yv$ concentrates around its mean for large $n$. This effect was also exploited to determine decoding times for the variable-length code in~\cite{Yavas2021Gaussian}.}} of $\rv{K}_{\rm a}$, i.e., $m(\yv,K) = - |\|\yv\|^2 - n(1 + KP')|$, $\xi(K_{\rm a},K_{\rm a}')$ is given by~\eqref{eq:xi_ML} with \begin{align}
			\zeta(K,K_{\rm a},K_{\rm a}') \defeq \frac{n}{1+K_{\rm a}P'}\left(1+\frac{K + K_{\rm a}'}{2}P'\right). \label{eq:zeta_energy}
		\end{align}
		
	\end{theorem}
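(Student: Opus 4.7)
The plan is to evaluate the pairwise probability $\P[m(\rvVec{y}_0,K_{\rm a}') > m(\rvVec{y}_0,K)]$ appearing under the minimum in~\eqref{eq:xi} in closed form for each of the two metrics, exploiting the fact that both metrics reduce to functions of $\|\rvVec{y}_0\|^2$ alone.

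The first step is distributional: since $\rvVec{y}_0 \sim \Cc\Nc(\mathbf{0},(1+K_{\rm a}P')\Id_n)$, the random variable $\rv{V} \defeq \|\rvVec{y}_0\|^2 / (1+K_{\rm a}P')$ is a sum of $n$ i.i.d.\ unit-mean exponential random variables, i.e., $\rv{V} \sim \mathrm{Gamma}(n,1)$. Consequently, for any threshold $\zeta>0$,
\begin{align*}
\P[\rv{V} \le \zeta] &= \gamma(n,\zeta)/\Gamma(n), & \P[\rv{V} > \zeta] &= \Gamma(n,\zeta)/\Gamma(n).
\end{align*}
It therefore suffices, for each metric, to rewrite the pairwise event $\{m(\rvVec{y}_0,K_{\rm a}') > m(\rvVec{y}_0,K)\}$ as a one-sided inequality on $\rv{V}$ and read off the threshold $\zeta(K,K_{\rm a},K_{\rm a}')$.

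For the \gls{ML} metric, the Gaussian assumed distribution $p_{\rvVec{y}|\rv{K}_{\rm a}}(\cdot\cond K) = \Cc\Nc(\mathbf{0},(1+KP')\Id_n)$ gives $m(\yv,K) = -n\ln(\pi(1+KP')) - \|\yv\|^2/(1+KP')$, so the pairwise event reduces to
\begin{equation*}
\|\rvVec{y}_0\|^2 \Big(\tfrac{1}{1+K_{\rm a}'P'}-\tfrac{1}{1+KP'}\Big) < n\ln\Big(\tfrac{1+KP'}{1+K_{\rm a}'P'}\Big).
\end{equation*}
Dividing by $1+K_{\rm a}P'$ yields an inequality on $\rv{V}$ with threshold equal to the $\zeta(K,K_{\rm a},K_{\rm a}')$ in~\eqref{eq:zeta_ML}. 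The key observation is that both the left-hand-side coefficient and the right-hand side flip sign simultaneously: when $K>K_{\rm a}'$ both are positive, so the event becomes $\{\rv{V}<\zeta\}$, contributing $\gamma(n,\zeta)/\Gamma(n)$; when $K<K_{\rm a}'$ both are negative, and after dividing the inequality flips to $\{\rv{V}>\zeta\}$, contributing $\Gamma(n,\zeta)/\Gamma(n)$. This reproduces~\eqref{eq:xi_ML} after minimizing over $K$.

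For the energy-based metric, $m(\yv,K)>m(\yv,K')$ is equivalent to $\|\yv\|^2$ being closer to $n(1+KP')$ than to $n(1+K'P')$. The boundary is the midpoint, so the event $\{m(\rvVec{y}_0,K_{\rm a}')>m(\rvVec{y}_0,K)\}$ becomes $\{\|\rvVec{y}_0\|^2 < n(1+\tfrac{K+K_{\rm a}'}{2}P')\}$ when $K>K_{\rm a}'$ and $\{\|\rvVec{y}_0\|^2 > n(1+\tfrac{K+K_{\rm a}'}{2}P')\}$ when $K<K_{\rm a}'$. Normalizing by $1+K_{\rm a}P'$ produces the threshold in~\eqref{eq:zeta_energy}, and the same case analysis on the sign of $K-K_{\rm a}'$ yields the lower/upper incomplete Gamma expressions, matching~\eqref{eq:xi_ML} under the new definition of $\zeta$.

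There is no major obstacle here; the only care point is the bookkeeping of the direction of the inequality when the coefficient multiplying $\|\rvVec{y}_0\|^2$ changes sign depending on whether $K\gtrless K_{\rm a}'$. This is what produces the two indicator terms $\ind{K<K_{\rm a}'}$ and $\ind{K>K_{\rm a}'}$ in~\eqref{eq:xi_ML}, and it is exactly the same case split that yields the energy-based result.
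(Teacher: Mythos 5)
Your proposal is correct and follows essentially the same route as the paper's proof: both reduce each pairwise event $\{m(\rvVec{y}_0,K_{\rm a}') > m(\rvVec{y}_0,K)\}$ to a one-sided threshold inequality on $\|\rvVec{y}_0\|^2$, which is Gamma-distributed with shape $n$ and scale $1+K_{\rm a}P'$, and read off the lower or upper incomplete Gamma probability according to the sign of $K-K_{\rm a}'$. Your explicit bookkeeping of the simultaneous sign flip of the coefficient and the right-hand side is precisely the step the paper compresses into ``after some manipulations,'' so there is nothing to add.
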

	\begin{proof}
		See Appendix~\ref{proof:xi}.
	\end{proof}

	As mentioned in Remark i, \revisee{since our decoder outputs a list of size within the interval $[\underline{K_{\rm a}'}:\overline{K_{\rm a}'}]$, it commits initial \glspl{MD} or \glspl{FA} when the true $\rv{K}_{\rm a}$ falls outside of this interval. In Section~\ref{sec:discussion_decoder}, we shall discuss why we choose to restrict the decoded list size to this interval. Due to the initial \glspl{MD} and \glspl{FA},  the \gls{MD} and \gls{FA} probabilities do not vanish even when $P\to\infty$ and all other system parameters are kept fixed.} In other words, the bounds $\epsilon_{\rm MD}$ and $\epsilon_{\rm FA}$ in Theorem~\ref{thm:RCU_unknownKa} exhibits error floors when $P$ is large. To characterize this effect, we put forth the following asymptotic lower bounds on $\epsilon_{\rm MD}$ and $\epsilon_{\rm FA}$, which are obtained by assuming that no additional \gls{MD} or \gls{FA} occurs on top of the initial \glspl{MD} or \glspl{FA}.
	
	\begin{corollary}[Asymptotic lower bounds on $\epsilon_{\rm MD}$ and $\epsilon_{\rm FA}$] \label{cor:error_floor}
		With ML or energy-based estimation of $\rv{K}_{\rm a}$, 
		\revisee{$\epsilon_{\rm MD}$ and $\epsilon_{\rm FA}$ given in~\eqref{eq:eps_MD} and~\eqref{eq:eps_FA}, respectively, satisfy}
		\begin{align}
			&\lim_{P\to\infty} \epsilon_{\rm MD} \notag \\&\ge \bar{\epsilon}_{\rm MD} \\
            &= \sum_{K_{\rm a} =\max\{K_\ell,1\}}^{K_{u}} \bigg(P_{\rv{K}_{\rm a}}(K_{\rm a}) \sum_{K_{\rm a}' = K_\ell}^{K_{u}} \frac{(K_{\rm a}-\overline{K_{\rm a}'})^+}{K_{\rm a}} 
			{\xi}(K_{\rm a},K_{\rm a}') \bigg) \notag \\ 		&\quad+ \bar{p}, \label{eq:eps_MD_floor}\\
			&\lim_{P\to\infty} \epsilon_{\rm FA} \notag \\&\ge \bar{\epsilon}_{\rm FA} \\
            &= \sum_{K_{\rm a} =K_\ell}^{K_{u}} \bigg(P_{\rv{K}_{\rm a}}(K_{\rm a}) 
			\!\sum_{K_{\rm a}' = K_\ell}^{K_{u}}  \frac{(\underline{K_{\rm a}'}-K_{\rm a})^+}{K_{\rm a} - {(K_{\rm a} \!-\! \overline{K_{\rm a}'})}^+ + {(\underline{K_{\rm a}'}\!-\!K_{\rm a})}^+} \notag \\ &\qquad \cdot 
			{\xi}(K_{\rm a},K_{\rm a}') \bigg) + \bar{p}, \label{eq:eps_FA_floor}
		\end{align}	
		where $$\bar{p} = 2 - \sum_{K_{\rm a} = K_\ell}^{K_{u}}P_{\rv{K}_{\rm a}}(K_{\rm a}) - \E_{\rv{K}_{\rm a}}\left[\frac{M!}{M^{\rv{K}_{\rm a}}(M-\rv{K}_{\rm a})!} \right],$$ and $\xi(K_{\rm a},K_{\rm a}')$ is given by~\eqref{eq:xi_ML} with
		$\zeta(K,K_{\rm a},K_{\rm a}') = n \ln\big(\frac{K}{K_{\rm a}'}\big) K_{\rm a}^{-1}\big(\frac{1}{K_{\rm a}'} - \frac{1}{K}\big)^{-1}$ for ML estimation of $\rv{K}_{\rm a}$ and $\zeta(K,K_{\rm a},K_{\rm a}') = n\frac{K+K_{\rm a}'}{2 K_{\rm a}}$ for energy-based estimation of $\rv{K}_{\rm a}$. 
	\end{corollary}
	\begin{proof}
		See Appendix~\ref{proof:error_floor}.
	\end{proof}
	\begin{remark} \label{remark:error_floor}
		The lower bounds in~\eqref{eq:eps_MD_floor} and~\eqref{eq:eps_FA_floor} are tight for typical IoT settings. Indeed, equalities in~\eqref{eq:eps_MD_floor} and~\eqref{eq:eps_FA_floor} hold if the probability of having additional \glspl{MD} and \glspl{FA} vanishes, i.e., $\min\{p_t,q_t\} \to 0$ for $t \ne 0$ and $\min\{p_{t,t'},p_{t,t'}\} \to 0$ for $(t,t') \ne (0,0)$ as $P\to\infty$. With $\rho = \rho_1 = 1$, 
		the optimal $\lambda$ in~\eqref{eq:E0} is given by $\lambda = 1/(2P_2)$. Thus, by replacing the maximization over $\rho$ and $\rho_1$ in~\eqref{eq:Ett} with $\rho = \rho_1 = 1$, we obtain that $E(t,t') \ge -t' R_1 - R_2 + \ln\big(1+\frac{(t+t')P'}{4P_2}\big)$. It follows that 
		\begin{align}
			p_{t,t'} &\le \binom{M-\max\{K_{\rm a}, \underline{K_{\rm a}'}\}}{t'} \binom{\min\{K_{\rm a}, \overline{K_{\rm a}'}\}}{t} \notag \\ 
            &\quad \cdot \left(1+\frac{(t+t')P'}{4P_2}\right)^{-n}.  \label{eq:bound_ptt} 
	\end{align}
	If $K_{\rm a} \in  [\underline{K_{\rm a}'}:\overline{K_{\rm a}'}]$, i.e., $P_2 = 1$, the right-hand side of~\eqref{eq:bound_ptt} vanishes as $P' \to\infty$. Otherwise, the right-hand side of~\eqref{eq:bound_ptt} converges to 
	\begin{align}
		\bar{p}_{t,t'} &= \binom{M-\max\{K_{\rm a}, \underline{K_{\rm a}'}\}}{t'} \binom{\min\{K_{\rm a}, \overline{K_{\rm a}'}\}}{t} \notag \\ 
            &\quad \cdot \bigg(1+\frac{t+t'}{4((K_{\rm a} - \overline{K_{\rm a}'})^+ + (\underline{K_{\rm a}'} - K_{\rm a})^+)}\bigg)^{-n} \\
		&\le M^{t'} K_{\rm a}^t \bigg(1+\frac{t+t'}{4((K_{\rm a} - \overline{K_{\rm a}'})^+ + (\underline{K_{\rm a}'} - K_{\rm a})^+)}\bigg)^{-n}.
	\end{align}
	Observe that $\bar{p}_{t,t'}$ is small if $n$ is relately large compared to $\ln M$ and $\ln K_{\rm a}$, which is true for relevant values of $n,M$ and $K_{\rm a}$ in the IoT. Specifically, in typical IoT scenarios, $\log_2 M$ and $K_{\rm a}$ are in the order of $10^2$ \revise{to $10^3$}, while $K_{\rm a}/n$ is from $10^{-4}$ to $10^{-3}$\textemdash see~\cite{PolyanskiyISIT2017massive_random_access} and \cite[Rem.~3]{Zadik2019}.\footnote{\revise{A typical IoT setting was discussed in \cite[Rem.~3]{Zadik2019}, wherein a metropolitan area was considered with $10^6$--$10^7$ devices communicating using the sub-GHz industrial, scientific and medical (ISM) band with $20$ MHz bandwidth. If each device is active a few times per hour, several hundreds to several thousands of devices are active per second. The number of degrees of freedom per active device is thus about $10^3$--$10^4$. Therefore, the average number of active devices per degree of freedom is from $10^{-4}$ to $10^{-3}$.}} For example, with $(M,n) = (2^{100}, 15000)$ and $K_{\rm a} \le 300$ as considered in~\cite{PolyanskiyISIT2017massive_random_access} and many follow-up works, assume that $(K_{\rm a} - \overline{K_{\rm a}'})^+ + (\underline{K_{\rm a}'} - K_{\rm a})^+ \le 20$, then $\bar{p}_{t,t'} < 10^{-128}$ for every $t\le 300$ and $t' \le 300$. 
	As a consequence, $p_{t,t'}$ and $p_t$ are very small. We conclude that $\lim\limits_{P\to\infty} \epsilon_{\rm MD}$ and $\lim\limits_{P\to\infty} \epsilon_{\rm FA}$ approach closely $\bar{\epsilon}_{\rm MD}$ and $\bar{\epsilon}_{\rm FA}$, respectively. In other words, $\bar{\epsilon}_{\rm MD}$ and $\bar{\epsilon}_{\rm FA}$ essentially characterize the error floors of ${\epsilon}_{\rm MD}$ and ${\epsilon}_{\rm FA}$, respectively, as $P\to \infty$. We shall further validate this argument through numerical experiments in Section~\ref{sec:numerical}. 
\end{remark}

\revise{The choice of the decoding radius $r$ turns out to be crucial. It} can be optimized according to the target \gls{MD} and \gls{FA} probabilities.\footnote{\revise{The choice of $r$ can also depend on the knowledge of $P_{\rv{K}_{\rm a}}$ (if available). For example, if the domain of $\rv{K}_{\rm a}$ contains well-separated values, it is not necessary to set $r > 0$. However, for $P_{\rv{K}_{\rm a}}$ that has non-negligible probability mass for consecutive values of $\rv{K}_{\rm a}$, setting $r > 0$ may improve the performance.}} On the one hand, a large decoding radius results in a reduction of the initial \glspl{MD} and \glspl{FA}, and thus a reduction of the error floors. Indeed, it is easy to verify that both $\bar{\epsilon}_{\rm MD}$ in~\eqref{eq:eps_MD_floor} and $\bar{\epsilon}_{\rm FA}$ in~\eqref{eq:eps_FA_floor} decrease with $r$. As $r \to \infty$, both $\bar{\epsilon}_{\rm MD}$ and $\bar{\epsilon}_{\rm FA}$ vanish. On the other hand, a large decoding radius leads to \emph{overfitting}, especially when $P$ is small. Specifically, \reviseee{as we shall clarify in Section~\ref{sec:discussion_decoder},} when the noise dominates the signal and the interference, increasing $r$ \reviseee{seems to increase} the chance that the decoder~\eqref{eq:decoder_Ka'} returns a list containing codewords whose sum is closer in Euclidean distance to the noise than to the sum of the transmitted codewords. 
When $P$ is sufficiently small, numerical experiments indicate that it is optimal to set $r = 0$, \revise{which introduces a bias that helps overcome overfitting.} 
In this case, we obtain the following achievability bound. 
\begin{corollary}[Zero decoding radius]  \label{coro:zero_decRad}
	Fix $P' < P$, $K_\ell$, and $K_{u}$ ($K_\ell \le K_{u}$). There exists an $(M,n,\epsilon_{\rm MD},\epsilon_{\rm FA})$ random-access code satisfying the power constraint $P$ for which 
	\begin{align}
		\epsilon_{\rm MD} &= \sum_{K_{\rm a} =\max\{K_\ell,1\}}^{K_{u}} \bigg( P_{\rv{K}_{\rm a}}(K_{\rm a}) \sum_{K_{\rm a}' = K_\ell}^{K_u} \sum_{t=0}^{\psi}\frac{t+(K_{\rm a}-K_{\rm a}')^+}{K_{\rm a}} \notag \\ &\qquad \cdot 
		\min\{p_{t,t},q_{t,t},\xi(K_{\rm a},K'_{\rm a})\} \bigg) + \tilde{p}, \label{eq:zeroRad_MD}\\
		\epsilon_{\rm FA} &= \sum_{K_{\rm a} =K_\ell}^{K_{u}} \bigg(P_{\rv{K}_{\rm a}}(K_{\rm a}) \sum_{K_{\rm a}' = \max\{K_\ell,1\}}^{K_u} \sum_{t=0}^{\psi} \frac{t+(K_{\rm a}' - K_{\rm a})^+}{K_{\rm a}'} \notag \\ &\qquad \cdot 
		\min\{p_{t,t}, q_{t,t},\xi(K_{\rm a},K'_{\rm a})\}\bigg) + \tilde{p},  \label{eq:zeroRad_FA}
	\end{align}
	where $\psi \defeq \min\{K_{\rm a}',K_{\rm a},M-\max\{K_{\rm a}',K_{\rm a}\}\}$, and $\tilde{p}$, $p_{t,t'}$, and  $q_{t,t'}$ are given by~\eqref{eq:p0}, \eqref{eq:ptt}, and \eqref{eq:qtt}, respectively, with $r = 0$.
\end{corollary}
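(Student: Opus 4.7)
The corollary is a specialization of Theorem~\ref{thm:RCU_unknownKa} to the case $r = 0$, so my plan is simply to substitute $r = 0$ throughout \eqref{eq:eps_MD}--\eqref{eq:Tbart} and identify the resulting simplifications. The first observation is that with $r = 0$, the clipping operations in the definitions of $\underline{K_{\rm a}'}$ and $\overline{K_{\rm a}'}$ are inactive for every $K_{\rm a}' \in [K_\ell : K_u]$, and one obtains $\underline{K_{\rm a}'} = \overline{K_{\rm a}'} = K_{\rm a}'$. Consequently, the decoder~\eqref{eq:decoder_Ka'} returns a list of exactly $K_{\rm a}'$ codewords, which is the minimal nontrivial case.

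Next I would simplify the index sets. Substituting $\overline{K_{\rm a}'} = \underline{K_{\rm a}'} = K_{\rm a}'$ into \eqref{eq:T} gives $\mathcal{T} = [0 : \min\{K_{\rm a}', K_{\rm a}, M - K_{\rm a}' - (K_{\rm a} - K_{\rm a}')^+\}]$; a short case analysis on the sign of $K_{\rm a} - K_{\rm a}'$ rewrites this as $[0 : \psi]$ with $\psi = \min\{K_{\rm a}', K_{\rm a}, M - \max\{K_{\rm a}', K_{\rm a}\}\}$, matching the definition in the statement. For $\overline{\mathcal{T}}_t$ in \eqref{eq:Tbart}, the lower endpoint collapses to $t$ (since $(K_{\rm a} - \overline{K_{\rm a}'})^+ - (K_{\rm a} - \underline{K_{\rm a}'})^+ = 0$), while $u_t$ is forced to equal $t$ because $t \le \psi \le \min\{K_{\rm a}', K_{\rm a}\}$ and $t \le \psi \le M - \max\{K_{\rm a}', K_{\rm a}\}$. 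A similar calculation, using the elementary identity $(K_{\rm a} - K_{\rm a}')^+ - (K_{\rm a}' - K_{\rm a})^+ = K_{\rm a} - K_{\rm a}'$ and the assumption $K_{\rm a}' \ge 1$ implicit in the support of the inner sums of \eqref{eq:zeroRad_MD}--\eqref{eq:zeroRad_FA}, shows that $\mathcal{T}_t = \{t\}$. Hence both inner sums over $t'$ reduce to the single term $t' = t$, which immediately gives $p_t = p_{t,t}$ in \eqref{eq:pt} and $q_t = q_{t,t}$ in \eqref{eq:qt}.

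With these simplifications in place, \eqref{eq:eps_MD} directly becomes \eqref{eq:zeroRad_MD}, since the multiplier $(t + (K_{\rm a} - \overline{K_{\rm a}'})^+)/K_{\rm a}$ turns into $(t + (K_{\rm a} - K_{\rm a}')^+)/K_{\rm a}$. For the FA expression \eqref{eq:eps_FA}, the denominator $K_{\rm a} - t - (K_{\rm a} - \overline{K_{\rm a}'})^+ + t' + (\underline{K_{\rm a}'} - K_{\rm a})^+$ evaluated at $t' = t$ and $\overline{K_{\rm a}'} = \underline{K_{\rm a}'} = K_{\rm a}'$ equals $K_{\rm a} - (K_{\rm a} - K_{\rm a}')^+ + (K_{\rm a}' - K_{\rm a})^+$; splitting on the sign of $K_{\rm a}' - K_{\rm a}$ shows that both cases yield $K_{\rm a}'$, which recovers \eqref{eq:zeroRad_FA}. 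The requirement $K_{\rm a}' \ge \max\{K_\ell, 1\}$ in the FA bound ensures this denominator is strictly positive.

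I do not expect any genuine obstacle: the proof is an algebraic reduction and the only care needed is in the handling of the positive-part operators and of the boundary cases $K_{\rm a} = 0$ (excluded from the MD sum by the $\ind{|\widetilde{\mathcal{W}}| \ne 0}$ factor in \eqref{eq:def_pMD}) and $K_{\rm a}' = 0$ (excluded from the FA sum by the $\ind{|\widehat{\mathcal{W}}| \ne 0}$ factor in \eqref{eq:def_pFA}), both of which are reflected in the lower limits $\max\{K_\ell, 1\}$ appearing in \eqref{eq:zeroRad_MD} and \eqref{eq:zeroRad_FA}.
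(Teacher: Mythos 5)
Your proposal is correct and is exactly the route the paper takes: Corollary~\ref{coro:zero_decRad} is obtained by specializing Theorem~\ref{thm:RCU_unknownKa} to $r=0$, where $\underline{K_{\rm a}'}=\overline{K_{\rm a}'}=K_{\rm a}'$ (since the estimate is confined to $[K_\ell:K_u]$), the sets $\overline{\Tc}_t$ and $\Tc_t$ collapse to $\{t\}$ so that $p_t=p_{t,t}$ and $q_t=q_{t,t}$, and the FA denominator reduces to $K_{\rm a}'$ in both sign cases. Your handling of the boundary cases ($\Tc_t=\emptyset$ when $K_{\rm a}'=0$, hence the lower limit $\max\{K_\ell,1\}$ in \eqref{eq:zeroRad_FA}, and $\Tc=[0:\psi]$ via the case split on $K_{\rm a}-K_{\rm a}'$) fills in precisely the algebra the paper leaves implicit.
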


If the number of active users is fixed to $K_{\rm a}$, by letting $K_{\rm a}' = K_{\rm a}$, one obtains from Corollary~\ref{coro:zero_decRad} a trivial generalization of \cite[Th.~1]{PolyanskiyISIT2017massive_random_access} to the complex case. 

Our random-coding scheme belongs to a family of random-coding schemes that first estimate the number of active users and then exploit this estimate to choose the number of returned messages. The next theorem gives an ensemble converse bound on the \gls{MD} and \gls{FA} probabilities, i.e., a converse bound on the \gls{MD} and \gls{FA} probabilities averaged over a random codebook ensemble, for schemes of this type.   

\begin{theorem} [An ensemble converse bound for a family of coding schemes] \label{th:converse}
	Fix $K_\ell$ and $K_u$ ($K_\ell \le K_u$).  Consider a decoding function that first estimates $\rv{K}_{\rm a}$ by $K_{\rm a}'$ as in~\eqref{eq:est_Ka} and then returns a list of at least $\underline{K_{\rm a}'}$ and at most $\overline{K_{\rm a}'}$ messages, where $\underline{K_{\rm a}'}$ and $\overline{K_{\rm a}'}$ are functions of $K_{\rm a}'$. Consider a random-access code consisting of $M$ length-$n$ codewords drawn independently from a distribution $P_{\rv{x}}$ and the aforementioned decoding function. It holds that
	\begin{align}
		\E[P_{\rm MD}] &\ge \underline{\epsilon}_{\rm MD} \notag \\
		&= \E_{\rv{K}_{\rm a}}\left[\frac{M!}{M^{\rv{K}_{\rm a}}(M-\rv{K}_{\rm a})!} \right] \sum_{K_{\rm a} = \max\{K_\ell,1\}}^{K_u} \bigg(P_{\rv{K}_{\rm a}}(K_{\rm a}) \notag \\ 		&\qquad \cdot  \sum_{K_{\rm a}' = K_\ell}^{K_u} \frac{(K_{\rm a}-\overline{K_{\rm a}'})^+}{K_{\rm a}} 
		\P[K_{\rm a} \to K_{\rm a}'] \bigg), \label{eq:converse_MD}\\ 
		\E[P_{\rm FA}] &\ge \underline{\epsilon}_{\rm FA} \notag \\
		&= \E_{\rv{K}_{\rm a}}\left[\frac{M!}{M^{\rv{K}_{\rm a}}(M-\rv{K}_{\rm a})!} \right]  \sum_{K_{\rm a} = K_\ell}^{K_u} \bigg(P_{\rv{K}_{\rm a}}(K_{\rm a}) \notag \\ 		&\qquad \cdot 
		\sum_{K_{\rm a}' =K_\ell}^{K_u} \frac{(\underline{K_{\rm a}'}-K_{\rm a})^+}{\overline{K_{\rm a}'}} 
		\P[K_{\rm a} \to K_{\rm a}'] \bigg). \label{eq:converse_FA}
	\end{align}	
	Here, the expectations on the left-hand sides of~\eqref{eq:converse_MD} and~\eqref{eq:converse_FA} are with respect to the codeword distribution $P_{\rv{x}}$, and $\P[K_{\rm a} \to K_{\rm a}']$ is the probability that the estimation step outputs $K'_{\rm a}$ when $K_{\rm a}$ users are active. That is, $\P[K_{\rm a} \to K_{\rm a}'] = \P[K_{\rm a}' = \displaystyle \arg\max_{K \in [K_\ell:K_u]} m(\rvVec{y},K) \cond \rv{K}_{\rm a} = K_{\rm a}]$, where the distribution of the received signal $\rvVec{y}$ is the one induced by the codeword distribution $P_{\rv{x}}$ and the Gaussian noise via~\eqref{eq:gaussian_MAC}.
\end{theorem} 
\begin{proof}
	See Appendix~\ref{proof:converse}.
\end{proof}


\begin{remark} \label{remark:est_then_dec}
	A practical approach to adapt the coding schemes proposed for the case of known $\rv{K}_{\rm a}$ to the setting where $\rv{K}_{\rm a}$ is unknown is to first estimate $\rv{K}_{\rm a}$ and then treat this estimate as the true $\rv{K}_{\rm a}$ in the decoding process. In this way, the number of decoded messages is equal to the estimate of $\rv{K}_{\rm a}$. 
	\revise{For this approach, our random-coding bound with zero decoding radius, provided in Corollary~\ref{coro:zero_decRad}, gives an achievability bound, whereas Theorem~\ref{th:converse} with $\overline{K_{\rm a}'} = \underline{K_{\rm a}'} = K_{\rm a}'$ and with $P_{\rv{x}}$ being the i.i.d. Gaussian codeword distribution with average power $P$ gives an ensemble converse bound.}
\end{remark}

\revisee{A general converse bound on the \gls{MD} and \gls{FA} probabilities seems difficult to obtain, as we discuss in Section~\ref{sec:discussion_converse}.} 

\subsection{Application to Slotted ALOHA With Multi-Packet Reception}
Our random-coding bound can also be applied to SA-MPR to investigate the resulting \gls{MD}--\gls{FA} trade-off. Consider an SA-MPR scheme where a length-$n$ frame is divided into $L$ slots, and where each user chooses randomly a slot in which to transmit. For $K_{\rm a}$ active users, the number of users transmitting in a slot follows a binomial distribution with parameters $(K_{\rm a},1/L)$. 
The \gls{PMF} of the number of active users in a slot, $\rv{K}_{\rm {SA}}$,  is given by
\begin{equation} \label{eq:PMF_Ka_perSlot}
	P_{\rv{K}_{\rm {SA}}}(K_{\rm SA}) = \sum_{K_{\rm a}} P_{\rv{K}_{\rm a}}(K_{\rm a}) \binom{K_{\rm a}}{K_{\rm SA}} L^{-K_{\rm SA}}\Big(1-\frac{1}{L}\Big)^{K_{\rm a} - K_{\rm SA}}.
\end{equation}
Existing analyses of slotted ALOHA usually assume that the decoder can detect perfectly if no user, one user, or more than one user have transmitted in a slot. Furthermore, it is usually assumed that a collision-free slot leads to successful message decoding. However, \revise{in the presence of noise, single-user decoding in a collision-free
	slot may fail. Furthermore, the more the slots, the shorter the slot length
	over which a user transmits its signal. In this case, the single-user code
	becomes shorter and thus less resilient to noise.} 
To account for both detection and decoding errors, in Corollary~\ref{coro:slotted_ALOHA} below we apply the decoder introduced in~\eqref{eq:decoder_Ka'} in a slot-by-slot manner, and obtain a random-coding bound similar to Theorem~\ref{thm:RCU_unknownKa}. 

\begin{corollary}[Random-coding bound for SA-MPR] \label{coro:slotted_ALOHA}
	For the Gaussian \gls{MAC} with the number of active users distributed according to $P_{\rv{K}_{\rm a}}$ and frame length $n$, an SA-MPR scheme with $L$ slots can achieve the \gls{MD} and \gls{FA} probabilities given in~\eqref{eq:eps_MD} and \eqref{eq:eps_FA}, respectively, with codebook size $M$, codeword length $n/L$, power constraint $PL$, and per-slot number of active users distributed according to $P_{\rv{K}_{\rm {SA}}}$ defined in~\eqref{eq:PMF_Ka_perSlot} (i.e., $P_{\rv{K}_{\rm a}}$ is replaced by $P_{\rv{K}_{\rm {SA}}}$ in \eqref{eq:eps_MD} and \eqref{eq:eps_FA}).
\end{corollary}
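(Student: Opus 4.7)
The approach is to apply Theorem~\ref{thm:RCU_unknownKa} on a per-slot basis and invoke the symmetry across slots to recover the stated expression. The proof amounts to identifying the correct per-slot channel parameters and then checking that the overall \gls{MD} and \gls{FA} probabilities coincide with the per-slot ones.

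First I would identify the per-slot channel. Conditional on $\rv{K}_{\rm a}$ total active users that independently pick one of $L$ slots uniformly at random, the number of users transmitting in a fixed slot is $\text{Binomial}(\rv{K}_{\rm a},1/L)$; marginalizing over $\rv{K}_{\rm a}$ yields the \gls{PMF} $P_{\rv{K}_{\rm SA}}$ in~\eqref{eq:PMF_Ka_perSlot}. Within a slot, each active user transmits over $n/L$ channel uses, and since the per-frame energy budget is $nP$, the per-slot power constraint becomes $nP/(n/L)=PL$. All slots share the common codebook of $M$ codewords. Hence each slot is itself a Gaussian \gls{MAC} with random and unknown number of active users governed by $P_{\rv{K}_{\rm SA}}$, satisfying the hypotheses of Theorem~\ref{thm:RCU_unknownKa} with exactly the parameters listed in the corollary.

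Next I would apply the random-coding construction and decoder of Theorem~\ref{thm:RCU_unknownKa} independently in each slot, and take $\widehat{\Wc}$ to be the union of the $L$ per-slot decoded lists. For \gls{MD}: each active user transmits in exactly one slot, so whether its message is missing from $\widehat{\Wc}$ is determined entirely by the decoder in that slot; by the i.i.d.\ nature of slot selection and the symmetry of the per-slot random codebooks, averaging over users reduces the overall \gls{MD} probability to the per-slot one, which is upper-bounded by~\eqref{eq:eps_MD} with $P_{\rv{K}_{\rm a}}$ replaced by $P_{\rv{K}_{\rm SA}}$, $n$ by $n/L$, and $P$ by $PL$. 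The \gls{FA} analysis is analogous: every erroneous element of $\widehat{\Wc}$ originates from a single slot, and by the same symmetry the per-message \gls{FA} probability collapses to the per-slot bound~\eqref{eq:eps_FA} with the same substitutions.

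The main subtlety will be bookkeeping cross-slot message collisions, namely events in which two users in different slots transmit the same message, or the same codeword index is output by more than one slot decoder. The former is counted at the level of the distinct-message set $\widetilde{\Wc}$ in Definition~\ref{def:code} and is absorbed by the term $\tilde{p}$ in~\eqref{eq:p0}, which already uses the appropriate per-slot distribution when applied slot by slot; the latter, being a union operation, can only decrease the number of \glspl{FA} and so preserves the per-slot bound. Once this is checked, the random-coding bound of Theorem~\ref{thm:RCU_unknownKa} transfers verbatim to the SA-MPR setting with the stated parameter substitutions.
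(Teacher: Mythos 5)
Your proposal is correct and follows essentially the same route as the paper, which states the corollary without a separate proof precisely because it amounts to applying the decoder of~\eqref{eq:decoder_Ka'} and the bound of Theorem~\ref{thm:RCU_unknownKa} slot by slot with the parameter substitutions you identify: codeword length $n/L$, per-slot power $nP/(n/L)=PL$, and per-slot user count $P_{\rv{K}_{\rm SA}}$ obtained by marginalizing the $\mathrm{Binomial}(K_{\rm a},1/L)$ occupancy over $P_{\rv{K}_{\rm a}}$. Your additional remarks on cross-slot duplicates being absorbed at the level of $\widetilde{\Wc}$ and on the union of per-slot lists are consistent with the paper's definitions and do not change the argument.
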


\subsection{Random-Coding Bound for a Scheme That Treats Interference as Noise} \label{sec:TIN}
We next present a random-coding union bound with parameter $s$ (RCUs) for \gls{TIN} to account for both \gls{MD} and \gls{FA} when the number of active users is unknown. This is an extension of the dependence testing (DT) bound for TIN with a known number of active users proposed in~\cite[Sec.~III]{PolyanskiyISIT2017massive_random_access}. We consider a random-coding scheme with the same encoder as in Theorem~\ref{thm:RCU_unknownKa} and a \gls{TIN} decoder. Specifically, the receiver first estimates the number of active users $\rv{K}_{\rm a}$ using the metric $m(\yv,K)$, as in the decoder analyzed in Theorem~\ref{thm:RCU_unknownKa}.
Then, it outputs a list composed of the $K_{\rm a}^{\prime}$ codewords that are closest to the received signal $\rvVec{y}$ in Euclidean distance, i.e.,
\begin{equation}
	\widehat{\mathcal{W}} = \arg\min_{\mathcal{W}' = \{{w}'_1,\dots,{w}'_{K'_{\rm a}}\}\subset [M]} \sum_{i=1}^{K_{\rm a}^{\prime}} \| \rvVec{y} - {\cv}_{{w}'_i} \|^2. \label{eq:NN_decoder}
\end{equation}
Operationally, the receiver decodes the message of each user by treating the signals of all other users as noise. 
An error analysis of this coding scheme gives the following result. 

\begin{theorem}[Random-coding bound for TIN] \label{thm:TIN}
	Fix $P' < P$ and nonnegative integers $K_\ell \le K_{u}$. For the $\rv{K}_{\rm a}$-user Gaussian \gls{MAC} with $\rv{K}_{\rm a} \sim P_{\rv{K}_{\rm a}}$, there exists an $(M,n,\epsilon_{\rm MD},\epsilon_{\rm FA})$ random-access code satisfying the power constraint $P$ for which, for every $s>0$, 
	\begin{align}
		\epsilon_{\rm MD} &= \sum_{K_{\rm a} =\max\{K_\ell,1\}}^{K_{u}} \sum_{K_{\rm a}^{\prime} = K_\ell}^{K_{u}} \! \frac{P_{\rv{K}_{\rm a}}(K_{\rm a})}{K_{\rm a}}  
		\Big( (K_{\rm a}\!-\!K_{\rm a}^{\prime})^+ \xi(K_{\rm a},K_{\rm a}') \nonumber\\ 		& \qquad 
		+ \min\{K_{\rm a},K_{\rm a}^{\prime}\}\min\left\{\eta(s),\xi(K_{\rm a},K_{\rm a}')\right\}\Big) 
        + \tilde{p},  \label{eq:md_TIN}\\ 
		\epsilon_{\rm FA} &= \sum_{K_{\rm a} =K_\ell}^{K_{u}} \sum_{K_{\rm a}^{\prime} = K_\ell}^{K_{u}} \frac{P_{\rv{K}_{\rm a}}(K_{\rm a})}{K_{\rm a}^{\prime}} 
		\Big((K_{\rm a}^{\prime}-K_{\rm a})^+\xi(K_{\rm a},K_{\rm a}')  \nonumber\\ & \qquad
		+ \min\{K_{\rm a},K_{\rm a}^{\prime}\} \min\left\{\eta(s),\xi(K_{\rm a},K_{\rm a}')\right\}\Big) 
        + \tilde{p}, \label{eq:fa_TIN}
	\end{align}
	where $\tilde{p}$ is given in~\eqref{eq:p0}, $\xi(K_{\rm a},K_{\rm a}')$ is given in~\eqref{eq:xi}, and 
	\begin{align}
		\eta(s) = \P[\sum\limits_{i=1}^n  \imath_s(\rv{x}_{i};\rv{y}_i) \le \ln\left(\frac{M-K_{\rm a}}{\rv{u}}\right)].
		\label{eq:eta_s}
	\end{align}
	Here, $\rv{u}$ is uniformly distributed on $[0,1]$ and $[\rv{x}_1 \dots \rv{x}_n]^\T \sim \Cc\Nc(\mathbf{0},\bar{P} \Id_n)$ with $\bar{P} = P'/(1+(K_{\rm a}-1)P')$. Furthermore, given $\rv{x}_i = x_i$, we have that $\rv{y}_i \sim \Cc\Nc(x_i,1)$. Finally, $\imath_s(\cdot;\cdot)$ is the generalized information density, given by
	\begin{align} \label{eq:gen_infor_den}
		\imath_s(x;y) \defeq -s|y-x|^2 + \frac{s|y|^2}{1+s\bar{P}} + \ln(1+s\bar{P}).
	\end{align}
	
\end{theorem}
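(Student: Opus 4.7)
The plan is to follow closely the architecture of the proof of Theorem~\ref{thm:RCU_unknownKa}, modifying only those steps that exploit the per-user structure of the TIN decoder~\eqref{eq:NN_decoder}. First I would adopt the same random-coding ensemble: codewords $\cv_1, \ldots, \cv_M$ drawn i.i.d.\ from $\Cc\Nc(\zerov, P'\Id_n)$, with codewords violating the power constraint replaced by the all-zero codeword. The residual probabilities arising from power-constraint violations, from message collisions, and from $\rv{K}_{\rm a}$ falling outside $[K_\ell, K_u]$ are collected in the same term $\tilde{p}$ defined in~\eqref{eq:p0}. I would then condition on $\rv{K}_{\rm a} = K_{\rm a}$ and on the estimate $K_{\rm a}'$ produced by the metric $m(\yv, K)$. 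The change-of-measure argument used in Theorem~\ref{thm:RCU_unknownKa} shows that the probability the estimator returns a given $K_{\rm a}'$ is upper-bounded by $\xi(K_{\rm a}, K_{\rm a}')$, whose closed-form expressions from Theorem~\ref{thm:xi} transfer unchanged. Conditioned on $(K_{\rm a}, K_{\rm a}')$, the decoder output has exactly $K_{\rm a}'$ elements, which forces $(K_{\rm a} - K_{\rm a}')^+$ \emph{initial} \glspl{MD} and $(K_{\rm a}' - K_{\rm a})^+$ \emph{initial} \glspl{FA}; the remaining $\min\{K_{\rm a}, K_{\rm a}'\}$ slots may be matched correctly or contribute additional \glspl{MD}/\glspl{FA}.

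The heart of the argument is the per-user bound, which I would derive by tagging user~$1$ with transmitted codeword $\cv_{w_1}$ and invoking the TIN interpretation: the remaining $K_{\rm a}-1$ codewords are modeled as independent Gaussian interference with covariance $(K_{\rm a}-1)P'\Id_n$, so that the surrogate single-user channel has per-symbol pair $(\rv{x}_i, \rv{y}_i)$ with $\rv{x}_i \sim \Cc\Nc(0, \bar{P})$, $\bar{P} = P'/(1+(K_{\rm a}-1)P')$, and $\rv{y}_i \sim \Cc\Nc(\rv{x}_i, 1)$. The probability that $\cv_{w_1}$ is displaced from the top-$K_{\rm a}'$ list by a non-transmitted codeword is then controlled by a dependence-testing-type construction with parameter $s$: introducing the auxiliary uniform $\rv{u}$, applying the truncated union bound over the $M-K_{\rm a}$ non-transmitted codewords, and using the generalized information density $\imath_s$ of~\eqref{eq:gen_infor_den} yields the expression $\eta(s)$ in~\eqref{eq:eta_s}. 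The count $M - K_{\rm a}$ (rather than $M - 1$) reflects that the $K_{\rm a}-1$ other transmitted codewords have been absorbed into the effective noise model. Alternatively, the same change-of-measure step yields $\xi(K_{\rm a}, K_{\rm a}')$ as an upper bound on each per-user error event, so the tighter bound is $\min\{\eta(s), \xi(K_{\rm a}, K_{\rm a}')\}$.

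Assembling the pieces, the expected total \gls{MD} count conditional on $(K_{\rm a}, K_{\rm a}')$ is at most $(K_{\rm a}-K_{\rm a}')^+\xi(K_{\rm a}, K_{\rm a}') + \min\{K_{\rm a}, K_{\rm a}'\}\min\{\eta(s), \xi(K_{\rm a}, K_{\rm a}')\}$, and analogously for \glspl{FA}. Normalizing by $K_{\rm a}$ (respectively $K_{\rm a}'$) gives the per-user \gls{MD} rate (respectively per-decoded-message \gls{FA} rate), and averaging over $\rv{K}_{\rm a}$ and $K_{\rm a}'$ reproduces~\eqref{eq:md_TIN} and~\eqref{eq:fa_TIN}.

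The main obstacle I anticipate is the per-user bound in the regime $K_{\rm a}' < K_{\rm a}$: the initial-\gls{MD} budget has already been spent, yet each of the $K_{\rm a}'$ potentially-correct users must still be shown to have additional \gls{MD} probability at most $\eta(s)$. Here one cannot directly invoke the single-user DT bound, because a non-transmitted codeword displacing $\cv_{w_1}$ from the top-$K_{\rm a}'$ list must also outrank several other transmitted codewords, and the event of interest couples all users. The standard remedy is to exploit the exchangeability of the random codebook together with the symmetric role of the $K_{\rm a}$ active users, so that the per-user error probability is upper-bounded by the probability that $\cv_{w_1}$ is beaten by a single non-transmitted codeword under the TIN metric, which recovers $\eta(s)$ uniformly in $K_{\rm a}'$. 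Verifying this exchangeability step carefully, while keeping the bookkeeping of initial versus additional \glspl{MD}/\glspl{FA} consistent with the normalizations $1/K_{\rm a}$ and $1/K_{\rm a}'$, is the delicate technical point.
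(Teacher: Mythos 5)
Your proposal is correct and follows essentially the same route as the paper's proof in Appendix~\ref{app:TIN}: the same change of measure contributing $\tilde{p}$, the same split into $(K_{\rm a}-K_{\rm a}')^+$ initial and $\min\{K_{\rm a},K_{\rm a}'\}$ potential additional errors with the joint event $\{K_{\rm a}\to K_{\rm a}'\}$ bounded by the minimum of $\xi(K_{\rm a},K_{\rm a}')$ and the per-user term, and the same RCUs argument (union bound over the $M-K_{\rm a}$ non-transmitted codewords, Chernoff bound as in \cite[App.~A]{Ostman20}, and the uniform-$\rv{u}$ trick) yielding $\eta(s)$. The coupling obstacle you flag at the end is resolved in the paper exactly as you suggest: by symmetry the worst-ranked transmitted codewords are labeled as the initial misdetections, and a short contradiction argument shows that any other misdetected user must be beaten by at least one non-transmitted codeword (and, for false alarms, that $\cv_{\widehat{\rv{w}}_j}$ beats at least one transmitted codeword), after which the single-user relaxation gives $\eta(s)$ uniformly in $K_{\rm a}'$.
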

\begin{proof}
	See Appendix~\ref{app:TIN}.
\end{proof}

When $\rv{K}_{\rm a}$ is fixed to $K_{\rm a}$ and $K'_{\rm a} = K_{\rm a}$, by setting $K_\ell = K_u = K_{\rm a}$ and $s = 1$, we obtain from Theorem~\ref{thm:TIN} that
\begin{equation}
\epsilon_{\rm MD} = \epsilon_{\rm FA} = \eta(1) + 1 - \frac{M!}{M^{{K}_{\rm a}}(M-{K}_{\rm a})!} + {K}_{\rm a}  \frac{\Gamma(n,nP/P')}{\Gamma(n)}.
\end{equation}
This bound is similar to the one presented in \cite[Sec.~III]{PolyanskiyISIT2017massive_random_access}. 
To simplify the computation of $\eta(s)$, we can use the normal approximation given in \cite[Eq. (7)]{Ostman20}. Specifically, the term $\eta(s)$ can be expanded as $\eta(s) = Q\big(\frac{nI_s-\ln(M-K_{\rm a})}{\sqrt{nV_s}}\big) + O\big(\frac{1}{\sqrt{n}}\big)$
where $	I_s \defeq \mathbb{E}\left[\imath_s(\rv{x}_{i};\rv{y}_i)\right]$, $V_s \defeq \mathbb{E}\left[\vert \imath_s(\rv{x}_{i};\rv{y}_i) - I_s \rvert^2\right]$, and $Q(\cdot)$ is the Gaussian $Q$ function.
Note that, by choosing $s=1$, we obtain the normal approximation for the AWGN channel with Gaussian input, which is capacity achieving but not optimal from a dispersion perspective~\cite{Scarlett2017}. 
It then follows that 
$\eta(1)= Q\big(\frac{nC-\ln(M-K_{\rm a})}{\sqrt{nV}}\big) + O\big(\frac{1}{\sqrt{n}}\big)$, 
where $C \defeq I_1 = \ln\big(1+\frac{P'}{1+(K_{\rm a}^{\prime}-1)P'}\big)$ and $V \defeq V_1 = \frac{2 P'}{1+K_{\rm a}^{\prime}P'}$.

		\section{Numerical Experiments} \label{sec:numerical}
		We numerically evaluate the proposed random-coding bound and compare it with the random-coding bound in~\cite{PolyanskiyISIT2017massive_random_access} and with the performance of some unsourced multiple access schemes, namely, SA-MPR, TIN, and the schemes in~\cite{Fengler2019sparcs,Amalladinne2021}. We assume that $\rv{K}_{\rm a} \sim \mathrm{Pois}(\E[\rv{K}_{\rm a}])$ \revise{and the distribution of $\rv{K}_{\rm a}$ is known by the receiver}. Poisson processes are commonly used to model message arrivals in packet-data networks and result in accurate models in many realistic IoT scenarios~\cite{Metzger2019}. 
		We assume an information payload per user of $k \defeq \log_2 M = 128$ bits and consider a transmission duration of $n = 19200$ complex channel uses (i.e., $38400$ real degrees of freedom). As performance metric, we consider the average energy per bit $\EbNo \defeq nP/k$ needed to operate at given target \gls{MD} and \gls{FA} probabilities.

		In Fig.~\ref{fig:EbN0_vs_EKa}, we compare the random-coding bound provided in Theorem~\ref{thm:RCU_unknownKa} with that in~\cite[Th.~1]{PolyanskiyISIT2017massive_random_access} in terms of the required $\EbNo$ so that neither $P_{\rm MD}$ nor $P_{\rm FA}$ exceed $10^{-1}$. For our bound, we consider the \gls{ML} estimation of $\rv{K}_{\rm a}$ and set the decoding radius $r = 0$, i.e., $\overline{K_{\rm a}'} = \underline{K_{\rm a}'} = K_{\rm a}'$. Numerical experiments indicate that this choice of decoding radius is optimal for the targeted MD and FA probabilities.\footnote{We discuss the trade-offs resulting from the choice of the decoding radius later on in this section.} We choose $K_\ell$ to be the largest value and $K_u$ the smallest value for which $\P[{\rv{K}_{\rm a} \notin [K_\ell,K_u]}] < 10^{-9}$. Due to computational complexity, the terms $q_t$ and $q_{t,t'}$ are evaluated for $t = 1$ and $K_{\rm a} \le 50$ only. For the bound in \cite[Th.~1]{PolyanskiyISIT2017massive_random_access}, we average over the Poisson distribution of $\rv{K}_{\rm a}$. This corresponds to the scenario in which $\rv{K}_{\rm a}$ is random but known to the receiver. 
		As can be seen, similar to the bound in~\cite[Th.~1]{PolyanskiyISIT2017massive_random_access}, our bound exhibits two behaviors as a function of the average number of active users: one where the required $\EbNo$ is almost constant and finite-blocklength effect dominates, and one where it grows with the average number of active users and multi-user interference dominates. 
		To achieve $\max\{P_{\rm MD}, P_{\rm FA}\} \le 10^{-1}$, the extra required $\EbNo$ due to the lack of knowledge of $\rv{K}_{\rm a}$ is about $0.5$--$0.7$~dB. 
		This small gap suggests that for the mild target $\max\{P_{\rm MD}, P_{\rm FA}\} \le 10^{-1}$, lack of knowledge of the number of active users entails a small penalty in terms of energy efficiency. 
		
		In Fig.~\ref{fig:EbN0_vs_EKa}, we also show the performance of the SA-MPR bound given in Corollary~\ref{coro:slotted_ALOHA} where we 
		optimize $L$ and the decoding radius for each $\E[\rv{K}_{\rm a}]$. We further consider the possibility to encode~$\lfloor \log_2 L\rfloor$ extra bits for each user in the slot index, and assume perfect decoding of these bits. We refer to this scheme as SA-MPR with slot-index coding. Furthermore, we plot the RCUs bound on the performance of \gls{TIN} given in Theorem~\ref{thm:TIN} optimized over $s$, and its normal approximation evaluated with $s = 1$. 
		We also evaluate the performance of two state-of-the-art practical schemes, namely:
		\begin{itemize}
			
			\item the SPARC scheme proposed in~\cite{Fengler2019sparcs}, which employs a concatenated coding structure, consisting of an inner approximate message passing~(AMP) decoder followed by an outer tree decoder;
			
			\item an enhancement, proposed in~\cite{Amalladinne2021}, of the SPARC scheme in~\cite{Fengler2019sparcs}, where belief propagation between the inner AMP decoder and the outer tree decoder is introduced. 
		\end{itemize}
		Note that 
		the SPARC and enhanced SPARC schemes were proposed for the Gaussian \gls{MAC} with {known} number of active users. To adapt these schemes to the case of unknown $\rv{K}_{\rm a}$, we follow the approach discussed in Remark~\ref{remark:est_then_dec}: we first employ an energy-based estimation of $\rv{K}_{\rm a}$, and then treat this estimate as the true $\rv{K}_{\rm a}$ in the decoding process. As shown in Fig.~\ref{fig:EbN0_vs_EKa}, \gls{TIN} (light blue) and SA-MPR (pink), even with slot-index coding, become energy inefficient as $\E[\rv{K}_{\rm a}]$ increases. 
		The enhanced SPARC scheme (orange) achieves the closest performance to our bound for $\E[\rv{K}_{\rm a}] \ge 100$ and outperforms the original SPARC scheme by about $0.5$~dB for large $\E[\rv{K}_{\rm a}]$. The performance of enhanced SPARC with known $\rv{K}_{\rm a}$  is only slightly better than that with unknown $\rv{K}_{\rm a}$. This suggests that, for the considered mild requirements on $P_{\rm MD}$ and $P_{\rm FA}$, it is effective to adapt existing coding schemes that perform well for known $\rv{K}_{\rm a}$ by simply adding a $\rv{K}_{\rm a}$-estimation step, and then treating the estimate of $\rv{K}_{\rm a}$ as the true value.
		
		\begin{figure*}[t!]
			\centering
			\input{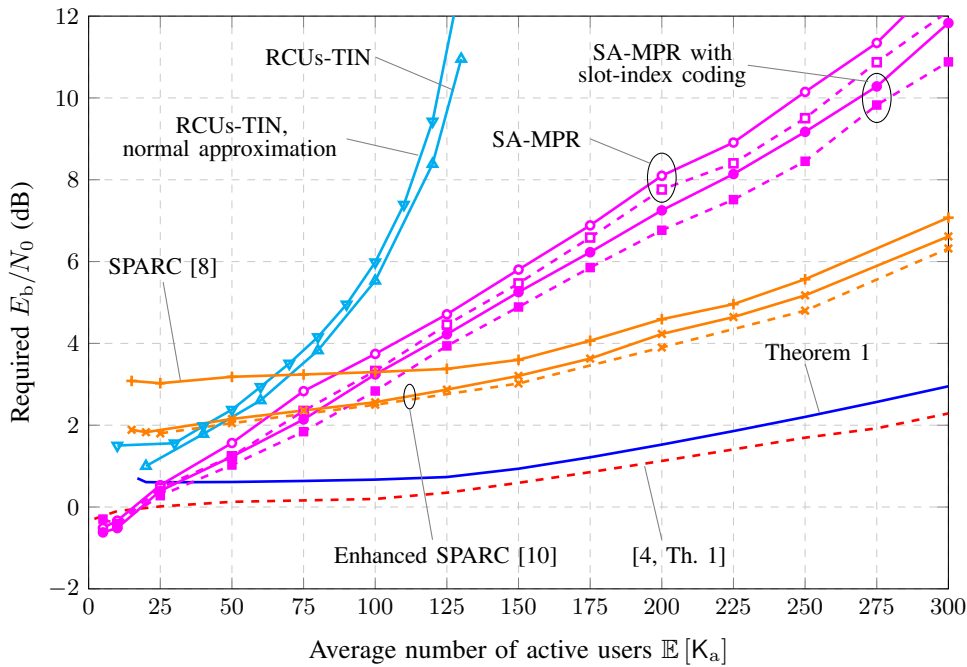}
			\caption{The required $\EbNo$ to achieve $\max\{P_{\rm MD}, P_{\rm FA}\} \le 10^{-1}$ as a function of $\E[\rv{K}_{\rm a}]$ for $k=128$ bits, $n = 19200$ channel uses, and $\rv{K}_{\rm a} \sim \mathrm{Pois}(\E[\rv{K}_{\rm a}])$. 
				Solid lines represent schemes/bounds with $\rv{K}_{\rm a}$ unknown; dashed lines represent schemes/bounds with $\rv{K}_{\rm a}$ known.}
			\label{fig:EbN0_vs_EKa}
		\end{figure*}
		
		In Fig.~\ref{fig:Pe_vs_EbN0}, we plot the bounds on $P_{\rm MD}$ and $P_{\rm FA}$ in Theorem~\ref{thm:RCU_unknownKa} as a function of $\EbNo$ for $\E[\rv{K}_{\rm a}] \in \{50,200\}$, and different decoding radius $r$. We also show the bound in~\cite[Th.~1]{PolyanskiyISIT2017massive_random_access} for $\rv{K}_{\rm a}$ known. 
		The \gls{MD} and \gls{FA} probabilities drop at a certain $\EbNo$, and then saturate to error floors. The existence of a waterfall region and an error-floor region is more evident in the case $\E[\rv{K}_{\rm a}] = 200$. The error floors are due to the initial \glspl{MD} and \glspl{FA}, and are characterized by $\bar{\epsilon}_{\rm MD}$ and $\bar{\epsilon}_{\rm FA}$ in \revisee{Corollary}~\ref{cor:error_floor}. \revise{The ensemble converse in Theorem~\ref{th:converse} computed for the i.i.d. Gaussian ensemble with average power $P$ and for the considered decoding radii is also depicted. We observe that the converses for the MD and FA probabilities are similar, and follow closely our achievability bounds after the waterfall. This explains the waterfall effect: the additional MDs and FAs occurring in the decoding process dominate when the $\EbNo$ is small, while the initial MDs and FAs dominate in the error-floor regime.}
		We further observe that setting $r =0$  leads to $\epsilon_{\rm MD} = \epsilon_{\rm FA}$ and  turns out optimal in the low $\EbNo$ regime, where noise overfitting is the  bottleneck. 
		This is in agreement with the results reported in Fig.~\ref{fig:EbN0_vs_EKa}. An increase of the decoding radius results in a performance improvement in the moderate and high $\EbNo$ regime, where setting $r =0$ yields high error floors. 
		Our numerical results indicate that zero decoding radius should be used when the target \gls{MD} and \gls{FA} probabilities are higher than about \revise{$8\times 10^{-3}$ for $\E[\rv{K}_{\rm a}] = 50$ and $6.5\times 10^{-3}$ for $\E[\rv{K}_{\rm a}] = 200$}. \revise{Otherwise, if the target  \gls{MD} and \gls{FA} probabilities are low, one should increase the decoding radius to lower the error floors and thus meet the requirements in the waterfall regime.} Remarkably, as illustrated in Fig.~\ref{fig:Pe_vs_EbN0}, there is a sharp difference between the case $r = 0$ and the case $r > 0$ in terms of the $\EbNo$ value at which the waterfall region starts. Moreover, this $\EbNo$ value appears to coincide for all positive~$r$.\footnote{Characterizing analytically this $\EbNo$ value is challenging since $\epsilon_{\rm MD}$ and $\epsilon_{\rm FA}$ are given in terms of sums in which each addend involves the optimization of the three intermediate parameters $(\lambda, \rho, \rho_1)$.}

		\begin{figure*}[t!]
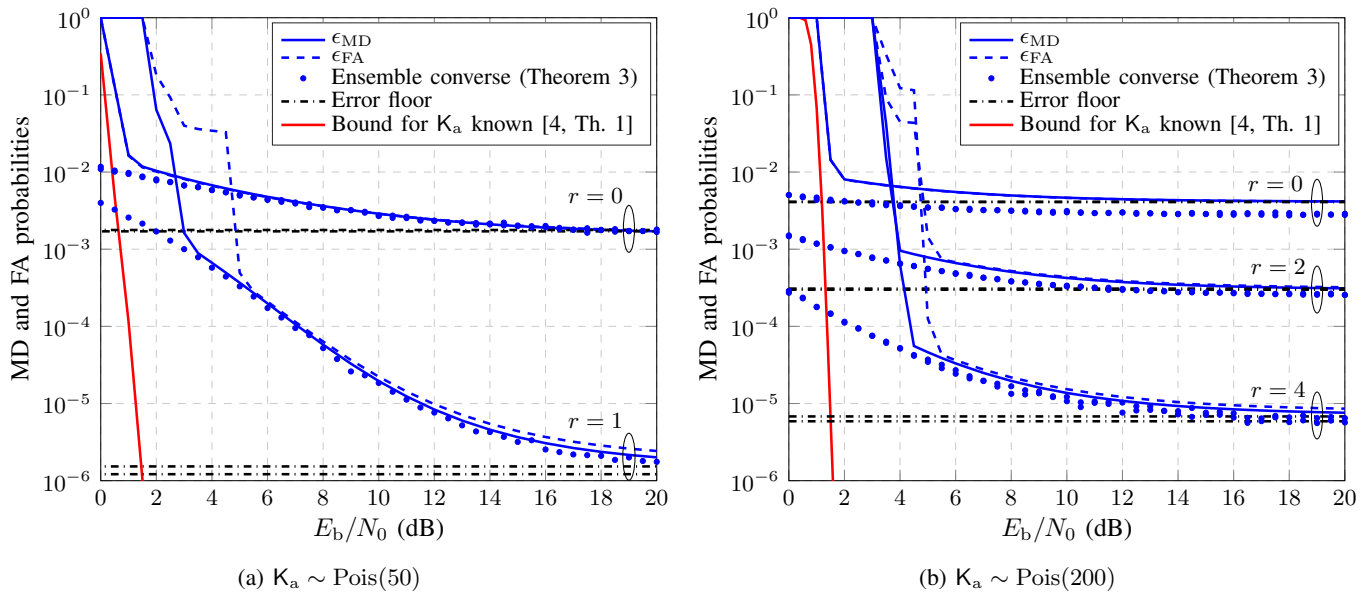

			\centering
            \begin{subfigure}[t]{.48\textwidth}
                \input{fig/Pe_vs_EbN0_journal_k128_EKa50.tex}
                \caption{$\rv{K}_{\rm a} \sim \mathrm{Pois}(50)$}
            \end{subfigure}
            \hspace{.2cm}
            \begin{subfigure}[t]{.48\textwidth}
                \input{fig/Pe_vs_EbN0_journal_k128_EKa200.tex}
                \caption{$\rv{K}_{\rm a} \sim \mathrm{Pois}(200)$}
            \end{subfigure}
			\caption{The bounds~\eqref{eq:eps_MD} and~\eqref{eq:eps_FA} on the \gls{MD} and \gls{FA} probabilities as functions of $\EbNo$ for $k=128$ bits, $n = 19200$ channel uses, and some values of $r$, as well as their large-$P$ limits given in~\eqref{eq:eps_MD_floor} and~\eqref{eq:eps_FA_floor}, \revise{and the ensemble converse given in Theorem~\ref{th:converse}.} 
			}
			\label{fig:Pe_vs_EbN0}
		\end{figure*}
		
		When asymmetric requirements on $P_{\rm MD}$ and $P_{\rm FA}$ are considered, one can adjust the decoding interval $[\underline{K_{\rm a}'}: \overline{K_{\rm a}'}]$ in~\eqref{eq:decoder_Ka'} to achieve different \gls{MD}-\gls{FA} trade-offs. Specifically, one can set $\underline{K_{\rm a}'} =\max\{K_\ell,K_{\rm a}'-r_\ell\}$ and $\overline{K_{\rm a}'}\defeq \min\{K_u,K_{\rm a}'+r_u\}$ where $(r_\ell,r_u)$ is a pair of lower and upper decoding radii. The random-coding bound in Theorem~\ref{thm:RCU_unknownKa} can be extended accordingly. In Fig.~\ref{fig:EbN0_vs_EKa_2}, we consider this extension and show the required $\EbNo$ to achieve more stringent requirements on $P_{\rm MD}$ and $P_{\rm FA}$, namely, $\{P_{\rm MD} \le 10^{-1}, P_{\rm FA} \le 10^{-3}\}$, $\{P_{\rm MD} \le  10^{-3}, P_{\rm FA} \le 10^{-1}\}$, and $\max\{P_{\rm MD}, P_{\rm FA}\} \le  10^{-3}$. We increase $r_\ell$ and $r_u$ such that the error floors lie below these target \gls{MD} and \gls{FA} probabilities, and no significant reduction in $\EbNo$ can be achieved by increasing $r_\ell$ and $r_u$ further. For example, for $\{P_{\rm MD} \le 10^{-1}, P_{\rm FA} \le 10^{-3}\}$, $\{P_{\rm MD} \le  10^{-3}, P_{\rm FA} \le 10^{-1}\}$, and $\max\{P_{\rm MD}, P_{\rm FA}\} \le  10^{-3}$, we respectively set $(r_\ell,r_u) = (\revise{2},0)$, $(0,\revise{2})$, and $(\revise{2,2})$ for $\E[\rv{K}_{\rm a}] = 50$, and $(r_\ell,r_u) = (6,1)$, $(1,6)$, and $(6,6)$ for $\E[\rv{K}_{\rm a}] = 200$. We observe that the required $\EbNo$ is almost the same for the two requirements with $P_{\rm FA} \le 10^{-3}$, and it is slightly lower for the requirement with $P_{\rm FA} \le 10^{-1}$. We also plot the bound for $\rv{K}_{\rm a}$ known~\cite[Th.~1]{PolyanskiyISIT2017massive_random_access}. In contrast to the scenario in Fig.~\ref{fig:EbN0_vs_EKa}, where the gap between our bound and the bound for $\rv{K}_{\rm a}$ known was small, the gap is now significantly larger. Specifically, our bound suggests that one needs additional $3$--$4$~dB in $\EbNo$ when $P_{\rm MD}$ and/or $P_{\rm FA}$ are required to be lower than $10^{-3}$ \revise{when our random-coding scheme is employed}. This gap tends to be smaller as $\E[\rv{K}_{\rm a}]$ increases. \revise{It remains unclear whether this gap is fundamental.} The slope of the curves for the bound in Theorem~\ref{thm:RCU_unknownKa} suggests that the dominating factor for $\E[\rv{K}_{\rm a}] \le 300$ is still the finite-blocklength effect, and the multi-user interference will kick in for higher $\E[\rv{K}_{\rm a}]$.

        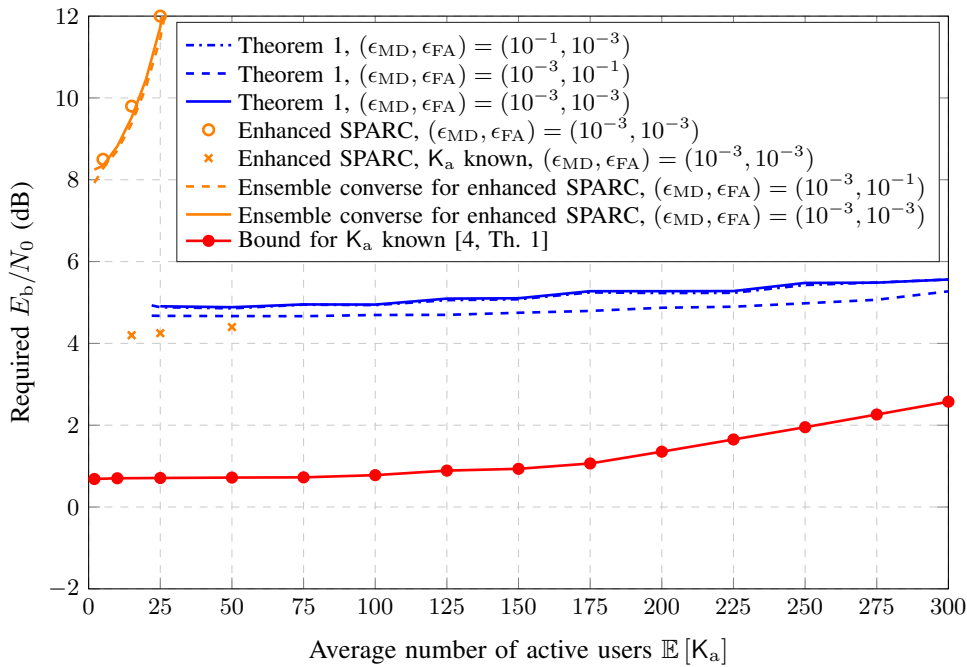
\begin{figure*}[t!]
			\centering
			\definecolor{magenta}{rgb}{1.00000,0.00000,1.00000}%
	\begin{tikzpicture}
		\tikzstyle{every node}=[font=\small]
		\begin{axis}[%
			width=4.5in,
			height=3in,
			at={(0.759in,0.481in)},
			scale only axis,
			xmin=0,
			xmax=300,
            xtick={0, 25, 50, 75, 100, 125, 150, 175,200,225,250,275,300},
			xlabel style={font=\color{black},yshift=0ex},
			xlabel={Average number of active users $\E[\rv{K}_{\rm a}]$},
			ymin=-2,
			ymax=12,
			ytick={-2, 0, 2, 4, 6, 8, 10, 12},
			ylabel style={font=\color{black}, yshift=-2.2ex},
			ylabel={Required $\EbNo$ (dB)},
			axis background/.style={fill=white},
			xmajorgrids,
			ymajorgrids,
			legend style={at={(0.99,0.99)}, anchor=north east, row sep=-2.5pt, legend cell align=left, align=left, draw=white!15!black}
			]
			
			\addplot [color=blue,line width=1pt,dashdotted]
			table[row sep=crcr]{%
				22 4.928994778915966 \\
				25 4.883252396304738 \\
				50 4.856140339766785 \\ 
				75 4.946079896204276 \\ 
				100 4.935671561425632 \\ 
				125 5.056421838879606 \\
				150 5.077527854338077 \\
				175 5.245611704881298 \\ 
				200 5.230443873021846 \\ 
				225 5.235505705256200 \\ 
				250 5.426838602549897 \\
				275 5.487727168692260 \\
				300 5.561874664391777 \\
			};
					\addlegendentry{Theorem~\ref{thm:RCU_unknownKa}, $(\epsilon_{\rm MD},\epsilon_{\rm FA}) = (10^{-1},10^{-3})$}
			
			\addplot [color=blue,line width=1pt,dashed]
			table[row sep=crcr]{%
				22 4.676932132301527 \\
				25 4.672843764355430 \\
				50 4.667531759234577 \\ 
				75 4.664903366173185 \\ 
				100 4.694377737914646 \\ 
				125 4.697463931851004 \\
				150 4.748431792905569 \\
				175 4.793654082315352 \\ 
				200 4.871127632415096 \\ 
				225 4.896526972534520 \\ 
				250 4.979985341559578 \\ 
				275 5.066310360305302 \\
				300 5.272446628963706 \\
			};
			\addlegendentry{Theorem~\ref{thm:RCU_unknownKa}, $(\epsilon_{\rm MD},\epsilon_{\rm FA}) = (10^{-3},10^{-1})$}
		
			\addplot [color=blue,line width=1pt]
			table[row sep=crcr]{%
				25 4.906919347922863 \\
				50 4.884570615982140 \\ 
				75 4.952572460938979 \\ 
				100 4.948678086955543 \\ 
				125 5.094263011065712 \\
				150 5.103680413028469 \\
				175 5.275789365752891 \\ 
				200 5.275789365752891 \\ 
				225 5.279129531637679 \\ 
				250 5.478169756594423 \\ 
				275 5.484543701061485 \\
				300 5.561874664391777 \\
			};
			\addlegendentry{Theorem~\ref{thm:RCU_unknownKa}, $(\epsilon_{\rm MD},\epsilon_{\rm FA}) = (10^{-3},10^{-3})$}
			
			\addplot [color=orange, only marks, mark = o, line width=1pt]
			table[row sep=crcr]{%
				5  8.5 \\   
				15 9.8 \\
				25  12 \\ 
			};
			\addlegendentry{Enhanced SPARC, $(\epsilon_{\rm MD},\epsilon_{\rm FA}) = (10^{-3},10^{-3})$}
			
			\addplot [color=orange, only marks, mark = x, line width=1pt]
			table[row sep=crcr]{%
				15 4.2 \\
				25  4.25 \\
				50 4.4 \\ 
			};
			\addlegendentry{Enhanced SPARC, $\rv{K}_{\rm a}$ known, $(\epsilon_{\rm MD},\epsilon_{\rm FA}) = (10^{-3},10^{-3})$}
			
			\addplot [color=orange, dashed, line width=1pt]
			table[row sep=crcr]{%
				2  7.9215 \\
				5  8.2763 \\   
				10 8.7142 \\
				15 9.3732 \\
				20 10.2603 \\   
				25  11.4632 \\ 
				30 13.2158 \\
			};
			\addlegendentry{\revisee{Ensemble converse} for enhanced SPARC, $(\epsilon_{\rm MD},\epsilon_{\rm FA}) = (10^{-3},10^{-1})$}
			
			\addplot [color=orange,line width=1pt]
			table[row sep=crcr]{%
				2  8.2474 \\
				5  8.3430 \\   
				10 8.8171 \\
				15 9.5198 \\
				20 10.4506 \\   
				25 11.7312 \\ 
				30 13.6187 \\
			};
			\addlegendentry{\revisee{Ensemble converse} for enhanced SPARC, $(\epsilon_{\rm MD},\epsilon_{\rm FA}) = (10^{-3},10^{-3})$}

			\addplot [color=red, line width=1pt, mark=*,mark size = 1.8pt]
			table[row sep=crcr]{%
				2	0.685587245114299 \\
				10  0.702936368329330 \\
				25	0.709835116917573 \\ 
				50	0.720162735783853 \\ 
				75	0.725317350053997 \\ 
				100	0.779180613236050 \\ 
				125	0.890422968939304 \\ 
				150	0.934011404226458 \\ 
				175	1.064594603797970 \\ 
				200	1.352606600228537 \\ 
				225	1.651410782545691 \\ 
				250	1.951414699988889 \\ 
				275	2.261211717237971 \\ 
				300	2.575119173921741 \\
			};
					\addlegendentry{Bound for $\rv{K}_{\rm a}$ known~\cite[Th.~1]{PolyanskiyISIT2017massive_random_access}}

			
			

			
			
		\end{axis}
	\end{tikzpicture}%
			\caption{The required $\EbNo$ to achieve $\{P_{\rm MD} \le 10^{-1}, P_{\rm FA} \le  10^{-3}\}$, $\{P_{\rm MD} \le  10^{-3}, P_{\rm FA} \le 10^{-1}\}$, or $\max\{P_{\rm MD}, P_{\rm FA}\} \le  10^{-3}$
				as a function of $\E[\rv{K}_{\rm a}]$ for $k=128$ bits, $n = 19200$ channel uses, and $\rv{K}_{\rm a} \sim \mathrm{Pois}(\E[\rv{K}_{\rm a}])$. 
			}
			\label{fig:EbN0_vs_EKa_2}
		\end{figure*}
        \begin{figure*}[t!]
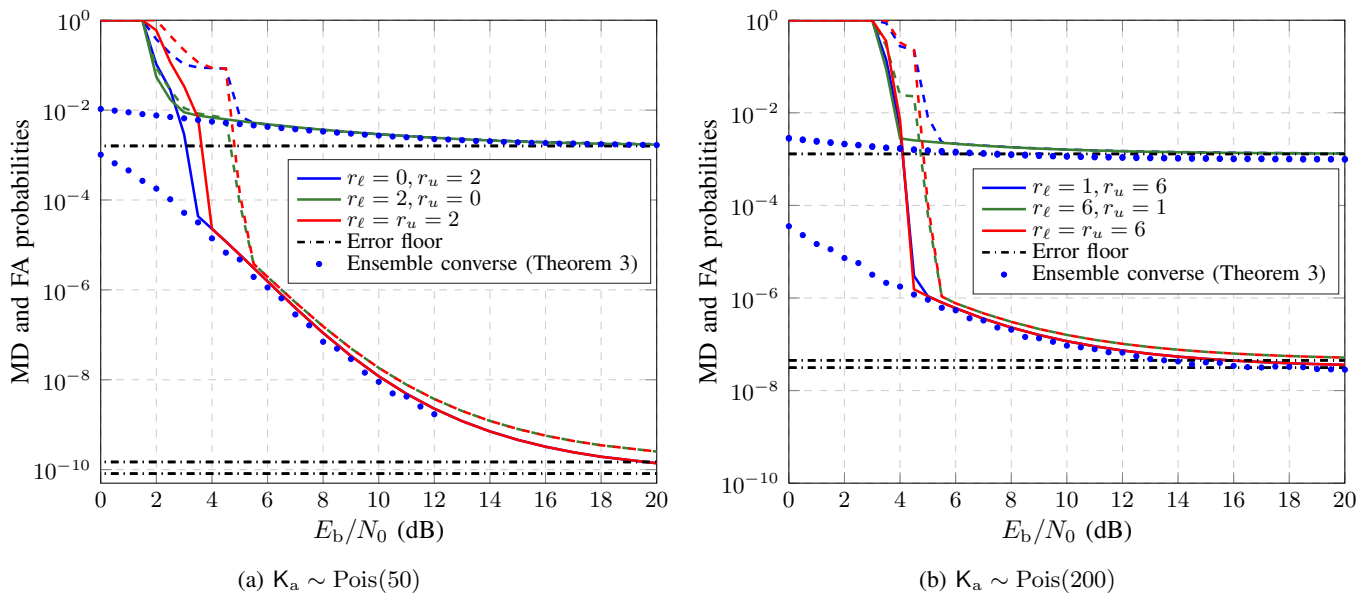

			\centering
            \begin{subfigure}[t]{.48\textwidth}
                \input{fig/Pe_vs_EbN0_journal_k128_EKa50_2.tex}
                \caption{$\rv{K}_{\rm a} \sim \mathrm{Pois}(50)$}
            \end{subfigure}
            \hspace{.2cm}
			\begin{subfigure}[t]{.48\textwidth}
                \input{fig/Pe_vs_EbN0_journal_k128_EKa200_2.tex}
                \caption{$\rv{K}_{\rm a} \sim \mathrm{Pois}(200)$}
            \end{subfigure}
			\caption{The bounds on the \gls{MD} and \gls{FA} probabilities as a function of $\EbNo$ for $k=128$ bits and $n = 19200$ channel uses with $r_\ell$ and $r_u$ selected corresponding to Fig.~\ref{fig:EbN0_vs_EKa_2}. 
				Solid lines represent $\epsilon_{\rm MD}$; dashed lines represent $\epsilon_{\rm FA}$. 
			}
			\label{fig:Pe_vs_EbN0_2}
		\end{figure*}
		In Fig.~\ref{fig:EbN0_vs_EKa_2}, we also show the required $\EbNo$ for the enhanced SPARC scheme with $\rv{K}_{\rm a}$ known/unknown. We further plot a lower bound on this $\EbNo$ value resulting from \revise{evaluating the right-hand sides of of~\eqref{eq:converse_MD} and~\eqref{eq:converse_FA} for the distribution on the received signal $\rvVec{y}$ induced by the SPARC codebook and for $r=0$.\footnote{\revise{Since our ensemble converse bound is not tight in the waterfall regime, the resulting lower bound on the required $\EbNo$ with increased decoding radius is loose and thus not plotted in Fig.~3.}} Recall that the lower bound is obtained by counting the \glspl{MD} and \glspl{FA} that occur in the estimation step only, while assuming that no additional \gls{MD} or \gls{FA} occurs in the decoding step.}
		Recall \revise{also} that we adapt the enhanced SPARC scheme to the case of unknown $\rv{K}_{\rm a}$ by simply treating the estimate of $\rv{K}_{\rm a}$ as the true $\rv{K}_{\rm a}$. The performance of the enhanced SPARC scheme with $\rv{K}_{\rm a}$ unknown is very close to the lower bound and drastically worse than enhanced SPARC with $\rv{K}_{\rm a}$ known. This confirms that the estimation step is indeed the bottleneck. 
		Furthermore, the lower bound for enhanced SPARC exhibits a large gap to the achievability bound in Theorem~\ref{thm:RCU_unknownKa}. For example, the gap is about $7$~dB for only $25$ active users in average. This large gap suggests that this approach, which simply uses the estimate of $\rv{K}_{\rm a}$ to set the decoded list size and relies on existing coding schemes proposed for $\rv{K}_{\rm a}$ known, becomes energy inefficient for stringent requirements on $P_{\rm MD}$ and/or $P_{\rm FA}$. This calls for more sophisticated methods to overcome the bottleneck of estimating $\rv{K}_{\rm a}$ and handle effectively the uncertainty about the number of active users.
		
		In Fig.~\ref{fig:Pe_vs_EbN0_2}, we plot the bounds $\epsilon_{\rm MD}$ and $\epsilon_{\rm FA}$ in Theorem~\ref{thm:RCU_unknownKa} as a function of $\EbNo$ for $\E[\rv{K}_{\rm a}] \in \{50,200\}$ for the decoding radii considered in Fig.~\ref{fig:EbN0_vs_EKa_2}. Solid lines represent $\epsilon_{\rm MD}$, while dashed lines represent $\epsilon_{\rm FA}$. We observe again that the waterfall region of either $\epsilon_{\rm MD}$ or $\epsilon_{\rm FA}$ starts at a similar $\EbNo$ value for various values of $(r_\ell,r_u)$ different from $(0,0)$. \revise{We also observe that, after the waterfall, our achievability bounds approach closely the ensemble converse in Theorem~\ref{th:converse}.} The $\EbNo$ value that satisfies the requirements on both $P_{\rm MD}$ and $P_{\rm FA}$ is dictated by the $\EbNo$ value that satisfy the requirement on $P_{\rm FA}$.
		This explains why the $\EbNo$ values according to Theorem~\ref{thm:RCU_unknownKa}, shown in Fig.~\ref{fig:EbN0_vs_EKa_2}, is similar for the two requirements with $P_{\rm FA} \le 10^{-3}$. 

		\section{Discussion} \label{sec:discussion}
		In this section, we provide some additional remarks on our choice of using a two-step decoder in our random-coding achievability bound, and the challenges involved in obtaining a general converse bound. 
  
		\subsection{The Two-Step Decoder} \label{sec:discussion_decoder}
		An alternative to the proposed two-step decoder, which first estimates the number of active users and then the list of messages, is a joint decoder that estimates both at the same time. Such a joint decoder operates according to the following rule 
		\begin{equation} \label{eq:joint_decoder}
			\widehat{\Wc} = \arg\min_{{\Wc}' \subset [M]: K_\ell \le |{\Wc}'| \le K_u} \|c({\Wc}') - \rvVec{y}\|^2
		\end{equation}   
        where the limits $K_\ell$ and $K_u$ are chosen based on prior knowledge on the distribution of $\rv{K}_{\rm a}$. Note that this decoder is a special case of our two-step decoder when i) the $\rv{K}_{\rm a}$-estimation step~\eqref{eq:est_Ka} is skipped, and ii) $\underline{K_{\rm a}'} = K_\ell$ and $\overline{K_{\rm a}'} = K_u$ in the message-decoding step~\eqref{eq:decoder_Ka'}. Thus, a random-coding bound for the joint decoder~\eqref{eq:joint_decoder} follows directly from Theorem~\ref{thm:RCU_unknownKa}. This bound is stated in the following corollary.
        \begin{corollary}[Random-coding bound for the joint decoder] \label{coro:RCU_joint_decoder}
            Fix $P' < P$, $K_\ell$, and $K_{u}$ ($K_\ell \le K_{u}$). For the $\rv{K}_{\rm a}$-user Gaussian \gls{MAC} with $\rv{K}_{\rm a} \sim P_{\rv{K}_{\rm a}}$, there exists an $(M,n,\epsilon_{\rm MD},\epsilon_{\rm FA})$ random-access code satisfying the power constraint $P$ for which 
    		\begin{align}
    			\epsilon_{\rm MD} &= \sum_{K_{\rm a} =\max\{K_\ell,1\}}^{K_{u}} P_{\rv{K}_{\rm a}}(K_{\rm a}) \sum_{t = 0}^{K_{\rm a}} \frac{t}{K_{\rm a}} 
    			\min\{p_t,q_t\}  + \tilde{p}, \label{eq:eps_MD_jointDec}\\
    			\epsilon_{\rm FA} &= \sum_{K_{\rm a} = K_\ell}^{K_{u}} P_{\rv{K}_{\rm a}}(K_{\rm a}) \sum_{t = 0}^{K_{\rm a}} \sum_{t' \in \Tc_t}  
    			\frac{t'}{K_{\rm a} - t + t'} \min\{p_{t,t'}, q_{t,t'}\}  \notag \\
                &\quad + \tilde{p}, \label{eq:eps_FA_jointDec}
    		\end{align}	
            where $\tilde{p}$, $p_t$, $p_{t,t'}$, $q_t$, and  $q_{t,t'}$ are given by~\eqref{eq:p0}, \eqref{eq:pt}, \eqref{eq:ptt}, \eqref{eq:qt}, and~\eqref{eq:qtt}, respectively, with $\underline{K_{\rm a}'} = K_\ell$ and $\overline{K_{\rm a}'} = K_u$.
        \end{corollary}
        
        Unfortunately, the bound in Corollary~\ref{coro:RCU_joint_decoder} results in a low energy efficiency. For example, for a similar setting as in Fig.~\ref{fig:EbN0_vs_EKa}, i.e., $k = 128$ bits, $n = 19200$ channel uses, and $\rv{K}_{\rm a} \sim {\rm Pois}(\E[\rv{K}_{\rm a}])$, to achieve $\max\{P_{\rm MD}, P_{\rm FA}\} \le 10^{-1}$, the required $\EbNo$ for the joint decoder is between $4.5$~dB and $5$~dB for $\E[\rv{K}_{\rm a}] \in [25:300]$. To understand the drawback of this bound, we shall now inspect the terms $p_t$ and $p_{t,t'}$. 
        Recall that $p_t = \sum_{t' \in \overline{\Tc}_t} p_{t,t'}$ and $p_{t,t'} = e^{-n E(t,t')}$ with the error exponent $E(t,t') = -\rho\rho_1 t' R_1 - \rho_1 R_2 + E_0(\rho,\rho_1)$. Here, $E_0(\rho,\rho_1)$ stems from the Chernoff bound on the probability of the pairwise error event $\|c({\Wc}') - \rvVec{y}\|^2 \le \|c({\Wc}) - \rvVec{y}\|^2$, while $R_1$ and $R_2$ stem from a tightened union bound over all possible sets of $t'$ falsely alarmed messages and $t$ misdetected messages, respectively. These terms scale differently with $t'$, making it nontrivial to understand how $p_{t,t'}$ varies with $t'$. Specifically, for fixed $t,P'$, and $M$, the term $-\rho_1R_2$ is a constant, $E_0(\rho,\rho_1)$ increases logarithmically with $t'$, and $-\rho \rho_1 t' R_1$ decreases linearly with~$t'$.

        \begin{figure*}
            \centering
                \input{fig/jointDec.tex}
            \caption{The bounds $p_{t,t'}$ in~\eqref{eq:ptt} and $p_{t}$ in~\eqref{eq:pt} applied to the joint decoder~\eqref{eq:joint_decoder} for $k = 128$ bits, $n = 19200$ channel uses, $K_{\rm a} = \E[\rv{K}_{\rm a}] = 50$, and $\EbNo = 2$~dB. Here, $K_\ell = (\E[\rv{K}_{\rm a}] - r)^+$ and $K_u = \E[\rv{K}_{\rm a}] + r$; $r_0$ is the largest value of $r$ such that $\P[{\rv{K}_{\rm a} \notin [K_\ell:K_u]}] < 10^{-9}$ with $\rv{K}_{\rm a} \sim {\rm Pois}(50)$.}
            \label{fig:jointDec}
        \end{figure*}

        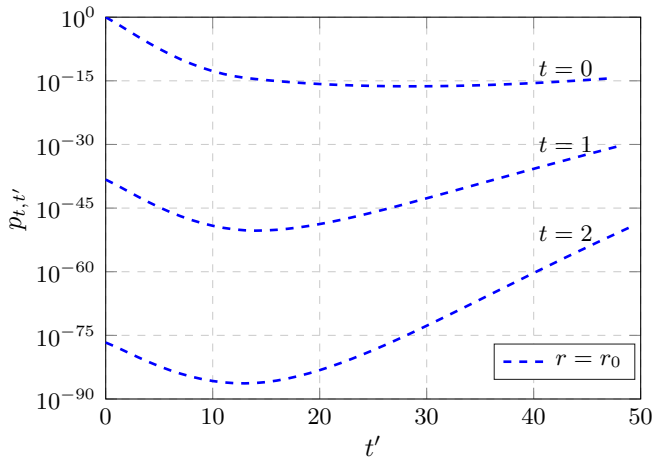
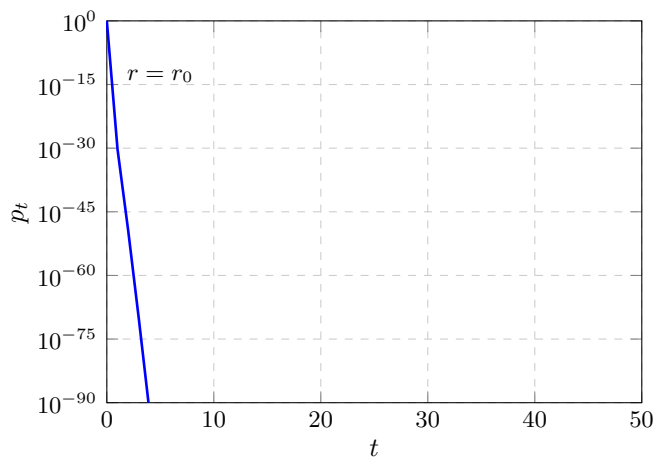
\begin{figure*}
                \begin{subfigure}[t]{.48\textwidth}
            \centering
            \begin{tikzpicture}
        		\tikzstyle{every node}=[font=\small]
        		\begin{axis}[%
        			width=2.8in,
        			height=2in,
        			at={(0.759in,0.481in)},
        			scale only axis,
        			xmin=0,
        			xmax=50,
        			xlabel style={font=\color{black},yshift=1ex},
        			xlabel={$t'$},
        			ymin=1e-90,
        			ymax=1,
        			ymode = log,
        			yminorticks=true,
        			ytick={1e0, 1e-15, 1e-30, 1e-45, 1e-60, 1e-75, 1e-90},
        			ylabel style={font=\color{black}, yshift=-.5ex},
        			ylabel={$p_{t,t'}$},
        			axis background/.style={fill=white},
        			xmajorgrids,
        			ymajorgrids,
        			yminorgrids,
        			legend style={at={(0.99,0.05)}, anchor=south east, row sep=-2.5pt, legend cell align=left, align=left, draw=white!15!black}
        			]		
                    \addplot [color=blue,line width=1pt,dashed]
        			table[row sep=crcr]{%
                        0 1.0000e+00 \\ 
1 3.5258e-02 \\ 
2 9.6449e-04 \\ 
3 2.7180e-05 \\ 
4 8.8408e-07 \\ 
5 3.5260e-08 \\ 
6 1.7890e-09 \\ 
7 1.1830e-10 \\ 
8 1.0368e-11 \\ 
9 1.2188e-12 \\ 
10 1.9382e-13 \\ 
11 4.1964e-14 \\ 
12 1.2427e-14 \\ 
13 5.0500e-15 \\ 
14 2.6291e-15 \\ 
15 1.4654e-15 \\ 
16 8.6777e-16 \\ 
17 5.4024e-16 \\ 
18 3.5444e-16 \\ 
19 2.4387e-16 \\ 
20 1.7553e-16 \\ 
21 1.3189e-16 \\ 
22 1.0321e-16 \\ 
23 8.3983e-17 \\ 
24 7.0948e-17 \\ 
25 6.2121e-17 \\ 
26 5.6306e-17 \\ 
27 5.2761e-17 \\ 
28 5.1051e-17 \\ 
29 5.0950e-17 \\ 
30 5.2395e-17 \\ 
31 5.5470e-17 \\ 
32 6.0379e-17 \\ 
33 6.7543e-17 \\ 
34 7.7553e-17 \\ 
35 9.1349e-17 \\ 
36 1.1029e-16 \\ 
37 1.3636e-16 \\ 
38 1.7255e-16 \\ 
39 2.2330e-16 \\ 
40 2.9528e-16 \\ 
41 3.9871e-16 \\ 
42 5.4941e-16 \\ 
43 7.7209e-16 \\ 
44 1.1057e-15 \\ 
45 1.6126e-15 \\ 
46 2.3937e-15 \\ 
47 3.6138e-15 \\ 
                    };
                    \addlegendentry{$r = r_0$};

                    \addplot [color=blue,line width=1pt,dashed]
        			table[row sep=crcr]{%
                        0 5.1806e-39 \\ 
1 2.8344e-40 \\ 
2 1.1981e-41 \\ 
3 5.1962e-43 \\ 
4 2.5905e-44 \\ 
5 1.5772e-45 \\ 
6 1.2168e-46 \\ 
7 1.2188e-47 \\ 
8 1.6118e-48 \\ 
9 2.8480e-49 \\ 
10 6.7827e-50 \\ 
11 2.1911e-50 \\ 
12 9.6464e-51 \\ 
13 5.8072e-51 \\ 
14 4.7919e-51 \\ 
15 5.4287e-51 \\ 
16 7.6909e-51 \\ 
17 1.3186e-50 \\ 
18 2.6814e-50 \\ 
19 6.3422e-50 \\ 
20 1.7120e-49 \\ 
21 5.1823e-49 \\ 
22 1.7307e-48 \\ 
23 6.2826e-48 \\ 
24 2.4458e-47 \\ 
25 1.0088e-46 \\ 
26 4.3611e-46 \\ 
27 1.9573e-45 \\ 
28 9.0464e-45 \\ 
29 4.2763e-44 \\ 
30 2.0559e-43 \\ 
31 1.0010e-42 \\ 
32 4.9214e-42 \\ 
33 2.4388e-41 \\ 
34 1.2176e-40 \\ 
35 6.1294e-40 \\ 
36 3.0968e-39 \\ 
37 1.5105e-38 \\ 
38 7.4268e-38 \\ 
39 3.6995e-37 \\ 
40 1.8178e-36 \\ 
41 8.6663e-36 \\ 
42 4.2286e-35 \\ 
43 2.0209e-34 \\ 
44 9.4390e-34 \\ 
45 4.5918e-33 \\ 
46 2.0593e-32 \\ 
47 9.6874e-32 \\ 
48 4.3450e-31 \\ 
                    };

                    \addplot [color=blue,line width=1pt,dashed]
        			table[row sep=crcr]{%
                        0 2.0407e-77 \\ 
1 1.7253e-78 \\ 
2 1.1224e-79 \\ 
3 7.4605e-81 \\ 
4 5.6779e-82 \\ 
5 5.2566e-83 \\ 
6 6.1427e-84 \\ 
7 9.2838e-85 \\ 
8 1.8455e-85 \\ 
9 4.8835e-86 \\ 
10 1.7354e-86 \\ 
11 8.3344e-87 \\ 
12 5.4354e-87 \\ 
13 4.8301e-87 \\ 
14 5.8630e-87 \\ 
15 9.7372e-87 \\ 
16 2.2150e-86 \\ 
17 6.5804e-86 \\ 
18 2.4876e-85 \\ 
19 1.1354e-84 \\ 
20 6.2852e-84 \\ 
21 4.1656e-83 \\ 
22 3.2432e-82 \\ 
23 2.9141e-81 \\ 
24 2.9465e-80 \\ 
25 3.2728e-79 \\ 
26 4.0054e-78 \\ 
27 5.3313e-77 \\ 
28 7.6273e-76 \\ 
29 1.1607e-74 \\ 
30 1.8617e-73 \\ 
31 3.0979e-72 \\ 
32 5.1886e-71 \\ 
33 8.9341e-70 \\ 
34 1.5741e-68 \\ 
35 2.8280e-67 \\ 
36 5.1696e-66 \\ 
37 9.0495e-65 \\ 
38 1.5916e-63 \\ 
39 2.8253e-62 \\ 
40 5.0728e-61 \\ 
41 8.4497e-60 \\ 
42 1.4185e-58 \\ 
43 2.4142e-57 \\ 
44 3.8446e-56 \\ 
45 6.0483e-55 \\ 
46 9.7374e-54 \\ 
47 1.4188e-52 \\ 
48 2.1082e-51 \\ 
49 3.0546e-50 \\ 
                    };
                    \node at (axis cs:43,1e-12) {$t = 0$};
                    \node at (axis cs:43,1e-30) {$t = 1$};
                    \node at (axis cs:43,1e-51) {$t = 2$};
        		\end{axis}
        	\end{tikzpicture}%
            \caption{The bound $p_{t,t'}$ on the probability of having $t$ misdetections and $t'$ false alarms for a given $t$ and $t' \in \overline{\Tc}_t = {[(K_{\ell} - K_{\rm a} + t)^+ : (K_u - K_{\rm a} + t)]}$.} 
            \label{fig:jointDec_highSNR_ptt}
            \end{subfigure}
            \hspace{.2cm}
            \begin{subfigure}[t]{.48\textwidth}
                \centering
                \begin{tikzpicture}
        		\tikzstyle{every node}=[font=\small]
        		\begin{axis}[%
        			width=2.8in,
        			height=2in,
        			at={(0.759in,0.481in)},
        			scale only axis,
        			xmin=0,
        			xmax=50,
        			xlabel style={font=\color{black},yshift=1ex},
        			xlabel={$t$},
        			ymin=1e-90,
        			ymax=1,
        			ymode = log,
        			yminorticks=true,
        			ytick={1e0, 1e-15, 1e-30, 1e-45, 1e-60, 1e-75, 1e-90},
        			ylabel style={font=\color{black}, yshift=-.5ex},
        			ylabel={$p_{t}$},
        			axis background/.style={fill=white},
        			xmajorgrids,
        			ymajorgrids,
        			yminorgrids,
        			legend style={at={(0.99,0.99)}, anchor=north east, row sep=-2.5pt, legend cell align=left, align=left, draw=white!15!black}
        			]		
                    \addplot [color=blue,line width=1pt]
        			table[row sep=crcr]{%
                        0 1.0000e+00 \\ 
1 5.5775e-31 \\ 
2 3.2807e-50 \\ 
3 3.8501e-71 \\ 
4 4.6626e-93 \\ 
5 1.6943e-115 \\ 
6 3.4709e-138 \\ 
7 5.7378e-161 \\ 
8 1.0593e-183 \\ 
9 2.5600e-206 \\ 
10 8.2914e-229 \\ 
11 4.8256e-251 \\ 
12 4.4176e-273 \\ 
13 8.1984e-295 \\ 
14 2.4141e-316 \\ 
                    };

                    \node at (axis cs:5,1e-13) {$r = r_0$};
                \end{axis}
                \end{tikzpicture}
                \caption{The bound $p_t = \sum_{t' \in \overline{\Tc}_t} p_{t,t'}$ on the probability of having $t$ misdetections.}
                \label{fig:jointDec_highSNR_pt}
            \end{subfigure}
            \caption{The bounds $p_{t,t'}$ in~\eqref{eq:ptt} and $p_{t}$ in~\eqref{eq:pt} for the setting in Fig.~\ref{fig:jointDec} but with $\EbNo = 4.6$~dB.}
            \label{fig:jointDec_highSNR}
        \end{figure*}

    \reviseee{
    In Fig.~\ref{fig:jointDec} and Fig.~\ref{fig:jointDec_highSNR}, we plot the values of $p_{t,t'}$ and $p_t$ for $k = 128$ bits, $n = 19200$ channel uses, and $K_{\rm a} = \E[\rv{K}_{\rm a}] = 50$. We set $K_\ell = (\E[\rv{K}_{\rm a}] - r)^+$ and $K_u = \E[\rv{K}_{\rm a}] + r$ for a chosen nonnegative integer $r$. We first consider $r = r_0$, where $r_0$ is the largest value of $r$ such that $\P[{\rv{K}_{\rm a} \notin [K_\ell:K_u]}] < 10^{-9}$. In Fig.~\ref{fig:jointDec}, we consider $\EbNo = 2$~dB, which yields a small $P'$. As shown in Fig.~\ref{fig:jointDec_ptt}, $p_{t,t'}$ increases with $t'$ for a fixed $t$. This is because for a small $P'$, the term $E_0(\rho,\rho_1)$ increases slowly with $t'$, and the term $-\rho\rho_1 t' R_1$ dominates for all values of $t'$, driving the error exponent $E(t,t')$ towards~$0$. When $r = r_0$, the interval $[K_\ell:K_u]$ is large. Therefore, both ${\Tc}_t = {[(\max\{K_{\ell},1\} - K_{\rm a} + t)^+ : (K_u - K_{\rm a} + t)]}$ and $\overline{\Tc}_t = {[(K_{\ell} - K_{\rm a} + t)^+ : (K_u - K_{\rm a} + t)]}$ contain large values of $t'$ for which $p_{t,t'}$ is close to $1$. As a consequence, $\epsilon_{\rm FA}$ is large due to the sum over $t'\in \Tc_t$ in~\eqref{eq:eps_FA_jointDec}. Furthermore, $p_t = \sum_{t' \in \overline{\Tc}_t} p_{t,t'}$ is also large, as seen in Fig.~\ref{fig:jointDec_pt}, leading to a large $\epsilon_{\rm MD}$.
    In Fig.~\ref{fig:jointDec_highSNR}, we consider $\EbNo = 4.6$~dB, i.e., a higher $P'$. As shown in Fig.~\ref{fig:jointDec_highSNR_ptt}, for a fixed $t$, the term $p_{t,t'}$ first decreases and then increases with $t'$. This is because when $P'$ is sufficiently high, $E_0(\rho,\rho_1)$ dominates for small $t'$ but $-\rho\rho_1 t' R_1$ eventually dominates as $t'$ grows. However, $p_{t,t'}$ increases rather slowly with $t'$ and remains small for all $t' \in \overline{\Tc}_t$. It follows that  $p_t = \sum_{t' \in \overline{\Tc}_t} p_{t,t'}$ is small, as seen in Fig.~\ref{fig:jointDec_highSNR_pt}.}

        
        \reviseee{The fact that $p_{t,t'}$ increases quickly with $t'$ at low $\EbNo$ seems to indicate that the joint decoder tends to commit many false alarms in this regime. While it remains unclear if this interpretation is correct,\footnote{\reviseee{Indeed, this rapid increase may simply be due to the looseness of Gallager's $\rho$-trick.}} we provide a possible explanation as follows.} At low $\EbNo$, i.e., when the noise dominates, the joint decoder often returns a list (of typically big size) of wrong codewords whose sum is  closer to $\rvVec{y}$ than to the sum of the transmitted codewords. \reviseee{Although it is unlikely that $\|c({\Wc}') - \rvVec{y}\|^2 \le \|c({\Wc}) - \rvVec{y}\|^2$ for a given set $\Wc'$ of large size, the probability that this is true for at least one of $\binom{M}{|\Wc'|}$ possible sets is still significant since $M$ is large.} In essence, by \reviseee{searching over a large set of codewords,} the decoder ends up approximating the additive noise component in the received signal. \reviseee{We refer to this effect as ``noise overfitting''.} 
        
        To mitigate noise overfitting, we simply reduce the feasible set over which the minimization in~\eqref{eq:joint_decoder} is performed. This is similar to using an inductive bias to restrict the hypothesis class in order to overcome overfitting in statistical learning~\cite[Sec.~2.3]{Shalev_ML}. \reviseee{Specifically, we reduce $r$. Indeed, this allows us to avoid large values of $t'$ in $\Tc_t$ and $\overline{\Tc}_t$, and thus reduce~$p_t$. For the scenario in Fig.~\ref{fig:jointDec}, setting $r = 1$ results in $\Tc_t = \overline{\Tc}_t = [(t-1)^+:(t+1)]$. These intervals do not contain large values of $t'$ for which $p_{t,t'}$ is close to $1$, as shown in Fig.~\ref{fig:jointDec_ptt}. It follows that $p_t$ is drastically reduced, as shown in Fig.~\ref{fig:jointDec_pt}.  
        Furthermore, to adapt the inductive bias to the received signal, we choose the feasible set based on the estimate ${K}_{\rm a}'$ of $\rv{K}_{\rm a}$. Specifically,} we replace the feasible set in~\eqref{eq:joint_decoder} with the set $\{\Wc' \subset [M] \colon \underline{K'_{\rm a}} \le |\Wc'| \le \overline{K'_{\rm a}}\}$ where $[\underline{K'_{\rm a}}:\overline{K'_{\rm a}}]$ is an interval around ${K}_{\rm a}'$. 
        We control the size of this interval via the decoding radius $r$. 
        It turns out that, to satisfy mild requirements on the MD and FA probabilities, setting $\underline{K'_{\rm a}} = \overline{K'_{\rm a}} = K'_{\rm a}$, i.e., $r = 0$, leads to high energy efficiency, whereas for more stringent requirements, the interval $[\underline{K'_{\rm a}}:\overline{K'_{\rm a}}]$ should be progressively enlarged. 

        An alternative method to overcome noise overfitting is to introduce a regularization term in~\eqref{eq:joint_decoder} that penalizes a choice of $|\widehat{\Wc}|$ far from $\E[\rv{K}_{\rm a}]$. We have tried this method but did not obtain a better bound than the one provided by the two-step decoder.
        
		In short, although our two-step decoder might be suboptimal, it effectively mitigates noise overfitting and achieves the highest energy efficiency among the approaches that we have considered.

		\subsection{A General Converse} \label{sec:discussion_converse} 
		A general converse bound on the \gls{MD} and \gls{FA} probabilities appears difficult to obtain. A possible approach is to assume  that $\rv{K}_{\rm a}$ is known to the receiver. However, even for this case, a tight converse bound is not available in the literature. In~\cite{PolyanskiyISIT2017massive_random_access}, only a conjectured converse bound was provided. A converse bound on the required power to achieve a target \gls{MD} probability $\epsilon_{\rm MD}$ for fixed and known $K_{\rm a}$ was reported in~\cite{Polyanskiy_ISIT2021_tutorial}. This converse is based on two different approaches. In the first approach, one casts a {UMA} code as a single-user code with list decoding, and applies the result on minimum energy to send $k$ bits through the Gaussian channel~\cite{Polyanskiy2011minimum}. Here, the list size is $K_{\rm a}$. 
        In the second approach, one computes the rate-distortion function between two binary vectors that indicate the transmitted and decoded messages. The average distortion between these vectors is bounded by $2K_{\rm a} \epsilon_{\rm MD}$. Then, this rate-distortion function is upper-bounded by the sum-capacity $n\log(1 + K_{\rm a} P)$ of the Gaussian MAC. The bound obtained via the first approach dominates when $K_{\rm a}$ is small whereas the one obtained via the second approach dominates when $K_{\rm a}$ is large. However, this second bound holds only for $\epsilon_{\rm MD} \le \frac{1}{e K_{\rm a}}$.\footnote{\revisee{In~\cite[Slide~31]{Polyanskiy_ISIT2021_tutorial}, the step $\E[\rv{S} \log \rv{S}]\ge t \log t$ holds for $t \in (0,1/e]$ only. Here, $\rv{S}$ is the number of misdetected messages and $t = K_{\rm a} \epsilon_{\rm MD}$. Since $\epsilon_{\rm MD}$ is the target \gls{MD} probability, we have that $\E[\rv{S}] \le t$. Jensen's inequality implies that $\E[\rv{S} \log \rv{S}] \ge \E[\rv{S}] \log \E[\rv{S}]$, but $\E[\rv{S}] \log \E[\rv{S}] \ge t \log t$ only if $t \in (0,1/e]$. Therefore, the second bound in the converse reported in~\cite{Polyanskiy_ISIT2021_tutorial} holds only for $\epsilon_{\rm MD} \le \frac{1}{e K_{\rm a}}$.}} For $\epsilon_{\rm MD} > \frac{1}{e K_{\rm a}}$, which typically holds in massive \gls{IoT} applications, one needs to rely on the bound obtained from the first approach, which exhibits a large gap from the achievability bound for large $K_{\rm a}$.
		
		The converse in~\cite{Polyanskiy_ISIT2021_tutorial} does not generalize naturally to the case of unknown $\rv{K}_{\rm a}$, where both the \gls{MD} and \gls{FA} probabilities need to be considered, for the following two reasons. First, in the list-decoding-based approach, the list size $\rv{K}_{\rm a}$ is not fixed and known, and the error event associated with single-user list decoding accounts for \gls{MD} only. Second, in the rate-distortion-based approach, it is nontrivial to express/bound the average distortion between the aforementioned binary vectors in terms of the target \gls{MD} and \gls{FA} probabilities.
		
		
		\section{Conclusion} \label{sec:conclusions}
		To account for the random user activity in the \gls{IoT}, we proposed a formulation for unsourced multiple access where both the identity and the number of active users are unknown. We derived a random-coding bound for the Gaussian MAC that reveals a trade-off between misdetection and false alarm. Our bound provides an estimate of the penalty in terms of energy efficiency due to the lack of knowledge of the number of active users, and serves as a benchmark to assess the performance of practical schemes. Numerical results show that for the Gaussian MAC, if the target misdetection and false-alarm probabilities are sufficiently high, e.g., $10^{-1}$, the lack of knowledge of the number of active users entails a small loss. In this case, it is effective to adapt a coding scheme that performs well for the case of known number of active users, by first estimating the number of active users and then treating this estimate as the true value in the decoding process. However, for stringent target misdetection and false-alarm probabilities, e.g., $10^{-3}$, numerical results \revise{suggest} that the loss due to the lack of knowledge of the number of active users \revise{might be} significant. \revise{It remains unclear if this is loss is fundamental or pertains to the considered random-coding scheme only.} \revise{For stringent requirements}, adapting existing coding schemes proposed for known number of active users by simply treating the estimate of the number of active users as perfect is energy inefficient even for a small number of active users. Therefore, more sophisticated methods are needed to handle effectively the uncertainty about the number of active users. 
		
		\section*{Acknowledgement}
		Khac-Hoang Ngo has received funding from the European Union’s Horizon 2020 research and innovation programme under the Marie Skłodowska-Curie grant agreement No 101022113. Alejandro Lancho has received funding from the European Union’s Horizon 2020 research and innovation programme under the Marie Skłodowska-Curie grant agreement No 101024432. Giuseppe Durisi has received funding from the Swedish Research Council under grant 2021-04970. The authors would like to thank Y. Polyanskiy for fruitful discussions, as well as V. K. Amalladinne, J.-F. Chamberland, and K. R.~Narayanan for providing their simulation codes. They would like to further thank the Associate Editor, Or Ordentlich, as well as the anonymous reviewers for their valuable comments.

		\appendices
		\section{Proof of Theorem~\ref{thm:RCU_unknownKa}} \label{app:proof}
		The following well-known results will be used in the proof.
		
		\begin{lemma}[{Change of measure~\cite[Lemma~4]{Ohnishi2021}}] \label{lem:change_measure}
			Let $p$ and $q$ be two probability measures. Consider a random variable $\rv{x}$ supported on $\Hc$ and a function $f \colon \Hc \to [0,1]$. It holds that 
			\begin{align}
				\E_p[f(\rv{x})] \le \E_q[f(\rv{x})] + d_{\rm TV}(p,q)
			\end{align}
			where $d_{\rm TV}(p,q)$ denotes the total variation distance between $p$ and $q$.
		\end{lemma}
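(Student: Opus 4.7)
The plan is to write the difference $\E_p[f(\rv{x})] - \E_q[f(\rv{x})]$ as the integral of $f$ against the signed measure $p-q$, and then exploit the boundedness $0 \le f \le 1$ together with the variational characterization $d_{\rm TV}(p,q) = \sup_{A} \bigl(p(A) - q(A)\bigr)$ of the total variation distance (where the supremum is taken over measurable subsets of $\Hc$).

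First, I would write
\begin{equation*}
\E_p[f(\rv{x})] - \E_q[f(\rv{x})] = \int_{\Hc} f(x) \, d(p-q)(x).
\end{equation*}
By the Hahn--Jordan decomposition of the signed measure $p-q$, there exists a measurable set $A \subset \Hc$ such that $p-q$ is nonnegative on $A$ and nonpositive on $A^c$. Concretely, one may take $A = \{x: dp/d\mu \ge dq/d\mu\}$ for any common dominating measure $\mu$, e.g., $\mu = p+q$. Since $f \ge 0$ everywhere and $d(p-q) \le 0$ on $A^c$, the integral over $A^c$ is nonpositive, hence
\begin{equation*}
\E_p[f(\rv{x})] - \E_q[f(\rv{x})] \le \int_A f(x) \, d(p-q)(x).
\end{equation*}

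Next, the bound $f \le 1$ combined with the nonnegativity of $p-q$ on $A$ gives
\begin{equation*}
\int_A f(x) \, d(p-q)(x) \le \int_A d(p-q)(x) = p(A) - q(A) \le d_{\rm TV}(p,q),
\end{equation*}
where the final inequality uses the variational characterization of $d_{\rm TV}$ stated above. Chaining these three displays yields the claim.

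The argument is elementary and presents no genuine obstacle; the only point that warrants care is the existence of the Hahn decomposition set $A$, which holds for arbitrary probability measures on a common measurable space and requires no absolute-continuity assumption on $p$ and $q$. If one preferred to avoid invoking Hahn--Jordan explicitly, an equivalent route is to dominate both $p$ and $q$ by $\mu = p+q$, write the integral as $\int f\,(dp/d\mu - dq/d\mu)\,d\mu$, and split the integration according to the sign of $dp/d\mu - dq/d\mu$; the same bound is obtained.
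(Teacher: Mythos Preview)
Your proof is correct. The paper does not supply its own proof of this lemma; it simply quotes the result from \cite[Lemma~4]{Ohnishi2021} and uses it as a tool in the error analysis. Your argument via the Hahn--Jordan decomposition (or equivalently splitting on the sign of the Radon--Nikodym derivative with respect to $p+q$) is the standard elementary proof of this inequality, and there is nothing to compare against in the paper itself.
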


		\begin{lemma}[{Chernoff bound~\cite[Th. 6.2.7]{DeGroot2012ProbStats}}] \label{lem:Chernoff}
			For a random variable $\rv{x}$ with moment-generating function $\E[e^{t \rv{x}}]$ defined for all $|t| \le b$, it holds for all $\lambda \in [0,b]$ that
			\begin{align}
				\P[\rv{x} \le x] \le e^{\lambda x} \E[e^{-\lambda \rv{x}}].
			\end{align}
		\end{lemma}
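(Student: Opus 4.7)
The plan is to derive this one-sided Chernoff-type tail bound by the standard exponentiate-and-apply-Markov argument. First I would observe that for any $\lambda \ge 0$, the function $u \mapsto e^{-\lambda u}$ is monotonically non-increasing, so the event $\{\rv{x} \le x\}$ coincides with the event $\{e^{-\lambda \rv{x}} \ge e^{-\lambda x}\}$. This converts a tail event for $\rv{x}$ into a tail event for the non-negative random variable $e^{-\lambda \rv{x}}$.

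Next I would apply Markov's inequality to $e^{-\lambda \rv{x}}$, which is non-negative and, because $\lambda \in [0,b]$ lies in the range where the moment-generating function $\E[e^{t\rv{x}}]$ is defined (take $t=-\lambda$), has a finite expectation. Markov's inequality yields
\begin{equation*}
\P[e^{-\lambda \rv{x}} \ge e^{-\lambda x}] \le \frac{\E[e^{-\lambda \rv{x}}]}{e^{-\lambda x}} = e^{\lambda x}\, \E[e^{-\lambda \rv{x}}].
\end{equation*}
Combining with the first step gives the claimed inequality. The boundary case $\lambda = 0$ is trivial since the bound reduces to $\P[\rv{x}\le x]\le 1$.

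There is essentially no obstacle here: the only subtlety is justifying that $\E[e^{-\lambda \rv{x}}] < \infty$ for $\lambda\in[0,b]$, which is given by the hypothesis that the moment-generating function exists on $|t|\le b$. No tightening over $\lambda$ is required in the statement, so the proof is complete once the two lines above are written out.
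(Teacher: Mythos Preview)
Your proof is correct and is exactly the standard exponentiate-and-apply-Markov argument; the paper itself does not prove this lemma but simply cites it as a well-known result from a textbook, so there is nothing to compare.
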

		
		\begin{lemma} [{Gallager's $\rho$-trick~\cite[p.~136]{Gallager1968information}}] \label{lem:Gallager}
			It holds that $\P[\cup_i A_i] \le (\sum_{i} \P[A_i])^\rho$ for every $\rho \in [0,1]$.
		\end{lemma}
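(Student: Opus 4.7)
The plan is to combine the elementary union bound with the trivial bound by total probability, and then dominate the resulting minimum by a single power function. Specifically, I would first observe the two bounds
\begin{equation}
\P\!\left[\bigcup_i A_i\right] \le \sum_i \P[A_i], \qquad \P\!\left[\bigcup_i A_i\right] \le 1,
\end{equation}
the first being the standard countable subadditivity of probability measures, the second being the fact that any probability is at most $1$. Taking the minimum of these two upper bounds yields
\begin{equation}
\P\!\left[\bigcup_i A_i\right] \le \min\!\left\{1,\; \sum_i \P[A_i]\right\}.
\end{equation}

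The remaining step is purely analytic: I would prove the real-variable inequality $\min\{1,x\} \le x^\rho$ for every $x \ge 0$ and every $\rho \in [0,1]$, and then apply it with $x = \sum_i \P[A_i]$. For $0 \le x \le 1$, the function $\rho \mapsto x^\rho$ is nonincreasing in $\rho$ on $[0,1]$ (since $\ln x \le 0$), so $x^\rho \ge x^1 = x = \min\{1,x\}$. For $x > 1$, the function $\rho \mapsto x^\rho$ is nondecreasing (since $\ln x > 0$), so $x^\rho \ge x^0 = 1 = \min\{1,x\}$. In either case $\min\{1,x\} \le x^\rho$.

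Chaining these inequalities gives $\P[\cup_i A_i] \le (\sum_i \P[A_i])^\rho$, as claimed. There is essentially no hard step here: the entire argument reduces to the union bound plus a one-line monotonicity check of $x^\rho$ in $\rho$. The only minor bookkeeping concern is ensuring the statement is valid when $\sum_i \P[A_i] = 0$ (in which case all $A_i$ are null and both sides are zero, interpreting $0^\rho = 0$ for $\rho \in (0,1]$ and taking the convention $0^0 = 1$ at $\rho = 0$, which is consistent since the left-hand side is then also zero $\le 1$).
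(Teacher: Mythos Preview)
Your proof is correct. The paper does not supply its own proof of this lemma; it simply cites it as a known result from Gallager's book, so there is nothing to compare against, and your argument is the standard one.
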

		\begin{lemma} \label{lem:chi2}
			Let $\rvVec{x} \sim \Cc\Nc(\muv,\sigma^2\Id_n)$. It holds that 
			\begin{align}
				\E[e^{-\gamma \|\rvVec{x}\|^2}] = (1+\gamma\sigma^2)^{-n} \exp\bigg(-\frac{\gamma\|\muv\|^2}{1+\gamma\sigma^2}\bigg), ~ \forall \gamma > -\frac{1}{\sigma^2}. \label{eq:tmp363}
			\end{align} 
		\end{lemma}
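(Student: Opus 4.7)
\textbf{Proof plan for Lemma~\ref{lem:chi2}.} The plan is a direct evaluation of the expectation by completing the square in the quadratic exponent. Writing the complex Gaussian density explicitly, we have
\begin{align}
\E\bigl[e^{-\gamma\|\rvVec{x}\|^2}\bigr]
= \frac{1}{(\pi\sigma^2)^n}\int_{\CC^n} \exp\!\Big(\!-\gamma\|\xv\|^2 - \tfrac{1}{\sigma^2}\|\xv-\muv\|^2\Big)\,d\xv.
\end{align}
Expanding $\|\xv-\muv\|^2 = \|\xv\|^2 - 2\Re(\muv^{\H}\xv) + \|\muv\|^2$ and setting $\alpha \defeq \gamma + 1/\sigma^2 = (1+\gamma\sigma^2)/\sigma^2$, the exponent becomes $-\alpha\|\xv\|^2 + (2/\sigma^2)\Re(\muv^{\H}\xv) - \|\muv\|^2/\sigma^2$.

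Next, I would complete the square in $\xv$, which yields
\begin{align}
-\alpha\|\xv\|^2 + \tfrac{2}{\sigma^2}\Re(\muv^{\H}\xv)
= -\alpha\Big\|\xv - \tfrac{\muv}{\alpha\sigma^2}\Big\|^2 + \tfrac{\|\muv\|^2}{\alpha\sigma^4}.
\end{align}
The hypothesis $\gamma > -1/\sigma^2$ is precisely what guarantees $\alpha>0$, so the resulting centered Gaussian in $\xv$ is integrable and gives $\int_{\CC^n}\exp(-\alpha\|\xv - \muv/(\alpha\sigma^2)\|^2)\,d\xv = (\pi/\alpha)^n$. Multiplying by the normalization $(\pi\sigma^2)^{-n}$ produces the prefactor $(\alpha\sigma^2)^{-n} = (1+\gamma\sigma^2)^{-n}$.

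The remaining scalar exponent is $\|\muv\|^2\bigl(\tfrac{1}{\alpha\sigma^4} - \tfrac{1}{\sigma^2}\bigr)$. A short simplification using $\alpha\sigma^2 = 1+\gamma\sigma^2$ yields
\begin{align}
\tfrac{1}{\alpha\sigma^4} - \tfrac{1}{\sigma^2}
= \tfrac{1}{\sigma^2}\cdot\tfrac{1-\alpha\sigma^2}{\alpha\sigma^2}
= -\tfrac{\gamma}{1+\gamma\sigma^2},
\end{align}
and assembling the two factors delivers~\eqref{eq:tmp363}. The only substantive obstacle is bookkeeping with the $\pi^n$ factors in the complex Gaussian normalization and verifying that the convergence condition $\gamma>-1/\sigma^2$ is both necessary (so the integrand is Lebesgue-integrable) and sufficient (so $\alpha>0$ and the denominator $1+\gamma\sigma^2$ is positive); both follow immediately once $\alpha$ is isolated.
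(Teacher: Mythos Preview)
Your proof is correct. The paper does not actually supply a proof of this lemma; it is listed among the ``well-known results'' used in Appendix~\ref{app:proof} and stated without argument. Your completing-the-square computation is the standard derivation and would serve as a self-contained justification.
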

		\revise{\begin{proof}
				We write 
				$\E\big[e^{-\gamma \|\rvVec{x}\|^2}\big] = \E[\exp\Big(-\frac{\gamma \sigma^2}{2} \big\|\frac{\sqrt{2}}{\sigma} \rvVec{x}\big\|^2\Big)]$. Notice that $ \big\|\frac{\sqrt{2}}{\sigma} \rvVec{x}\big\|^2$ follows the noncentral chi-square distribution with $2n$ degrees of freedom and noncentrality parameter $\frac{2\|\muv\|^2}{\sigma^2}$. We obtain~\eqref{eq:tmp363} by using that the moment-generating function of this distribution is known in closed form.
		\end{proof}	}
		
		We present next an error analysis of the random-coding scheme introduced in Section~\ref{sec:RCU}. 
		Denote by $\revise{\Wc_{\rm MD}}$ the set of misdetected messages, i.e., $\revise{\Wc_{\rm MD}} \defeq \widetilde{\Wc} \setminus \widehat{\Wc}$, and by \revise{$\Wc_{\rm FA}$} the set of falsely alarmed messages, i.e., $\revise{\Wc_{\rm FA}} \defeq \widehat{\Wc} \setminus \widetilde{\Wc}$. The \gls{MD} and \gls{FA} probabilities, defined respectively in~\eqref{eq:def_pMD} and \eqref{eq:def_pFA}, can be expressed as the average fraction of misdetected and falsely alarmed messages as 
		\begin{align}
			P_{\rm MD} &= \revise{\E[\md]}, \label{eq:pMD}\\
			P_{\rm FA} &= \revise{\E[\fa]}. \label{eq:pFA}
		\end{align}

		\subsection{Change of Measure} \label{sec:change_measure}
		Since \revise{$\md$} and \revise{$\fa$} are nonnegative random variables that are upper-bounded by one, we can apply Lemma~\ref{lem:change_measure}. Specifically, we replace the measure over which the expectation is taken by the one under which: 
		i) there are at least $K_\ell$ and at most $K_{u} \ge \overline{K_{\rm a}'}$ active users, i.e., $K_\ell\le\rv{K}_{\rm a} \le K_{u}$; ii) the active users transmit distinct messages, i.e., $|\widetilde{\Wc}| = \rv{K}_{\rm a}$ and $\widetilde{\rv{w}}_1,\dots,\widetilde{\rv{w}}_{\rv{K}_{\rm a}}$ are sampled uniformly without replacement from $[M]$;  iii) $\rvVec{x}_i = \cv_{\rv{w}_i}, \forall i$, instead of $\rvVec{x}_i = \cv_{\rv{w}_i} \ind{\|\cv_{\rv{w}_i}\|^2 \le nP}$. 
		It then follows from \cite[Eq. (41)]{Kowshik2020fundamental} that the total variation between the original measure and the new one is
		upper-bounded by $\P[{\rv{K}_{\rm a} \notin [K_\ell:K_u]}] + \P[ |\widetilde{\Wc}| < \rv{K}_{\rm a}] + \P[\overline{U}]$, where
		$U \defeq \{\|\cv_{\rv{w}_i}\|^2 \le nP, \forall i \in [\rv{K}_{\rm a}] \}$ and $\overline{U}$ denotes the complement of~$U$.
		We compute these probabilities 
		as follows:
		\begin{itemize}
			\item To compute the first probability, we simply use that $$\P[{\rv{K}_{\rm a} \notin [K_\ell:K_u]}] = 1 - \sum_{K_{\rm a} = K_\ell}^{K_{u}}P_{\rv{K}_{\rm a}}(K_{\rm a}).$$
			
			\item To evaluate $\P[ |\widetilde{\Wc}| < \rv{K}_{\rm a}]$, consider a given $\rv{K}_{\rm a} = K_{\rm a}$. Since ${\rv{w}}_1,\dots,{\rv{w}}_{K_{\rm a}}$ are drawn uniformly and independently from $[M]$, there are $M^{K_{\rm a}}$ possible $K_{\rm a}$-tuples. Among them, $\frac{M!}{(M-K_{\rm a})!}$ tuples have nonduplicate elements. Therefore, $\P[ |\widetilde{\Wc}| = K_{\rm a} \cond \rv{K}_{\rm a} = K_{\rm a}] = \frac{M!}{(M-K_{\rm a})!} \frac{1}{M^{K_{\rm a}}}$. As a consequence, 
			\begin{align} \label{eq:P_msg_collision}
				\P[ |\widetilde{\Wc}| < \rv{K}_{\rm a}] &= 1 - \P[ |\widetilde{\Wc}| = \rv{K}_{\rm a}] \\
                &= 1- \E_{\rv{K}_{\rm a}}\bigg[\frac{M!}{M^{\rv{K}_{\rm a}}(M-\rv{K}_{\rm a})!}\bigg].
			\end{align} 
		
		\item The probability $\P[\overline{U}]$ can be finally evaluated as
		\begin{align}
			\P[\overline{U}] &= \E_{\rv{K}_{\rm a}}\Bigg[{\P[\bigcup_{i=1}^{\rv{K}_{\rm a}} \|\cv_{\rv{w}_i}\|^2 > nP \revise{\;\Big\vert\; \rv{K}_{\rm a}}]}\Bigg] \\
			&\le \E_{\rv{K}_{\rm a}} \Bigg[\sum_{i=1}^{\rv{K}_{\rm a}}{\P[\|\cv_{\rv{w}_i}\|^2 > nP]}\Bigg] \label{eq:tmp675}\\
			&= \E[\rv{K}_{\rm a}]  \frac{\Gamma(n,nP/P')}{\Gamma(n)}, \label{eq:tmp676}
		\end{align}
		where \eqref{eq:tmp675} follows from the union bound and \eqref{eq:tmp676} holds since $\|\cv_{\rv{w}_i}\|^2$ follows the Gamma distribution with shape $n$ and scale $P'$.
	\end{itemize}
	From the above calculations, we deduce that the total variation between the two measures is upper-bounded by $\tilde{p}$ defined in~\eqref{eq:p0}.
	Hence, applying Lemma~\ref{lem:change_measure} to the random quantities $\md$ and $\fa$, we consider implicitly the new measure from now on at a cost of adding $\tilde{p}$ to the original expectations. 
	
	It remains to bound the \gls{MD} and \gls{FA} probabilities given in~\eqref{eq:pMD} and \eqref{eq:pFA}, respectively, under the new measure. For the sake of clarity, in Appendix~\ref{sec:special_case}, we shall prove a bound on $P_{\rm MD}$ and $P_{\rm FA}$ for a special case where i) $\rv{K}_{\rm a}$ and $\rv{K}'_{\rm a}$ are fixed and $r = 0$, i.e., there are always $K_{\rm a}$ users transmitting and the decoder always outputs a list of size $K'_{\rm a}$; ii) $K'_{\rm a} < \min\{K_{\rm a}, M-K_{\rm a}\}$. Later, in Appendix~\ref{sec:general_case}, we shall show how to extend the proof to the general case where $\rv{K}_{\rm a}$ and $\rv{K}'_{\rm a}$ are random and $r \ge 0$.
	
	\vspace{-.2cm}
	\subsection{A Special Case} \label{sec:special_case}
	In the aforementioned special case, \eqref{eq:eps_MD} and \eqref{eq:eps_FA} become   
	\begin{align}
		\epsilon_{\rm MD} &= \sum_{t = 0}^{K_{\rm a}'}\frac{t+K_{\rm a}-K_{\rm a}'}{K_{\rm a}} \min\{p_{t,t},q_{t,t}\} + \tilde{p}, \label{eq:eps_MD_simp}\\
		\epsilon_{\rm FA} &= \sum_{t = 0}^{K_{\rm a}'}  \frac{t}{K'_{\rm a}} \min\{p_{t,t}, q_{t,t}\} + \tilde{p}, \label{eq:eps_FA_simp}
	\end{align}		
	where $p_{t,t}$ and $q_{t,t}$ will be given shortly. The task is to show that $\epsilon_{\rm MD}$ and $\epsilon_{\rm FA}$ are indeed upper bounds of $P_{\rm MD}$ and $P_{\rm FA}$, respectively, in this special case.
	
	Since the decoded list size $K'_{\rm a}$ is smaller than the number of transmitted messages~$K_{\rm a}$, $K_{\rm a} - {K_{\rm a}'}$ messages are initially misdetected, and there can be $t \in [0:K_{\rm a}']$ additional \glspl{MD} occurring during the decoding process. Exploiting symmetry, we assume \gls{wlog} that $\widetilde{\Wc} = [K_{\rm a}]$ and that the list of messages that are initially misdetected due to insufficient decoded list size is $\revise{\Wc_{\rm iMD}} = [{K}_{\rm a} - {K_{\rm a}'}]$. 
	Furthermore, let $\revise{\Wc_{\rm aMD}} = \revise{\Wc_{\rm MD}} \setminus \revise{\Wc_{\rm iMD}}$ denote the set of $t$ additional \glspl{MD}.  Note that $\revise{\Wc_{\rm aMD}}$ is a generic subset of $[K_{\rm a} - {K_{\rm a}'} + 1:K_{\rm a}]$. Note also that $t$ is the number of \glspl{FA}, i.e., $|\revise{\Wc_{\rm FA}}| = t$. The relation between these sets of messages is depicted in Fig.~\ref{fig:venn_special_case}.
	\begin{figure}
		\centering
		\begin{tikzpicture}[thick,scale=.9, every node/.style={scale=.9}]
			\def\radiusB{2cm}
			\def\radius{0.9*\radiusB}
			\def\mycolorbox#1{\textcolor{#1}{\rule{2ex}{2ex}}}
			\colorlet{colori}{gray!80}
			\colorlet{colorii}{gray!20}
			
			\coordinate (ceni) at (0,0);
			\coordinate[xshift=.95*\radiusB] (cenii);
			
			\coordinate (edge1a) at (-\radiusB,0.1cm);
			
			\coordinate (edge2a) at (\radiusB-\radius-.1cm,.1cm);
			
			\draw[fill=colori,fill opacity=0.5] (ceni) circle (\radiusB);
			\draw[fill=colorii,fill opacity=0.5] (cenii) circle (\radius);
			\draw (ceni) circle (\radiusB);
			
			\draw (edge1a) to (edge2a);
			
			\draw[-latex] (-1.35*\radiusB,-0.6*\radiusB) node[below,xshift=-.4cm,text width=3cm,font=\linespread{1}\selectfont,align=center] {\small Distinct transmitted messages $\widetilde{\Wc}$} -- (-.87*\radiusB,-0.5*\radiusB);
			\draw[-latex] (2*\radiusB,-0.6*\radiusB) node[below,text width=2cm,font=\linespread{1}\selectfont,align=center] {\small Decoded messages $\widehat{\Wc}$} -- (1.7*\radiusB,-0.5*\radiusB);
			
			\node[yshift=1.25*\radiusB,xshift=-1.25cm,text width=4.2cm,font=\linespread{1}\selectfont,align=center] {\small \glspl{MD}: \\ $\revise{\Wc_{\rm MD}} = \revise{\Wc_{\rm iMD}} \cup \revise{\Wc_{\rm aMD}}$ \\ $= \widetilde{\Wc} \setminus \widehat{\Wc}$};
			
			\node[yshift=1.15*\radiusB,xshift=1cm,text width=4cm,font=\linespread{1}\selectfont,align=center] at (cenii) {\small \glspl{FA}: \\ $\revise{\Wc_{\rm FA}} = \widehat{\Wc} \setminus \widetilde{\Wc}$};
			
			\node[xshift=1cm,text width=1.5cm,font=\linespread{1}\selectfont,align=center] at (ceni) {\small correctly decoded messages $\widetilde{\Wc} \cap \widehat{\Wc}$};
			
			\node[yshift=.43*\radiusB,xshift=-.7cm,text width=2.3cm,font=\linespread{1}\selectfont,align=center] at (ceni) {\small $~~~~{K}_{\rm a}-K_{\rm a}'$ initial \glspl{MD} $~~~\revise{\Wc_{\rm iMD}}~~~$};
			
			\node[yshift=-.45*\radiusB,xshift=-.4*\radiusB,text width=1.8cm,font=\linespread{1}\selectfont,align=center] at (ceni) {\small $t$ additional \glspl{MD} $~~~~~~\revise{\Wc_{\rm aMD}}~~~$};
			
			
			\node[yshift=0cm,xshift=.45*\radiusB,text width=1.5cm,font=\linespread{1}\selectfont,align=center] at (cenii) {\small $t$ \glspl{FA} $~~~\revise{\Wc_{\rm FA}}~~~$};
		\end{tikzpicture}
		\caption{A diagram depicting the relation between the sets of messages defined for the special case in Appendix~\ref{sec:special_case}.}
		\label{fig:venn_special_case}
	\end{figure}
	
	Using the above definitions, the set of transmitted messages can be expressed as $$\Wc = \widetilde{\Wc} = \revise{\Wc_{\rm iMD}} \cup \revise{\Wc_{\rm aMD}} \cup (\widetilde{\Wc} \setminus \revise{\Wc_{\rm MD}}),$$ and the received signal is $$\rvVec{y} = c(\revise{\Wc_{\rm iMD}}) + c(\revise{\Wc_{\rm aMD}}) + c(\widetilde{\Wc} \setminus \revise{\Wc_{\rm MD}}) + \rvVec{z}.$$ Since the messages in $\revise{\Wc_{\rm iMD}}$ are always misdetected, the best approximation of $\Wc$ that the decoder can produce is $\revise{\Wc_{\rm aMD}} \cup (\widetilde{\Wc} \setminus \revise{\Wc_{\rm MD}})$. However, under the considered error event $\widetilde{\Wc} \to \widehat{\Wc}$, the messages in $\revise{\Wc_{\rm aMD}}$ are misdetected by the ones in \revise{$\Wc_{\rm FA}$}, and thus the actual decoded list is $\revise{\Wc_{\rm FA}} \cup (\widetilde{\Wc} \setminus \revise{\Wc_{\rm MD}})$. Therefore, $\widetilde{\Wc} \to \widehat{\Wc}$ implies that $\|\rvVec{y} - c(\revise{\Wc_{\rm FA}}) - c(\Wc \setminus \revise{\Wc_{\rm MD}})\|^2 < \|\rvVec{y} - c(\revise{\Wc_{\rm aMD}}) - c(\Wc \setminus \revise{\Wc_{\rm MD}}))\|^2$, which is equivalent to
	\begin{align}
		\|c(\revise{\Wc_{\rm iMD}}) + c(\revise{\Wc_{\rm aMD}})- c(\revise{\Wc_{\rm FA}}) + \rvVec{z}\|^2 
		< \|c(\revise{\Wc_{\rm iMD}}) + \rvVec{z}\|^2. \label{eq:tmp1798}
	\end{align}
	Let $F(\revise{\Wc_{\rm iMD}},\revise{\Wc_{\rm aMD}},\revise{\Wc_{\rm FA}})$ denote the set of $(\revise{\Wc_{\rm iMD}},\revise{\Wc_{\rm aMD}},\revise{\Wc_{\rm FA}})$ such that~\eqref{eq:tmp1798} holds. 
	
	We now compute the expectations in~\eqref{eq:pMD} and \eqref{eq:pFA}. 
	Recall that, under assumptions just stated, we have $|\revise{\Wc_{\rm MD}}| =  t + K_{\rm a} - K'_{\rm a}$, $|\revise{\Wc_{\rm FA}}| = |\revise{\Wc_{\rm aMD}}| = t$, and $|\widehat{\Wc}| = K'_{\rm a}$. 
	It follows from \eqref{eq:pMD} and \eqref{eq:pFA} that, after the change of measure in Appendix~\ref{sec:change_measure}, $P_{\rm MD}$ and $P_{\rm FA}$ can be bounded as
	\begin{align}
		P_{\rm MD} &\le \sum_{t=0}^{K'_{\rm a}} \frac{t+K_{\rm a}-{K_{\rm a}'}}{K_{\rm a}} \P[|\revise{\Wc_{\rm aMD}}| = t] + \tilde{p},  \label{eq:tmp850_simp}\\
		P_{\rm FA} &\le \sum_{t=0}^{K'_{\rm a}} \frac{t}{K_{\rm a}'} \P[|\revise{\Wc_{\rm aMD}}| = t] + \tilde{p}. \label{eq:tmp853_simp}
	\end{align} 
	Next, we proceed to bound $\P[|\revise{\Wc_{\rm aMD}}| = t]$ 
	following two approaches. The first approach is based on error exponent analyses, resulting in the term $p_{t,t}$ in~\eqref{eq:eps_MD_simp}. The second approach is a variation of the DT bound \cite[Th.~17]{Polyanskiy2010}, resulting in $q_{t,t}$ in~\eqref{eq:eps_MD_simp}.
	
	\subsubsection{The Error-Exponent-Based Approach}  \label{sec:1st_approach}
	By writing the event $|\revise{\Wc_{\rm aMD}}| = t$ as a union of the pairwise error events $F(\revise{\Wc_{\rm iMD}},\revise{\Wc_{\rm aMD}},\revise{\Wc_{\rm FA}})$, we have that 
	\begin{align}
		&\P[{|\revise{\Wc_{\rm aMD}}| = t}] = \notag \\ &
		\P[\!\bigcup_{\revise{\Wc_{\rm aMD}} \subset [K_{\rm a} - {K_{\rm a}'} + 1:K_{\rm a}] \atop |\revise{\Wc_{\rm aMD}}| = t} \bigcup_{\revise{\Wc_{\rm FA}} \subset [K_{\rm a}+1:M] \atop |\revise{\Wc_{\rm FA}}| = t} \!\!\!F(\revise{\Wc_{\rm iMD}},\revise{\Wc_{\rm aMD}},\revise{\Wc_{\rm FA}})]\!. \label{eq:tmp901_simp} 
	\end{align}
	Next, given $c(\revise{\Wc_{\rm iMD}})$, $c(\revise{\Wc_{\rm aMD}})$, and $\rvVec{z}$, it holds for every $\lambda > -\frac{1}{tP'}$ that
	\begin{align}
		&\P[F(\revise{\Wc_{\rm iMD}},\revise{\Wc_{\rm aMD}},\revise{\Wc_{\rm FA}})] \notag \\
		&\le e^{\lambda \|c(\revise{\Wc_{\rm iMD}}) + \rvVec{z}\|^2} \notag \\&\quad \cdot 
		\E_{c(\revise{\Wc_{\rm FA}})}\Big[e^{-\lambda \|c(\revise{\Wc_{\rm iMD}}) + c(\revise{\Wc_{\rm aMD}})- c(\revise{\Wc_{\rm FA}}) + \rvVec{z}\|^2}\Big] \label{eq:tmp766_simp}\\
		&= e^{\lambda \|c(\revise{\Wc_{\rm iMD}}) + \rvVec{z}\|^2} (1+\lambda tP')^{-n} \notag \\ 	&\quad \cdot 
		\exp\bigg(-\frac{\lambda\|c(\revise{\Wc_{\rm iMD}}) + c(\revise{\Wc_{\rm aMD}}) + \rvVec{z}\|^2}{1+\lambda t P'}\bigg), \label{eq:tmp768_simp}
	\end{align}
	where \eqref{eq:tmp766_simp} follows from the Chernoff bound in Lemma~\ref{lem:Chernoff}, and \eqref{eq:tmp768_simp} follows by computing the expectation in~\eqref{eq:tmp766_simp} using Lemma~\ref{lem:chi2}.
	Next, we apply Gallager's $\rho$-trick in Lemma~\ref{lem:Gallager} and conclude that, given $c(\revise{\Wc_{\rm iMD}})$, $c(\revise{\Wc_{\rm aMD}})$, and $\rvVec{z}$, it holds for every $\rho \in [0,1]$ that
	\begin{align}
		&\P[\bigcup_{\revise{\Wc_{\rm FA}} \subset [K_{\rm a}+1:M] \atop |\revise{\Wc_{\rm FA}}| = t} F(\revise{\Wc_{\rm iMD}},\revise{\Wc_{\rm aMD}},\revise{\Wc_{\rm FA}})] \\
		&\le \binom{M-K_{\rm a}}{t}^\rho (1+\lambda tP')^{-n\rho} 
		\exp\Bigg(\lambda \rho \bigg(\|c(\revise{\Wc_{\rm iMD}}) + \rvVec{z}\|^2 \notag \\
        &\qquad -\frac{\|c(\revise{\Wc_{\rm iMD}}) + c(\revise{\Wc_{\rm aMD}}) + \rvVec{z}\|^2}{1+\lambda t P'}\bigg)\Bigg). \label{eq:tmp803_simp}
	\end{align}
	Taking the expectation over $c(\revise{\Wc_{\rm aMD}})$ using Lemma~\ref{lem:chi2}, we obtain for given $c(\revise{\Wc_{\rm iMD}})$ and $\rvVec{z}$ that
	\begin{align}
		&\P[\bigcup_{\revise{\Wc_{\rm FA}} \subset [K_{\rm a}+1:M] \atop |\revise{\Wc_{\rm FA}}| = t} F(\revise{\Wc_{\rm iMD}},\revise{\Wc_{\rm aMD}},\revise{\Wc_{\rm FA}})] \notag \\
		&\le \binom{M-K_{\rm a}}{t}^\rho (1+\lambda tP')^{-n\rho} \Big(1+\frac{\lambda \rho t P'}{1+\lambda tP'}\Big)^{-n} \notag \\
		&\quad \cdot \exp\Bigg(\lambda\rho \bigg(1-\frac{1}{1+\lambda P't(1+\rho)}\bigg)\|c(\revise{\Wc_{\rm iMD}}) + \rvVec{z}\|^2\Bigg) \\
		&= \binom{M-K_{\rm a}}{t}^\rho \exp\left(b_0\|c(\revise{\Wc_{\rm iMD}})  + \rvVec{z}\|^2 - na_0\right), \label{eq:tmp811}
	\end{align}
	where $a_0$ and $b_0$ are obtained by taking $t' = t$ in~\eqref{eq:a} and \eqref{eq:b}, respectively. 
	Now applying Gallager's $\rho$-trick again, we obtain that, for every $\rho_1 \in [0,1]$,
	\begin{align}
		&\P[\bigcup_{\revise{\Wc_{\rm aMD}} \subset [K_{\rm a} - {K_{\rm a}'} + 1:K_{\rm a}] \atop |\revise{\Wc_{\rm aMD}}| = t} \bigcup_{\revise{\Wc_{\rm FA}} \subset [K_{\rm a}+1:M] \atop |\revise{\Wc_{\rm FA}}| = t} \!\!F(\revise{\Wc_{\rm iMD}},\revise{\Wc_{\rm aMD}},\revise{\Wc_{\rm FA}})] \notag \\
		&\le \binom{K_{\rm a}'}{t}^{\rho_1} \binom{M-K_{\rm a}}{t}^{\rho\rho_1}  \notag \\&\quad \cdot 
		\E[\exp\left(\rho_1 b_0\|c(\revise{\Wc_{\rm iMD}}) + \rvVec{z}\|^2 - n\rho_1 a_0\right)]  \label{eq:tmp797_simp} \\
		&= \binom{K_{\rm a}'}{t}^{\rho_1} \binom{M-K_{\rm a}}{t}^{\rho\rho_1} e^{-n\rho_1 a_0} \big(1-\rho_1P_2b_0\big)^{-n}, \label{eq:tmp800_simp}
	\end{align}
	where the last equality follows by computing the expectation in~\eqref{eq:tmp797_simp} jointly over $c(\revise{\Wc_{\rm iMD}})$ and $\rvVec{z}$ using Lemma~\ref{lem:chi2}, and by setting $P_2= 1+(K_{\rm a} - K_{\rm a}')P'$. Finally, substituting the result into \eqref{eq:tmp901_simp}, we obtain 
	\begin{align}
		&\P[|\revise{\Wc_{\rm aMD}}| = t] \notag \\
		&\le\binom{K_{\rm a}'}{t}^{\rho_1} \binom{M-K_{\rm a}}{t}^{\rho\rho_1} e^{-n\rho_1 a_0} \big(1-\rho_1P_2b_0\big)^{-n} \\
        &\defeq p_{t,t}. \label{eq:tmp1148_simp}
	\end{align}
	
	\subsubsection{The DT-Based Approach} \label{sec:2nd_approach}
	Next, we present an alternative bound on $\P[{|\revise{\Wc_{\rm aMD}}| = t}]$. Consider the channel law $P_{\rvVec{y} \cond c(\revise{\Wc_{\rm MD}}), c(\Wc \setminus \revise{\Wc_{\rm MD}})}$ with input $c(\revise{\Wc_{\rm MD}})$ and output $\rvVec{y}$ where $|\revise{\Wc_{\rm aMD}}| = t$. The corresponding information density~\cite[Def. 17.1]{Polyanskiy2019lecture} 
	is given by
	\begin{align}
		&\imath_t(c(\revise{\Wc_{\rm MD}});\rvVec{y} \cond c(\Wc \setminus \revise{\Wc_{\rm MD}})) \notag \\
		&= n \ln(1+(t+K_{\rm a}-K_{\rm a}')P') + \frac{\|\rvVec{y} - c(\Wc \setminus \revise{\Wc_{\rm MD}})\|^2}{1+(t+K_{\rm a}-K_{\rm a}')P'} \notag \\
		&\quad - \|\rvVec{y} - c(\revise{\Wc_{\rm MD}}) - c(\Wc \setminus \revise{\Wc_{\rm MD}})\|^2. 
	\end{align}
	Note that~\eqref{eq:tmp1798} is equivalent to $$\imath_t(c(\revise{\Wc_{\rm FA}});\rvVec{y} \cond c(\Wc \setminus \revise{\Wc_{\rm MD}})) > \imath_t(c(\revise{\Wc_{\rm aMD}});\rvVec{y} \cond c(\Wc \setminus \revise{\Wc_{\rm MD}})).$$ 
	Let
	$\rv{I}_t \defeq \displaystyle\min_{\revise{\Wc_{\rm aMD}} \subset [K_{\rm a} - {K_{\rm a}'} + 1:K_{\rm a}] \atop |\revise{\Wc_{\rm aMD}}| = t} \imath_t(c(\revise{\Wc_{\rm aMD}});\rvVec{y} \cond c(\Wc \setminus \revise{\Wc_{\rm MD}})).$
	For a fixed arbitrary $\gamma$, it follows that
	\begin{align}
		&\P[{|\revise{\Wc_{\rm aMD}}| = t}] \notag \\
		&=\P[I_{t} \le \gamma]\P[{|\revise{\Wc_{\rm aMD}}| = t \;\big|\; I_{t} \le \gamma}] \notag \\ &\quad 
		+ \P[I_{t} > \gamma]\P[{|\revise{\Wc_{\rm aMD}}| = t \;\big|\; I_{t} > \gamma}] \\
		&\le \P[I_{t} \le \gamma] + \P[{|\revise{\Wc_{\rm aMD}}| = t \;\big|\; I_{t} > \gamma}] \label{eq:tmp838_simp}\\
		&= \P[I_{t} \le \gamma] \notag \\
		&\quad+ \P\Bigg[ \bigcup_{\revise{\Wc_{\rm aMD}} \subset [K_{\rm a} - {K_{\rm a}'} + 1:K_{\rm a}] \atop |\revise{\Wc_{\rm aMD}}| = t} \bigcup_{\revise{\Wc_{\rm FA}} \subset [K_{\rm a}+1:M] \atop |\revise{\Wc_{\rm FA}}| = t} \notag \\
		&\qquad \qquad  \big\{\imath_t(c(\revise{\Wc_{\rm FA}});\rvVec{y} \cond c(\Wc \setminus \revise{\Wc_{\rm MD}})) \notag \\ &\qquad \qquad ~ 
		> \imath_t(c(\revise{\Wc_{\rm aMD}});\rvVec{y} \cond c(\Wc \setminus
		\revise{\Wc_{\rm MD}}))\big\} \Big| I_{t} > \gamma\Bigg] \label{eq:tmp814_simp} \\
		&\le  \P[I_{t} \le \gamma] \notag \\
		&\quad+ \P\Bigg[\bigcup_{\revise{\Wc_{\rm aMD}} \subset [K_{\rm a} - {K_{\rm a}'} + 1:K_{\rm a}] \atop |\revise{\Wc_{\rm aMD}}| = t} \bigcup_{\revise{\Wc_{\rm FA}} \subset [K_{\rm a}+1:M] \atop |\revise{\Wc_{\rm FA}}| = t}  \notag \\ &\qquad \qquad 
		\big\{\imath_t(c(\revise{\Wc_{\rm FA}});\rvVec{y} \cond c(\Wc \setminus \revise{\Wc_{\rm MD}})) > \gamma\big\}\Bigg]. \label{eq:tmp383_simp}
	\end{align}
	Here, \eqref{eq:tmp814_simp} follows by writing explicitly the event $\{|\revise{\Wc_{\rm aMD}}| = t\}$, and \eqref{eq:tmp383_simp} by relaxing the inequality inside the second probability.
	Using that $\P[\imath(x;\rv{y}) > \gamma] \le e^{-\gamma}, \forall x$~\cite[Cor.~17.1]{Polyanskiy2019lecture}, we obtain
	\begin{align}
		\P[\imath_t(c(\revise{\Wc_{\rm FA}});\rvVec{y} \cond c(\Wc \setminus \revise{\Wc_{\rm MD}})) > \gamma] \le e^{-\gamma}.
	\end{align}
	Then, by applying the union bound and taking the infimum over $\gamma$, we conclude that
	\begin{align}
		&\P[{|\revise{\Wc_{\rm aMD}}| = t}] \notag \\
		&\le \inf_{\gamma} \bigg( \P[\rv{I}_t \le \gamma] +  \binom{K_{\rm a}'}{t}  \binom{M-K_{\rm a}}{t}  e^{-\gamma} \bigg) \\ 	&
		\defeq q_{t,t}. \label{eq:tmp1201_simp}
	\end{align}
	This concludes the DT-based approach.
	
	It follows from \eqref{eq:tmp1148_simp} and \eqref{eq:tmp1201_simp} that
	$\P[|\revise{\Wc_{\rm aMD}}| = t] \le \min\left\{p_{t,t}, q_{t,t} \right\}$.
	Introducing this bound into \eqref{eq:tmp850_simp} and \eqref{eq:tmp853_simp}, we obtain that the \gls{MD} and \gls{FA} probabilities, averaged over the Gaussian codebook ensemble, are upper-bounded by $\epsilon_{\rm MD}$ and $\epsilon_{\rm FA}$ given in~\eqref{eq:eps_MD_simp} and \eqref{eq:eps_FA_simp}, respectively. 
	

	\subsection{The General Case}  \label{sec:general_case}
	We now explain how the result in the special case considered in the previous subsection can be extended to the general case where $\rv{K}_{\rm a}$ and $\rv{K}'_{\rm a}$ are random and $r \ge 0$. 
	For random $\rv{K}_{\rm a}$ and $\rv{K}'_{\rm a}$, one has to take into account all the possible combinations of the number of transmitted messages and decoded messages when computing the expectations in~\eqref{eq:pMD} and \eqref{eq:pFA}. Consider the event that $K_{\rm a}$ users are active and the estimation of $K_{\rm a}$ results in $K_{\rm a}'$, which we denote by $K_{\rm a} \to K_{\rm a}'$. As in the special case, we assume \gls{wlog} that $\widetilde{\Wc} = [{K}_{\rm a}]$. Furthermore, exploiting symmetry, we let $\revise{\Wc_{\rm iMD}} = [({K}_{\rm a} - \overline{K_{\rm a}'})^+]$ the list of $({K}_{\rm a} - \overline{K_{\rm a}'})^+$ initial \glspl{MD} due to insufficient decoded list size, and $\revise{\Wc_{\rm aMD}} = \revise{\Wc_{\rm MD}} \setminus \revise{\Wc_{\rm iMD}}$ denote the list of $t$ additional \glspl{MD} occurring during the decoding process.  
	Note also that,
	if $\underline{K_{\rm a}'} > K_{\rm a}$, the decoder always outputs more than $K_{\rm a}$ messages. Hence, at least $\underline{K_{\rm a}'} - K_{\rm a}$ decoded messages are falsely alarmed. Exploiting symmetry, we let \gls{wlog} $\revise{\Wc_{\rm iFA}} = [{K}_{\rm a} + 1: \underline{K_{\rm a}'}]$ the list of $(\underline{K_{\rm a}'}-{K}_{\rm a})^+$ initial \glspl{FA} due to excessive decoded list size, and $\revise{\Wc_{\rm aFA}} = \revise{\Wc_{\rm FA}} \setminus \revise{\Wc_{\rm iFA}}$ denote the list of $t'$ additional \glspl{FA} occurring during the decoding process. In Fig.~\ref{fig:venn}, we depict the relation between these sets of messages. Under these assumptions, $\revise{\Wc_{\rm aMD}}$ and $\revise{\Wc_{\rm aFA}}$ are generic subsets of $[({K}_{\rm a} - \overline{K_{\rm a}'})^+ + 1:{K}_{\rm a}]$  and $[\max\{{K}_{\rm a},\underline{K_{\rm a}'}\}+1 : M]$, respectively.
	
	\begin{figure}
		\centering
		\begin{tikzpicture}[thick,scale=.9, every node/.style={scale=.9}]
			\def\radius{2.15cm}
			\def\radiusB{0.9*\radius}
			\def\mycolorbox#1{\textcolor{#1}{\rule{2ex}{2ex}}}
			\colorlet{colori}{gray!80}
			\colorlet{colorii}{gray!20}
			
			\coordinate (ceni) at (0,0);
			\coordinate[xshift=1.05*\radius] (cenii);
			
			\coordinate (edge1a) at (-\radiusB,0.1cm);
			\coordinate(edge1b) at (\radiusB,-.2cm);
			
			\coordinate (edge2a) at (\radius-\radiusB-.1cm,.1cm);
			\coordinate (edge2b) at (\radius+\radiusB+.3cm,-.2cm);
			
			\draw[fill=colori,fill opacity=0.5] (ceni) circle (\radiusB);
			\draw[fill=colorii,fill opacity=0.5] (cenii) circle (\radius);
			\draw (ceni) circle (\radiusB);
			
			\draw (edge1a) to (edge2a);
			\draw (edge1b) to (edge2b);
			
			\draw[-latex] (-1.1*\radius,-0.6*\radius) node[below,xshift=-.4cm,text width=3cm,font=\linespread{1}\selectfont,align=center] {\small Distinct transmitted messages $\widetilde{\Wc}$} -- (-0.77*\radius,-0.5*\radius);
			\draw[-latex] (2.13*\radius,-0.6*\radius) node[below,text width=2cm,font=\linespread{1}\selectfont,align=center] {\small Decoded messages $\widehat{\Wc}$} -- (1.93*\radius,-0.5*\radius);
			
			\node[yshift=1.2*\radius,xshift=-1.25cm,text width=4.3cm,font=\linespread{1}\selectfont,align=center] {\small \glspl{MD}: \\ $\revise{\Wc_{\rm MD}} = \revise{\Wc_{\rm iMD}} \cup \revise{\Wc_{\rm aMD}}$ \\ $= \widetilde{\Wc} \setminus \widehat{\Wc}$};
			
			\node[yshift=1.2*\radius,xshift=1.5cm,text width=4.3cm,font=\linespread{1}\selectfont,align=center] at (cenii) {\small \glspl{FA}: \\ $\revise{\Wc_{\rm FA}} = \revise{\Wc_{\rm iFA}} \cup \revise{\Wc_{\rm aFA}}$ \\ $\ = \widehat{\Wc} \setminus \widetilde{\Wc}$};
			
			\node[xshift=1cm,text width=1.5cm,font=\linespread{1}\selectfont,align=center] at (ceni) {\small correctly decoded messages $\widetilde{\Wc} \cap \widehat{\Wc}$};
			\node[yshift=.43*\radius,xshift=-.68cm,text width=2.3cm,font=\linespread{1}\selectfont,align=center] at (ceni) {\small $~~~~({K}_{\rm a}-\overline{K_{\rm a}'})^+$ initial \glspl{MD} $~~~\revise{\Wc_{\rm iMD}}~~~$};
			\node[yshift=-.4*\radius,xshift=-.37*\radius,text width=1.8cm,font=\linespread{1}\selectfont,align=center] at (ceni) {\small $t$ additional \glspl{MD} $~~~~~~\revise{\Wc_{\rm aMD}}~~~$};
			\node[yshift=.4*\radius,xshift=.27*\radius,text width=1.7cm,font=\linespread{1}\selectfont,align=center] at (cenii) {\small $(\underline{K_{\rm a}'}-{K}_{\rm a})^+$ initial \glspl{FA} $~~~\revise{\Wc_{\rm iFA}}~~~$};
			\node[yshift=-1.2cm,xshift=.28*\radius,text width=1.9cm,font=\linespread{1}\selectfont,align=center] at (cenii) {\small $t'$ additional \glspl{FA} $~~~\revise{\Wc_{\rm aFA}}~~~$};
		\end{tikzpicture}
		\caption{A diagram depicting the relation between the sets of messages defined for the general case in Appendix~\ref{sec:general_case}.}
		\label{fig:venn}
	\end{figure}
	
	Note that in the special case considered in Appendix~\ref{sec:special_case}, $t$ can take value from $0$ to $K'_{\rm a}$ while $t' = t$. In the general case, instead:
	\begin{itemize}
		\item The possible values of $t$ belong to the set $\Tc$ defined in~\eqref{eq:T}. This is because the number of \glspl{MD}, given by $t+{(K_{\rm a}-\overline{K_{\rm a}'})}^+$, is upper-bounded by the total number $K_{\rm a}$ of transmitted messages, and by $M-\underline{K_{\rm a}'}$ (since at least $\underline{K_{\rm a}'}$ messages are returned).
		
		\item Given $t$, the integer $t'$ takes value in $\overline{\Tc}_t$ defined in~\eqref{eq:Tbart} because: i) the decoded list size, given by $K_{\rm a} - t - {(K_{\rm a} - \overline{K_{\rm a}'})}^+ + t' + {(\underline{K_{\rm a}'}-K_{\rm a})}^+$, must be in $[\underline{K_{\rm a}'} : \overline{K_{\rm a}'}]$; ii) the number of \glspl{FA}, given by $t'+ (\underline{K_{\rm a}'}-K_{\rm a})^+$, is upper-bounded by the number $M-K_{\rm a}$ of messages that are not transmitted, and by the maximal number $\overline{K_{\rm a}'}$ of decoded messages.
		
		\item If the decoded list size is further required to be strictly positive, 
		then $t'$ takes value in $\Tc_t$ defined in~\eqref{eq:Tt}.
	\end{itemize}
	
	Using the above definitions, the best approximation of $\widetilde{\Wc}$ that the decoder can produce is $\revise{\Wc_{\rm iFA}} \cup \revise{\Wc_{\rm aMD}} \cup (\Wc \setminus \revise{\Wc_{\rm MD}})$, while the actual decoded list, under $\widetilde{\Wc} \to \widehat{\Wc}$, is $\revise{\Wc_{\rm iFA}} \cup \revise{\Wc_{\rm aFA}} \cup (\Wc \setminus \revise{\Wc_{\rm MD}})$. Therefore, $\widetilde{\Wc} \to \widehat{\Wc}$ implies that $\|\rvVec{y} - c(\revise{\Wc_{\rm iFA}}) - c(\revise{\Wc_{\rm aFA}}) - c(\Wc \setminus \revise{\Wc_{\rm MD}})\|^2 < \|\rvVec{y} - c(\revise{\Wc_{\rm iFA}}) - c(\revise{\Wc_{\rm aMD}}) - c(\Wc \setminus \revise{\Wc_{\rm MD}})\|^2$, which is equivalent to
	\begin{align}
		&\|c(\revise{\Wc_{\rm iMD}}) + c(\revise{\Wc_{\rm aMD}})- c(\revise{\Wc_{\rm iFA}}) - c(\revise{\Wc_{\rm aFA}}) + \rvVec{z}\|^2 \notag \\ 	&\qquad 
		< \|c(\revise{\Wc_{\rm iMD}}) - c(\revise{\Wc_{\rm iFA}}) + \rvVec{z}\|^2. \label{eq:eventF}
	\end{align}
	Let $F(\revise{\Wc_{\rm iMD}},\revise{\Wc_{\rm aMD}},\revise{\Wc_{\rm iFA}},\revise{\Wc_{\rm aFA}})$ denote the set of $(\revise{\Wc_{\rm iMD}},\revise{\Wc_{\rm aMD}},\revise{\Wc_{\rm iFA}},\revise{\Wc_{\rm aFA}})$ such that~\eqref{eq:eventF} holds.

	We now compute the expectations in $P_{\rm MD}$ given by~\eqref{eq:pMD} and $P_{\rm FA}$ given by~\eqref{eq:pFA}. 
	Given $|\revise{\Wc_{\rm aMD}}| = t$ and $|\revise{\Wc_{\rm aFA}}| = t'$, we have that $|\revise{\Wc_{\rm MD}}| = t+(\rv{K}_{\rm a} - \overline{K_{\rm a}'})^+$, $|\revise{\Wc_{\rm FA}}| = t + (\underline{K_{\rm a}'}-\rv{K}_{\rm a})^+$, and $|\widehat{\Wc}| = \rv{K}_{\rm a} - t - (\rv{K}_{\rm a} - \overline{K_{\rm a}'})^+ + t' + (\underline{K_{\rm a}'}-\rv{K}_{\rm a})^+$. 
	It follows from \eqref{eq:pMD} and \eqref{eq:pFA} that, after the change of measure in Appendix~\ref{sec:change_measure}, $P_{\rm MD}$ and $P_{\rm FA}$ can be bounded as
	\begin{align}
		P_{\rm MD} &\le \sum_{K_{\rm a} =\max\{K_\ell,1\}}^{K_{u}} \Bigg(P_{\rv{K}_{\rm a}}(K_{\rm a}) \sum_{K_{\rm a}' = K_\ell}^{K_{u}} \sum_{t\in \Tc} \notag \\	&\qquad \frac{t+(K_{\rm a}-\overline{K_{\rm a}'})^+}{K_{\rm a}} 
		\P[|\revise{\Wc_{\rm aMD}}| = t, K_{\rm a} \to K_{\rm a}'] \Bigg)
		+ \tilde{p},  \label{eq:tmp850}\\ 
		P_{\rm FA} &\le \sum_{K_{\rm a} =K_\ell}^{K_{u}} \Bigg(P_{\rv{K}_{\rm a}}(K_{\rm a}) \sum_{K_{\rm a}' = K_\ell}^{K_{u}}  \sum_{t\in \Tc} \sum_{t' \in \Tc_t} \notag \\	&\qquad \quad
		\frac{t'+(\underline{K_{\rm a}'} - K_{\rm a})^+}{K_{\rm a} - t - {(K_{\rm a} - \overline{K_{\rm a}'})}^+ + t' + {(\underline{K_{\rm a}'}-K_{\rm a})}^+} \notag \\
		&\qquad \quad \cdot \P[|\revise{\Wc_{\rm aMD}}| = t, |\revise{\Wc_{\rm aFA}}| = t', K_{\rm a} \to K_{\rm a}'] \Bigg) + \tilde{p}. \label{eq:tmp853}
	\end{align} 
	
	Next, we proceed to bound the joint probability $\P[|\revise{\Wc_{\rm aMD}}| = t,K_{\rm a} \to K_{\rm a}']$ in \eqref{eq:tmp850} and the joint probability $\P[|\revise{\Wc_{\rm aMD}}| = t,|\revise{\Wc_{\rm aFA}}| = t',K_{\rm a} \to K_{\rm a}']$ in \eqref{eq:tmp853}. Let
	$A(K_{\rm a},K_{\rm a}') \defeq 
	\{m(\rvVec{y},K_{\rm a}') > m(\rvVec{y},K), \forall K \ne K_{\rm a}'\}$.
	Since the event $K_{\rm a} \to K_{\rm a}'$ implies that $|\widehat{\Wc}| \in [\underline{K_{\rm a}'}:\overline{K_{\rm a}'}]$ and that $A(K_{\rm a},K_{\rm a}')$ occurs,
	we have 
	\begin{align}
		&\P[|\revise{\Wc_{\rm aMD}}| = t, K_{\rm a} \to K_{\rm a}'] \notag \\
		&\le \P[{{|\revise{\Wc_{\rm aMD}}| = t, |\widehat{\Wc}|\in [\underline{K_{\rm a}'}:\overline{K_{\rm a}'}}],A(K_{\rm a},K_{\rm a}')}] \\
		&\le \min\Big\{\P[{|\revise{\Wc_{\rm aMD}}| = t, |\widehat{\Wc}|\in [\underline{K_{\rm a}'}:\overline{K_{\rm a}'}]}],  \notag \\
        &\qquad \qquad  \P[A(K_{\rm a},K_{\rm a}')] \Big\}, \label{eq:tmp883}
	\end{align}
	where \eqref{eq:tmp883} follows from the fact that the joint probability is upper-bounded by each of the individual probabilities. 
	Similarly, it follows that
	\begin{align}
		&\P[|\revise{\Wc_{\rm aMD}}| = t,|\revise{\Wc_{\rm aFA}}| = t', K_{\rm a} \to K_{\rm a}'] \notag \\
		 &\le \min\Big\{\P[{|\revise{\Wc_{\rm aMD}}| = t, |\revise{\Wc_{\rm aFA}}| = t', |\widehat{\Wc}|\in [\underline{K_{\rm a}'}:\overline{K_{\rm a}'}]}], \notag \\ 	&\qquad\qquad
		\P[A(K_{\rm a},K_{\rm a}')] \Big\}. \label{eq:tmp1054}
	\end{align}
	We next present the bounds on the probabilities $\P[A(K_{\rm a},K_{\rm a}')]$, $\P[{|\revise{\Wc_{\rm aMD}}| = t,  |\widehat{\Wc}| \in [\underline{K_{\rm a}'}:\overline{K_{\rm a}'}]}]$, and $\P[{|\revise{\Wc_{\rm aMD}}| = t, |\revise{\Wc_{\rm aFA}}| = t',  |\widehat{\Wc}| \in [\underline{K_{\rm a}'}:\overline{K_{\rm a}'}]}]$.
	
	\subsubsection{Bound on $\P[A(K_{\rm a},K_{\rm a}')]$} \label{sec:bound_xi}
	We have
	\begin{align}
		&\P[A(K_{\rm a},K_{\rm a}')] \notag \\
		&= \P[m(\rvVec{y},K_{\rm a}') > m(\rvVec{y},K), \forall K \ne K_{\rm a}'] \\ 
		&\le\min_{K \in [K_\ell:K_u] \colon K \ne K_{\rm a}'}\P[m(\rvVec{y},K_{\rm a}') > m(\rvVec{y},K)] \\
		&= \xi(K_{\rm a},K_{\rm a}'). \label{eq:tmp1077}
	\end{align}
	Note that under the new measure, $\rvVec{y} \sim \Cc\Nc(\mathbf{0},(1+K_{\rm a} P')\Id_n)$. 
	
	\subsubsection{Bounds on $\P[{|\revise{\Wc_{\rm aMD}}| = t,  |\widehat{\Wc}| \in [\underline{K_{\rm a}'}:\overline{K_{\rm a}'}]}]$} \label{sec:bound_tMDs}
	As in Appendix~\ref{sec:special_case}, we follow two approaches to bound $\P[{|\revise{\Wc_{\rm aMD}}| = t,  |\widehat{\Wc}| \in [\underline{K_{\rm a}'}:\overline{K_{\rm a}'}]}]$. The first approach is based on error exponent analyses and the second approach is based on the DT bound. In the first approach, we write the event $\{|\revise{\Wc_{\rm aMD}}| = t, |\widehat{\Wc}| \in [\underline{K_{\rm a}'}:\overline{K_{\rm a}'}]\}$ as a union of the pairwise events and obtain
	\begin{align}
		&\P[{|\revise{\Wc_{\rm aMD}}| = t, |\widehat{\Wc}| \in [\underline{K_{\rm a}'}:\overline{K_{\rm a}'}]}] \notag \\
		&= \P\Bigg[\bigcup_{t' \in \overline{\Tc}_t} \bigcup_{\revise{\Wc_{\rm aMD}} \subset [(K_{\rm a} - \overline{K_{\rm a}'})^+ + 1:K_{\rm a}] \atop |\revise{\Wc_{\rm aMD}}| = t}
		\bigcup_{\revise{\Wc_{\rm aFA}} \subset [\max\{K_{\rm a},\underline{K_{\rm a}'}\}+1:M] \atop |\revise{\Wc_{\rm aFA}}| = t'}  \notag \\ 	&\qquad \qquad
		F(\revise{\Wc_{\rm iMD}},\revise{\Wc_{\rm aMD}},\revise{\Wc_{\rm iFA}},\revise{\Wc_{\rm aFA}}) \Bigg]. \label{eq:tmp901} 
	\end{align}
	Then, by applying the Chernoff bound, Gallager's $\rho$-trick, and Lemma~\ref{lem:chi2}  following similar steps as in Appendix~\ref{sec:1st_approach}, we obtain
	\begin{align}
		\P[{|\revise{\Wc_{\rm aMD}}| = t, |\widehat{\Wc}| \in [\underline{K_{\rm a}'}:\overline{K_{\rm a}'}]}] \le p_t \label{eq:tmp1148}
	\end{align}
	with $p_t$ given by~\eqref{eq:pt}.  In the second approach, we consider the channel law $P_{\rvVec{y} \cond c(\revise{\Wc_{\rm MD}}), c(\Wc \setminus \revise{\Wc_{\rm MD}})}$ with input $c(\revise{\Wc_{\rm MD}})$ and output $\rvVec{y}$ where $|\revise{\Wc_{\rm aMD}}| = t$. The corresponding information density $\imath_t(c(\revise{\Wc_{\rm MD}});\rvVec{y} \cond c(\Wc \setminus \revise{\Wc_{\rm MD}}))$ is defined in~\eqref{eq:infor_den}. Note that~\eqref{eq:eventF} is equivalent to 
    \begin{align}
    &\imath_t(c(\revise{\Wc_{\rm iFA}})+c(\revise{\Wc_{\rm aFA}});\rvVec{y} \cond c(\Wc \setminus \revise{\Wc_{\rm MD}})) \notag \\
    &> \imath_t(c(\revise{\Wc_{\rm iFA}}) + c(\revise{\Wc_{\rm aMD}});\rvVec{y} \cond c(\Wc \setminus \revise{\Wc_{\rm MD}})).
    \end{align}
	Then, by proceeding as in Appendix~\ref{sec:2nd_approach}, we obtain 
	\begin{align}
		\P[{|\revise{\Wc_{\rm aMD}}| = t, |\widehat{\Wc}| \in [\underline{K_{\rm a}'}:\overline{K_{\rm a}'}]}] \le q_t \label{eq:tmp1201}
	\end{align}
	with $q_t$ given by~\eqref{eq:qt}.
	
	\subsubsection{Bounds on $\P\Big[|\revise{\Wc_{\rm aMD}}| = t, |\revise{\Wc_{\rm aFA}}| = t', |\widehat{\Wc}| \in [\underline{K_{\rm a}'}:\overline{K_{\rm a}'}]\Big]$} 
	First, we have that 
	\begin{align}
		&\P[{|\revise{\Wc_{\rm aMD}}| = t, |\revise{\Wc_{\rm aFA}}| = t', |\widehat{\Wc}| \in [\underline{K_{\rm a}'}:\overline{K_{\rm a}'}]}] \notag \\
		&= \P\Bigg[\bigcup_{\revise{\Wc_{\rm aMD}} \subset [(K_{\rm a} - \overline{K_{\rm a}'})^+ + 1:K_{\rm a}] \atop |\revise{\Wc_{\rm aMD}}| = t} 
		\bigcup_{\revise{\Wc_{\rm aFA}} \subset [\max\{K_{\rm a},\underline{K_{\rm a}'}\}+1:M] \atop |\revise{\Wc_{\rm aMD}}| = t'} \notag \\ 	&\qquad \quad
		F(\revise{\Wc_{\rm iMD}},\revise{\Wc_{\rm aMD}},\revise{\Wc_{\rm iFA}},\revise{\Wc_{\rm aFA}})\Bigg]. \label{eq:tmp365}
	\end{align}
	Notice that the probability $\P[{|\revise{\Wc_{\rm aMD}}| = t, |\revise{\Wc_{\rm aFA}}| = t', |\widehat{\Wc}| \in \big[\underline{K_{\rm a}'}:\overline{K_{\rm a}'}\big]}]$ differs from the probability $\P[{|\revise{\Wc_{\rm aMD}}| = t, |\widehat{\Wc}| \in [\underline{K_{\rm a}'}:\overline{K_{\rm a}'}]}]$ in~\eqref{eq:tmp901} only in that the union over ${t'\in \overline{\Tc}_t}$ is absent. By applying the Chernoff bound, Gallager's $\rho$-trick, and Lemma~\ref{lem:chi2} following similar steps as in Appendix~\ref{sec:1st_approach}, we conclude that
	\begin{align}
		\P[{|\revise{\Wc_{\rm aMD}}| = t, |\revise{\Wc_{\rm aFA}}| = t', |\widehat{\Wc}| \in [\underline{K_{\rm a}'}:\overline{K_{\rm a}'}]}] \le  p_{t,t'} \label{eq:tmp1217}
	\end{align}
	with $p_{t,t'}$ given by~\eqref{eq:ptt}.
	Alternatively, bounding $\P\Big[|\revise{\Wc_{\rm aMD}}| = t, |\revise{\Wc_{\rm aFA}}| = t', |\widehat{\Wc}| \in [\underline{K_{\rm a}'}:\overline{K_{\rm a}'}]\Big]$ as in Appendix~\ref{sec:2nd_approach}, we obtain 
	\begin{align}
		\P[{|\revise{\Wc_{\rm aMD}}| = t, |\revise{\Wc_{\rm aFA}}| = t', |\widehat{\Wc}| \in [\underline{K_{\rm a}'}:\overline{K_{\rm a}'}]}] \le q_{t,t'} \label{eq:tmp1226}
	\end{align}
	with $p_{t,t'}$ given by~\eqref{eq:qtt}.
	
	\vspace{.3cm}
	It now follows from \eqref{eq:tmp883}, \eqref{eq:tmp1077}, \eqref{eq:tmp1148}, and \eqref{eq:tmp1201} that 
	\begin{align}
		\P[|\revise{\Wc_{\rm aMD}}| = t, K_{\rm a} \to K_{\rm a}'] \le \min\left\{p_t, q_t, \xi(K_{\rm a},K_{\rm a}') \right\}.
	\end{align}
	From \eqref{eq:tmp1054}, \eqref{eq:tmp1077}, \eqref{eq:tmp1217}, and \eqref{eq:tmp1226}, we obtain that 
	\begin{align}
		&\P[|\revise{\Wc_{\rm aMD}}| = t, |\revise{\Wc_{\rm aFA}}| = t', K_{\rm a} \to K_{\rm a}'] \notag \\ 
		&\le \min\left\{p_{t,t'}, q_{t,t'}, \xi(K_{\rm a},K_{\rm a}') \right\}.
	\end{align}
	Substituting these bounds on $\P[|\revise{\Wc_{\rm aMD}}| = t, K_{\rm a} \to K_{\rm a}']$ and $\P[|\revise{\Wc_{\rm aMD}}| = t, |\revise{\Wc_{\rm aFA}}| = t', K_{\rm a} \to K_{\rm a}']$ into \eqref{eq:tmp850} and \eqref{eq:tmp853}, we deduce that the \gls{MD} and \gls{FA} probabilities, averaged over the Gaussian codebook ensemble, are upper-bounded by $\epsilon_{\rm MD}$ and $\epsilon_{\rm FA}$ given in~\eqref{eq:eps_MD} and \eqref{eq:eps_FA}, respectively. Finally, by proceeding as in \cite[Th.~19]{Polyanskiy2011feedback}, one can show that there exists a randomized coding strategy that achieves \eqref{eq:eps_MD} and \eqref{eq:eps_FA} and involves time-sharing among at most three deterministic codes, as explained in Remark~\ref{remark:U}.

	\section{Proof of Theorem~\ref{thm:xi}}
	\label{proof:xi}
	Let $\rvVec{y}_0 \sim \Cc\Nc(\mathbf{0},(1+K_{\rm a}P')\Id_n)$.
	The probability density function of $\rvVec{y}_0$ is given by
	$
	p_{\rvVec{y}_0}(\yv_0) = \frac{1}{\pi^n (1+K_{\rm a} P')^n} \exp\left(-\frac{\|\yv_0\|^2}{1+K_{\rm a} P'}\right).
	$
	Therefore, with \gls{ML} estimation of $\rv{K}_{\rm a}$, we have that
	$
	m(\rvVec{y}_0,K) = \ln p_{\rvVec{y}_0}(\yv_0) = - \frac{\|\yv_0\|^2}{1+K P'} - n\ln(1+K P') - n \ln \pi.
	$
	As a consequence, the event $m\left(\rvVec{y}_0,K_{\rm a}'\right) > m\left(\rvVec{y}_0,K\right)$ can be written as $\frac{\|\rvVec{y}_0\|^2}{1+K_{\rm a}' P'} + n\ln(1+K_{\rm a}' P') < \frac{\|\rvVec{y}_0\|^2}{1+K P'} + n\ln(1+K P')$, or equivalently, 
	\begin{equation}
		\|\rvVec{y}_0\|^2 \left(\frac{1}{1+K_{\rm a}'P'} - \frac{1}{1+KP'}\right) < n \ln\left(\frac{1+KP'}{1+K_{\rm a}'P'}\right). \label{eq:eventKa}
	\end{equation}
	Using the fact that $\|\rvVec{y}_0\|^2$ follows a Gamma distribution with shape $n$ and scale $1+K_{\rm a} P'$, we deduce that $\xi(K_{\rm a},K_{\rm a}')$ is given by~\eqref{eq:xi_ML} 	with $
	\zeta(K,K_{\rm a},K_{\rm a}')$ given by~\eqref{eq:zeta_ML}.
	
	For energy-based estimation, i.e., $m(\yv,K) = -\big|\|\yv\|^2 - n(1 + KP') \big|$, we can show after some manipulations that the event  $m\left(\rvVec{y}_0,K_{\rm a}'\right) > m\left(\rvVec{y}_0,K\right)$ is equivalent to
	\begin{align}
		\begin{cases}
			\|\rvVec{y}_0\|^2 > n\left(1 + \frac{K_{\rm a} + K_{\rm a}'}{2}P'\right), &\text{if~} K_{\rm a}' < K_{\rm a}, \\
			\|\rvVec{y}_0\|^2 < n\left(1 + \frac{K_{\rm a} + K_{\rm a}'}{2}P'\right), &\text{if~} K_{\rm a}' > K_{\rm a}. 
		\end{cases}
	\end{align}
	Thus, using that $\|\rvVec{y}_0\|^2$ is Gamma distributed, we deduce that $\xi(K_{\rm a},K_{\rm a}')$ is given by~\eqref{eq:xi_ML} with $\zeta(K,K_{\rm a},K_{\rm a}')$ given by~\eqref{eq:zeta_energy}.

	\section{Proof of \revisee{Corollary}~\ref{cor:error_floor}} \label{proof:error_floor}
	\revisee{We evaluate the bounds $\epsilon_{\rm MD}$ and $\epsilon_{\rm FA}$ given in~\eqref{eq:eps_MD} and~\eqref{eq:eps_FA}, respectively, in the limit $P\to\infty$.}  First, the optimal value of $P'$ minimizing \revisee{these} bounds must grow with $P$ since otherwise $\tilde{p}$ will be large. Therefore, as $P\to \infty$, we can assume without loss of optimality that $P' \to \infty$. Next, when $t = t' = 0$, we can verify that $a = b = 0$, thus $E_0(\rho,\rho_1) = 0$ and $E(0,0) = 0$, achieved with $\rho = \rho_1 = 0$. Therefore, $p_0 = p_{0,0} = e^{-n\cdot 0} = 1$. We can also verify that $q_0$ and $q_{0,0}$ both converge to $1$ as $P' \to \infty$.
	When $P' \to \infty$, $\xi(K_{\rm a},K_{\rm a}')$ given in Theorem~\ref{thm:xi} converges to the right-hand side of \eqref{eq:xi_ML} with $\zeta(K,K_{\rm a},K_{\rm a}') = n \ln\big(\frac{K}{K_{\rm a}'}\big) K_{\rm a}^{-1}\big(\frac{1}{K_{\rm a}'} - \frac{1}{K}\big)^{-1}$ for ML estimation of $\rv{K}_{\rm a}$ and $\zeta(K,K_{\rm a},K_{\rm a}') = n\frac{K+K_{\rm a}'}{2 K_{\rm a}}$ for energy-based estimation of $\rv{K}_{\rm a}$. Furthermore, the last term in $\tilde{p}$ given by~\eqref{eq:p0} vanishes and thus $\tilde{p} \to \bar{p}$. Finally, the lower bounds $\bar{\epsilon}_{\rm MD}$ and $\bar{\epsilon}_{\rm FA}$ follows by substituting the asymptotic values of $p_0$, $q_0$, $p_{0,0}$, $q_{0,0}$, $\xi(K_{\rm a},K_{\rm a}')$, and $\tilde{p}$ computed above into $\epsilon_{\rm MD}$ and $\epsilon_{\rm FA}$, and by setting $\min\{p_t,q_t\}$ to zero for $t \ne 0$, and setting $\min\{p_{t,t'},q_{t,t'}\}$ to zero for $(t,t')\ne (0,0)$. 
	
	\section{Proof of Theorem~\ref{th:converse}} \label{proof:converse}
	The \gls{MD} and \gls{FA} probabilities are computed as $P_{\rm MD}=\E[\frac{|\Wc_{\rm MD}|}{|\widetilde{\Wc}|}]$ and $P_{\rm FA} = \E[\fa]$, respectively.  As derived in \eqref{eq:P_msg_collision} in Appendix~\ref{sec:change_measure}, the probability that at least two active users choose the same message to transmit is given by $\P[|\widetilde{\Wc}| = \rv{K}_{\rm a}] = \E_{\rv{K}_{\rm a}}\left[\frac{M!}{M^{\rv{K}_{\rm a}}(M-\rv{K}_{\rm a})!} \right]$. We have that
	\begin{align}
		P_{\rm MD} &= \P[|\widetilde{\Wc}| = \rv{K}_{\rm a}] \E[\frac{|\Wc_{\rm MD}|}{|\widetilde{\Wc}|} \; \bigg\vert\; |\widetilde{\Wc}| = \rv{K}_{\rm a}] \notag \\
        &\quad +  \P[|\widetilde{\Wc}| < \rv{K}_{\rm a}] \E[\frac{|\Wc_{\rm MD}|}{|\widetilde{\Wc}|} \; \bigg\vert\;  |\widetilde{\Wc}| < \rv{K}_{\rm a}] \\
		&\ge \E_{\rv{K}_{\rm a}}\left[\frac{M!}{M^{\rv{K}_{\rm a}}(M-\rv{K}_{\rm a})!} \right] \E[\frac{|\Wc_{\rm MD}|}{|\widetilde{\Wc}|} \; \bigg\vert\; |\widetilde{\Wc}| = \rv{K}_{\rm a}]. \label{eq:tmp501}
	\end{align}
	Similarly, it follows that
	\begin{align}
		P_{\rm FA} \ge \E_{\rv{K}_{\rm a}}\left[\frac{M!}{M^{\rv{K}_{\rm a}}(M-\rv{K}_{\rm a})!} \right] \E[\fa \; \bigg\vert\; |\widetilde{\Wc}| = \rv{K}_{\rm a}]. \label{eq:tmp505}
	\end{align}
	Denote the event that the estimation step outputs $K'_{\rm a}$ when $K_{\rm a}$ users are active, which we denote by $K_{\rm a} \to K'_{\rm a}$. Under this event and the condition in the right-hand side of~\eqref{eq:tmp501} and~\eqref{eq:tmp505}, we have that $|\widetilde{\Wc}| = K_{\rm a}$ and $\underline{K_{\rm a}'} \le |\widehat{\Wc}|\le \overline{K_{\rm a}'}$. The second expectations in the right-hand side of~\eqref{eq:tmp501} and~\eqref{eq:tmp505} can be expanded as
	\begin{align}
		&\E[\md \; \bigg\vert\; |\widetilde{\Wc}| = \rv{K}_{\rm a}] \notag \\
        &=  \sum_{K_{\rm a} \in \Dc, \; K_{\rm a} > 0} \bigg(P_{\rv{K}_{\rm a}}(K_{\rm a}) \sum_{K_{\rm a}' = K_\ell}^{K_u} \frac{|\widetilde{\Wc} \setminus \widehat{\Wc}|}{K_{\rm a}} \P[K_{\rm a} \to K_{\rm a}] \bigg),  \label{eq:tmp508296}\\
		&\E[\fa \; \bigg\vert\; |\widetilde{\Wc}| = \rv{K}_{\rm a}] \notag \\
        &= \sum_{K_{\rm a} \in \Dc} \bigg(P_{\rv{K}_{\rm a}}(K_{\rm a})	\sum_{K_{\rm a}' =K_\ell}^{K_u} \frac{|\widehat{\Wc} \setminus \widetilde{\Wc}|}{|\widehat{\Wc}|} \P[K_{\rm a} \to K_{\rm a}'] \bigg) \notag \\
		&\ge \sum_{K_{\rm a} \in \Dc} \bigg(P_{\rv{K}_{\rm a}}(K_{\rm a})	\sum_{K_{\rm a}' =K_\ell}^{K_u} \frac{|\widehat{\Wc} \setminus \widetilde{\Wc}|}{\overline{K_{\rm a}'}} \P[K_{\rm a} \to K_{\rm a}'] \bigg), \label{eq:tmp509275}
	\end{align}
	where $\Dc$ is the domain of $\rv{K}_{\rm a}$. Furthermore, it is straightforward that 
	\begin{align}
		|\widetilde{\Wc} \setminus \widehat{\Wc}| \ge (|\widetilde{\Wc}| - |\widehat{\Wc}|)^+ \ge (K_{\rm a} - \overline{K_{\rm a}'})^+, \label{eq:tmp508} \\
		|\widehat{\Wc} \setminus \widetilde{\Wc}| \ge (|\widehat{\Wc}| - |\widetilde{\Wc}|)^+ \ge (\underline{K_{\rm a}'} - K_{\rm a})^+. \label{eq:tmp509}
	\end{align}
	The first inequality in~\eqref{eq:tmp508} become equality if $\widehat{\Wc} \subset \widetilde{\Wc}$, i.e., all decoded messages have been transmitted. The first inequality in~\eqref{eq:tmp509} becomes equality if $\widetilde{\Wc} \subset \widehat{\Wc}$, i.e., all transmitted messages are decoded. In short, the first inequalities  in~\eqref{eq:tmp508} and~\eqref{eq:tmp509} become equalities if the MDs and FAs are caused by the mismatch between $K_{\rm a}$ and $K_{\rm a}'$ only. The second inequalities  in~\eqref{eq:tmp508}  and~\eqref{eq:tmp509} are due to $\underline{K_{\rm a}'} \le |\widehat{\Wc}|\le \overline{K_{\rm a}'}$. Next, by substituting~\eqref{eq:tmp508} into~\eqref{eq:tmp508296} and~\eqref{eq:tmp509} into~\eqref{eq:tmp509275}, by ignoring the terms for $K_{\rm a} \notin [K_\ell:K_u]$, and by averaging over $P_{\rv{x}}$, we obtain the converse bounds~\eqref{eq:converse_MD} and~\eqref{eq:converse_FA}.

	\section{Proof of Theorem~\ref{thm:TIN}} \label{app:TIN}
We first apply the same change of measure as in Appendix~\ref{sec:change_measure} 
at a cost of adding the term $\tilde{p}$. Then $P_{\rm MD}$ and $P_{\rm FA}$ can be bounded as
\begin{align}
	&P_{\rm MD} \notag \\
	&\leq \sum_{K_{\rm a} = \max\{K_\ell,1\}}^{K_{u}} \sum_{K_{\rm a}^{\prime} = K_\ell}^{K_{u}} \frac{P_{\rv{K}_{\rm a}}(K_{\rm a})}{K_{\rm a}}  \sum\limits_{j=1}^{K_{\rm a}} \mathbb{P}\left[\rv{w}_j \notin \widehat{\Wc},K_{\rm a} \to K_{\rm a}^{\prime}\right] \notag \\
    &\quad + \tilde{p} \\
	&\leq \sum_{K_{\rm a} = \max\{K_\ell,1\}}^{K_{u}} \sum_{K_{\rm a}^{\prime} = K_\ell}^{K_{u}} \frac{P_{\rv{K}_{\rm a}}(K_{\rm a})}{K_{\rm a}}  \sum\limits_{j=1}^{K_{\rm a}} \notag \\
    &\qquad \quad \min\left\{\P[\rv{w}_j \notin \widehat{\Wc}],\P[K_{\rm a}\to K_{\rm a}^{\prime}]\right\} \notag \\
    &\quad + \tilde{p}, \label{eq:md_alt_0}\\ 
	&P_{\rm FA} \notag \\
	&\leq \sum_{K_{\rm a} =K_\ell}^{K_{u}} \sum_{K_{\rm a}^{\prime} =K_\ell}^{K_{u}} \frac{P_{\rv{K}_{\rm a}}(K_{\rm a})}{K_{\rm a}^{\prime}} \sum\limits_{j=1}^{K_{\rm a}^{\prime}} \P[\widehat{\rv{w}}_j \notin \mathcal{W},K_{\rm a}\to K_{\rm a}^{\prime} ] \notag \\
    &\quad + \tilde{p}, \\
	&\leq \sum_{K_{\rm a} =K_\ell}^{K_{u}} \sum_{K_{\rm a}^{\prime} =K_\ell}^{K_{u}} \frac{P_{\rv{K}_{\rm a}}(K_{\rm a})}{K_{\rm a}^{\prime}} \sum\limits_{j=1}^{K_{\rm a}^{\prime}} \notag \\
    &\qquad \quad \min\left\{\P[\widehat{\rv{w}}_j \notin \mathcal{W}],\P[K_{\rm a}\to K_{\rm a}^{\prime} ]\right\} \notag \\
    &\quad + \tilde{p}. \label{eq:fa_alt_0}
\end{align}

Given that the decoded list size is $K_{\rm a}'$, if ${K_{\rm a}'} < K_{\rm a}$, the decoder commits at least $K_{\rm a} - {K_{\rm a}'}$ \glspl{MD}. On the contrary, if $K_{\rm a}' > K_{\rm a}$, the decoder commits at least ${K_{\rm a}'} - K_{\rm a}$ \glspl{FA}. We assume \gls{wlog} that the last $({K}_{\rm a} - {K_{\rm a}'})^+$ transmitted codewords are initially misdetected due to insufficient decoded list size, and that the last $({K_{\rm a}'}-{K}_{\rm a})^+$ decoded codewords are initially false-alarmed  due to excessive decoded list size. Then $\P\big[\rv{w}_j \notin \widehat{\Wc}\big] = 1$, $\forall j \in [\min\{K_{\rm a},K_{\rm a}'\} + 1: K_{\rm a}]$, and $\P[\widehat{\rv{w}}_j \notin \mathcal{W}] = 1$, $\forall j\in [\min\{K_{\rm a},K_{\rm a}'\} + 1: K_{\rm a}']$. As a consequence, we can write \eqref{eq:md_alt_0} and \eqref{eq:fa_alt_0} as
\begin{align}
	P_{\rm MD} &\leq \sum_{K_{\rm a} =\max\{K_\ell,1\}}^{K_{u}} \sum_{K_{\rm a}^{\prime} = K_\ell}^{K_{u}} \frac{P_{\rv{K}_{\rm a}}(K_{\rm a})}{K_{\rm a}} \notag \\
    &\quad \cdot \Biggl( (K_{\rm a}-K_{\rm a}^{\prime})^+ \P[K_{\rm a}\to K_{\rm a}^{\prime}]  \notag \\
	&\qquad + \sum\limits_{j=1}^{\min\{K_{\rm a},K_{\rm a}^{\prime}\}} \min\left\{\P[\rv{w}_j \notin \widehat{\Wc}],\P[K_{\rm a}\to K_{\rm a}^{\prime}]\right\}\Biggr) \notag \\
    &\quad + \tilde{p}, \label{eq:md_alt2_0}\\
	P_{\rm FA} &\leq \sum_{K_{\rm a} =K_\ell}^{K_{u}} \sum_{K_{\rm a}^{\prime} = K_\ell}^{K_{u}} \frac{P_{\rv{K}_{\rm a}}(K_{\rm a})}{K_{\rm a}^{\prime}} \Biggl((K_{\rm a}^{\prime}-K_{\rm a})^+\P[K_{\rm a}\to K_{\rm a}^{\prime}]  \notag \\
	&\qquad + \sum\limits_{j=1}^{\min\{K_{\rm a},K_{\rm a}^{\prime}\}} \min\left\{\P[\widehat{\rv{w}}_j \notin {\Wc}],\P[K_{\rm a}\to K_{\rm a}^{\prime}]\right\}\Biggr) \notag \\
    &\quad + \tilde{p}.\label{eq:fa_alt2_0}
\end{align}
We proceed to bound the probabilities $\P\big[\rv{w}_j \notin \widehat{\Wc}\big]$ and $\P[\widehat{\rv{w}}_j \notin \widetilde{\mathcal{W}}]$ for $j\in [\min\{K_{\rm a},K_{\rm a}'\}]$.

\subsubsection{Bound on $\P\big[\rv{w}_j \notin \widehat{\Wc}\big]$ for $j\in [\min\{K_{\rm a},K_{\rm a}'\}]$}
The event $\rv{w}_j \notin \widehat{\Wc}$ 
implies that at least one codeword that was not transmitted is closer to $\rvVec{y}$ than $\cv_{\rv{w}_j}$. 
%
Therefore, for a given codebook $\{\cv_1,\dots,\cv_M\}$, $\P\big[\rv{w}_j \notin \widehat{\Wc}\big]$ can be upper-bounded as
\begin{align}
	&\P\big[\rv{w}_j \notin \widehat{\Wc}\big] \notag \\
	&\le \P[{\exists i \in [M] \setminus \Wc \colon \lVert\rvVec{y}-\cv_{i}\rVert^2\leq \lVert\rvVec{y}-\cv_{\rv{w}_j}\rVert^2 }] \\
	&= \P[\bigcup\limits_{i \in [M] \setminus \Wc}\left\{\lVert\rvVec{y}-\cv_{i}\rVert^2\leq \lVert\rvVec{y}-\cv_{\rv{w}_j}\rVert^2\right\}]. \label{eq:tmp2070_0}
\end{align}
By applying the union bound on~\eqref{eq:tmp2070_0}, using the fact that the codewords are i.i.d., we conclude that 
\begin{align}
	\P&\big[\rv{w}_j \notin \widehat{\Wc}\big] \le \notag \\
    &\E_{\rvVec{y}, \rvVec{c}}\left[ \min\left\{1, (M-K_{\rm a}) \P[\lVert\rvVec{y}-\tilde{\rvVec{c}}\rVert^2\leq \lVert\rvVec{y}-\rvVec{c}\rVert^2 \cond \rvVec{y}, \rvVec{c}]  \right\}\right],
\end{align}
where $\{\rvVec{y}, \rvVec{c}, \tilde{\rvVec{c}}\}$ has the same joint distribution as $\{\rvVec{y}, \cv_{\rv{w}_j}, \cv_i\}$, $i \in [M] \setminus \Wc$. 
Next, by applying the Chernoff bound and proceeding as in~\cite[App.~A]{Ostman20}, we obtain the following RCUs bound~\cite[Th. 16]{Polyanskiy2010} for every $s>0$:
\begin{align} 
	&\P[\rv{w}_j \notin \widehat{\Wc}] \le \notag \\
    &~\E[ \min\left\{1, \exp\bigg(\ln(M-K_{\rm a}) - \sum\limits_{i=1}^n  \imath_s(\rv{x}_{i};\rv{y}_i)\bigg)\right\}]\label{eq:tmp2154_0}
\end{align}
with $\imath_s(x;y)$ given by~\eqref{eq:gen_infor_den} and $\{\rv{x}_i,\rv{y}_i\}$ defined in Theorem~\ref{thm:TIN}. Finally, by observing that, for every
positive random variable $\rv{v}$, it holds that $\E[\min\{1, \rv{v}\}] =
\P[\rv{v} \ge \rv{u}]$ where $\rv{u}$ is uniformly distributed on $[0,1]$, we obtain that the right-hand side of~\eqref{eq:tmp2154_0} is given by $\eta_s$ defined in~\eqref{eq:eta_s}.

\subsubsection{Bound on $\P[\widehat{\rv{w}}_j \notin \widetilde{\Wc}]$ for $j\in [\min\{K_{\rm a},K_{\rm a}'\}]$}
The event $\widehat{\rv{w}}_j \notin \widetilde{\Wc} = \{\rv{w}_1,\dots,\rv{w}_{K_{\rm a}}\}$ implies that $\widehat{\rv{w}}_j$ is closer to $\rvVec{y}$ than at least one transmitted codeword. 
In this case, we assume \gls{wlog} that $\|\rvVec{y} - \cv_{\widehat{\rv{w}}_j}\|^2 \le \|\rvVec{y} - \cv_{\rv{w}_1}\|^2$. It follows that
\begin{align}
	\P[\widehat{\rv{w}}_j \notin \widetilde{\Wc}] 
	&\le \P[\bigcup_{\widehat{\rv{w}}_j \in [M] \setminus \Wc}  \{\|\rvVec{y} - \cv_{\widehat{\rv{w}}_j}\|^2 \le \|\rvVec{y} - \cv_{\rv{w}_1}\|^2\} ]. \label{eq:tmp2098_0}
\end{align}
Next, by applying the union bound, 
then applying the Chernoff bound and proceeding as in \cite[App.~A]{Ostman20}, we deduce that $\P[\widehat{\rv{w}}_j \notin \widetilde{\Wc}]$ is upper-bounded by the right-hand side of \eqref{eq:tmp2154_0}, which can be expressed as $\eta_s$ defined in~\eqref{eq:eta_s}.

\vspace{.2cm}
Furthermore, the probability $\P[K_{\rm a}\to K_{\rm a}^{\prime}]$ can be upper-bounded by $\xi(K_{\rm a},K_{\rm a}')$ as in Appendix~\ref{sec:bound_xi}. By substituting the bounds on $\P\big[\rv{w}_j \notin \widehat{\Wc}\big]$, $\P[\widehat{\rv{w}}_j \notin \widetilde{\Wc}]$, and $\P[K_{\rm a}\to K_{\rm a}^{\prime}]$ into~\eqref{eq:md_alt2_0} and~\eqref{eq:fa_alt2_0}, we deduce that $P_{\rm MD}$ and $P_{\rm FA}$, averaged over the Gaussian codebook ensemble, are upper-bounded by $\epsilon_{\rm MD}$ and $\epsilon_{\rm FA}$ given in~\eqref{eq:md_TIN} and \eqref{eq:fa_TIN}, respectively. Finally, by proceeding as in \cite[Th.~19]{Polyanskiy2011feedback}, one can show that there exists a randomized coding strategy that achieves \eqref{eq:eps_MD} and \eqref{eq:eps_FA} simultaneously, and involves time-sharing among at most three deterministic codes.

\bibliographystyle{IEEEtran}
\bibliography{IEEEabrv,./biblio}

\begin{IEEEbiographynophoto}{Khac-Hoang Ngo}
	(Member, IEEE) received the B.E. degree (Hons.) in electronics and telecommunications from University of Engineering and Technology, Vietnam National University, Hanoi, Vietnam, in 2014; and the M.Sc. degree (Hons.) and Ph.D. degree in wireless communications from CentraleSupélec, Paris-Saclay University, France, in 2016 and 2020, respectively. His Ph.D. thesis was also realized at Paris Research Center, Huawei Technologies France. Since September 2020, he has been a postdoctoral researcher at Chalmers University of Technology, Sweden. He is also an adjunct lecturer at University of Engineering and Technology, Vietnam National University Hanoi, Vietnam. His research interests include wireless communications and information theory, with an emphasis on massive random access, distributed learning, MIMO, noncoherent communications, coded caching, and network coding. He received the ``Signal, Image \& Vision Ph.D. Thesis Prize'' by Club EEA, GRETSI and GdR-ISIS, France, and the Marie Skłodowska-Curie Actions (MSCA) Individual Fellowship in 2021.
\end{IEEEbiographynophoto}

\begin{IEEEbiographynophoto}{Alejandro Lancho}
    (Member, IEEE) received the B.E., M.Sc., and Ph.D. degrees in electrical engineering from the Universidad Carlos III de Madrid, Spain, in 2013, 2014, and 2019, respectively.
    From 2019-2021, he was a Post-Doctoral Researcher with Chalmers University of Technology, Sweden. Since 2021, he is a Marie Curie Postdoctoral Global Fellow at the Massachusetts Institute of Technology, USA. His research interests include information theory and deep learning for wireless communications. He was among the six finalists for the IEEE Jack Keil Wolf ISIT Student Paper Award at the 2017 IEEE International Symposium on Information Theory.
\end{IEEEbiographynophoto}

\begin{IEEEbiographynophoto}{Giuseppe Durisi}
    (Senior Member, IEEE) received the Laurea (summa cum laude) and Ph.D. degrees from the Politecnico di Torino, Italy, in 2001 and 2006, respectively.
    From 2002 to 2006, he was with the Istituto Superiore Mario Boella, Turin, Italy. From 2006 to 2010, he was a Post-Doctoral Researcher at ETH Zurich, Zürich, Switzerland. In 2010, he joined the Chalmers University of Technology, Gothenburg, Sweden, where he is currently a Professor with
    the Communication Systems Group. His research interests are in the areas of communication and information theory and machine learning. He is the recipient of the 2013 IEEE ComSoc Best Young Researcher Award for the Europe, Middle East, and Africa region, and is coauthor of a paper that won the Student Paper Award at the 2012 International Symposium on Information Theory, and of a paper that won the 2013 IEEE Sweden VT-COM-IT Joint Chapter Best Student Conference Paper Award. From 2011 to 2014, he served as a publications editor for the IEEE TRANSACTIONS ON INFORMATION THEORY. From 2015 to 2021, he served as associate editor for the  IEEE TRANSACTIONS ON COMMUNICATIONS.
\end{IEEEbiographynophoto}

\begin{IEEEbiographynophoto}{Alexandre Graell i Amat}
    (Senior Member, IEEE) received the M.Sc. and Ph.D. degrees in electrical engineering from the Politecnico di Torino, Turin,
    Italy, in 2000 and 2004, respectively, and the M.Sc. degree in telecommunications engineering from the Universitat Politècnica de Catalunya, Barcelona, Catalonia, Spain, in 2001. From 2001 to 2002, he was a Visiting Scholar with the University of California at San Diego, La Jolla, CA,
    USA. From 2002 to 2003, he held a visiting appointment at Universitat Pompeu Fabra, Barcelona, and the Telecommunications Technological Center of Catalonia, Barcelona. From 2001 to 2004, he held a part-time appointment
    at STMicroelectronics Data Storage Division, Milan, Italy, as a Consultant on coding for magnetic recording channels. From 2004 to 2005, he was a Visiting Professor with Universitat Pompeu Fabra. From 2006 to 2010, he was with the Department of Electronics, IMT Atlantique (formerly ENST Bretagne), Brest, France. Since 2019, he has also been an Adjunct Research Scientist with Simula UiB, Bergen, Norway. He is currently a Professor with the Department of Electrical Engineering, Chalmers University of Technology,     Gothenburg, Sweden. His research interests are in the field of coding theory with application to 
    distributed learning and computing, storage, privacy and security, and communications. He received the Marie Skłodowska-Curie Fellowship from the European Commission and the Juan de la Cierva Fellowship from the Spanish Ministry of Education and Science. He received the IEEE Communications Society 2010 Europe, Middle East, and Africa Region Outstanding Young Researcher Award. He was the General Co-Chair of the 7th International Symposium on Turbo Codes and Iterative Information Processing, Sweden, in 2012, and the TPC Co-Chair of the 11th International Symposium on Topics in Coding, Canada, in 2021. He was an Associate Editor of the IEEE COMMUNICATIONS LETTERS from 2011 to 2013. He was an Associate Editor and the Editor-at-Large of the IEEE TRANSACTIONS     ON COMMUNICATIONS from 2011 to 2016 and 2017 to 2020, respectively. He is currently an Area Editor of the IEEE TRANSACTIONS ON  COMMUNICATIONS.
\end{IEEEbiographynophoto}

\end{document}